\documentclass{LMCS}

\usepackage{xspace}

%\pagestyle{plain}

%\usepackage{theorem}

% the following package is only useful to synchronize the editor (at
% Dietrich's laptop) with the previewer, so feel free to comment out
% \iftrue
%  \usepackage[active]{srcltx}
%  \def\dkinput{\Input}
% \else
%  \def\dkinput{\input}
% \fi

\usepackage{gastex}
\usepackage{amssymb,amsmath,enumerate,hyperref}
\usepackage{xspace}
\renewcommand{\bar}[1]{\overline{#1}}
\newcommand{\cA}{{\mathcal A}}
\newcommand{\cB}{{\mathcal B}}
\newcommand{\Cc}{{\mathcal C}}
\newcommand{\Ch}{{\mathrm{Ch}}}

\newcommand{\cofinal}{{\mathrm{cofin}}}
\newcommand{\Col}{{\mathrm{Col}}}
\newcommand{\cP}{{\mathcal P}}
\newcommand{\cM}{{\mathcal M}}
\newcommand{\cH}{{\mathcal H}}
\newcommand{\cS}{{\mathcal S}}
\newcommand{\DLTL}{\ensuremath{\mathrm{PDL}}\xspace}

\newcommand{\iDLTL}{\ensuremath{\mathrm{iPDL}}\xspace}
\newcommand{\ind}{{\mathrm{ind}}}
\newcommand{\Inf}{{\mathrm{Inf}}}
\newcommand{\msg}{{\mathrm{msg}}}
\newcommand{\N}{{\mathbb N}}
\newcommand{\proc}{{\mathrm{proc}}}
\newcommand{\rA}{\mathrm{A}}
\newcommand{\rE}{\mathrm{E}}
\newcommand{\sub}{\mathrm{sub}}
\newcommand{\true}{{t\!t}}
\newcommand{\rein}{\mathrm{in}}
\newcommand{\raus}{\mathrm{out}}
\renewcommand{\phi}{\varphi}
\renewcommand{\uplus}{\mathrel{\mathaccent\cdot\cup}}

\newcommand{\cb}[1]{#1}

\def\doi{6 (3:16) 2010}
\lmcsheading%
{\doi}
{1--31}
{}
{}
{Jun.~\phantom05, 2009}
{Sep.~\phantom03, 2010}
{}   

\begin{document}

\title{Propositional Dynamic Logic for Message-Passing Systems}

\author[B.~Bollig]{Benedikt Bollig\rsuper a} %required
\address{{\lsuper a}LSV, ENS Cachan, CNRS, France} %required
\email{bollig@lsv.ens-cachan.fr} %optional
\thanks{{\lsuper a}Work was partly supported by the projects ANR-06-SETI-003
  DOTS and ARCUS {\^I}le de France-Inde.} %optional

\author[D.~Kuske]{Dietrich Kuske\rsuper b} %optional
\address{\cb{{\lsuper b}LaBRI Universit\'e de Bordeaux and CNRS, France}} %optional
\email{dietrich.kuske@tu-ilmenau.de} %optional
%\thanks{thanks 2, optional.} %optional

\author[I.~Meinecke]{Ingmar Meinecke\rsuper c} %optional
\address{{\lsuper c}Institut f\"ur Informatik, Universit\"at Leipzig, Germany} %optional
\email{meinecke@informatik.uni-leipzig.de} %optional
\thanks{{\lsuper c}Ingmar Meinecke was supported by the German Research Foundation (DFG)} %optional

\keywords{message sequence charts, communicating finite-state machines,
  propositional dynamic logic} \subjclass{F.3.1}

\begin{abstract}
  We examine a bidirectional propositional dynamic logic (PDL) for finite and
  infinite message sequence charts (MSCs) extending LTL and TLC$^{-}$. By this
  kind of multi-modal logic we can express properties both in the entire
  future and in the past of an event. Path expressions strengthen the
  classical until operator of temporal logic. For every formula defining an
  MSC language, we construct a communicating finite-state machine (CFM)
  accepting the same language. The CFM obtained has size exponential in the
  size of the formula. This synthesis problem is solved in full generality,
  i.e., also for MSCs with unbounded channels. The model checking problem for
  CFMs and HMSCs turns out to be in PSPACE for existentially bounded MSCs. 
  Finally, we show that, for PDL with intersection, the semantics of a formula
  cannot be captured by a CFM anymore. 
\end{abstract}

\maketitle

\section{Introduction}

To make a system accessible to formal analysis and verification techniques, we
require it to be modeled mathematically. In this regard, automata-based models
have been widely used to describe the behavior of a system under
consideration. A natural model for finite processes that exchange messages via
FIFO-channels are communicating finite-state machines (CFMs) \cite{BraZ83}. In
a CFM, each process is modeled as a finite automaton that performs send and
receive actions and, in doing so, exchanges messages with other processes via
order-preserving communication channels. One single run of a CFM can be
described by a message sequence chart (MSC). MSCs are an important common
notation in telecommunication and are defined by an ITU
standard~\cite{mscitu96}. An MSC has both a formal definition and a
comprehensible visualization. 

Once we have an automata model $\cA$ of a system defining a set $L(\cA)$ of
possible executions, the next task might be to check if it satisfies a
requirements specification $\phi$, which represents a set $L(\phi)$ of
(desired) behaviors. Verification now amounts to the \emph{model-checking
  question}: do all possible behaviors of $\cA$ satisfy $\phi$, i.e., do we
have $L(\cA) \subseteq L(\phi)$? Many concrete instances of that problem have
been considered in the literature \cite{ClaGP00}. The original, and most
popular, ones are finite automata, Kripke structures, or B{\"u}chi automata as
system model, and temporal logics such as LTL \cite{Pnueli77} and CTL
\cite{ClarkeE81} as specification language. It is well-known that, for all
these choices, the corresponding model-checking problem is decidable. 

When we move to the setting of CFMs, which, due to a priori unbounded
channels, induce infinite-state systems, model checking becomes undecidable. 
Meenakshi and Ramanujam~\cite{MenR04} showed undecidability even for very
restrictive temporal logics (their results transfer easily from Lamport
diagrams to MSCs). One solution is to put a bound on the channel capacity. In
other words, the domain of behaviors is restricted to \emph{existentially}
$B$\emph{-bounded} MSCs, which can be executed without exceeding a fixed
channel bound~$B$. In~\cite{Pel00,MadM01,GenMSZ02,GenKM06}, the model-checking
problem was indeed tackled successfully for several logics by using this
restriction and following the automata-theoretic approach: (1) a formula
$\phi$ from a temporal logic or monadic second-order logic is translated into
a machine model $\cA_{\phi,B}$ that recognizes those models of $\phi$ that are
existentially $B$-bounded; (2) it is checked whether every existentially
$B$-bounded behavior of the system model is contained in the language of
$\cA_{\phi,B}$. 

\cb{ But on the other hand, we may apply temporal logic in the early stages of
  system development and start with specifying formulas to exemplify the
  intended interaction of the system to be. If so, we would like to synthesize
  a system model from a formula that captures precisely those behaviors that
  satisfy the formula. In other words, we ask whether a temporal-logic formula
  is realizable, i.e., whether the derived system is consistent and shows any
  reasonable behavior at all. Once a system is synthesized directly from its
  specification, it can be assumed to be correct a priori, provided the
  translation preserves the semantics of the specification.

  Though the assumption of bounded channels leads to the decidability of the
  model checking problem, it does not seem natural to restrict the channel
  size of the desired system in advance, especially when one is interested in
  the synthesis of a system from a specification. Despite the complexity of
  MSCs, we will provide in this paper a linear-time temporal logic for
  message-passing systems and solve its realizability problem in its full
  generality, i.e., under the assumption of a priori unbounded channels.

  Results from \cite{BolL06,BolK06} suggest to use an existential fragment of
  monadic second-order logic (EMSO) as a specification language. A formula
  from that fragment can be translated into a CFM that \emph{precisely}
  recognizes the models of the formula. This result holds without channel
  restriction. In this paper, we basically follow the approach from
  \cite{BolL06,BolK06}, but we propose a new logic: propositional dynamic
  logic (PDL) for MSCs. Our logic will prove useful for verification, as it is
  closed under negation and allows us to express interesting properties in an
  easy and intuitive manner. Like EMSO, but unlike full monadic second-order
  logic, every PDL formula $\phi$ can be effectively translated into a CFM
  $\cA_\varphi$ whose language is the set of models of $\phi$. This synthesis
  step is independent of any channel bound $B$ and would not become simpler if
  we took some $B$ into account. The size of the resulting CFM is exponential
  in the size of $\varphi$ and in the number of processes. Note that, by
  \cite{BolL06,BolK06}, EMSO is expressively equivalent to CFMs. Moreover, the
  set of CFM languages is not closed under complementation. As, on the other
  hand, \DLTL does not impose any restriction on the use of negation, we
  obtain that \DLTL is a proper fragment of EMSO although this is not obvious.

The model checking problem of CFMs againgst \DLTL formulas can be decided
in polynomial space for existentially $B$-bounded MSCs, following a standard procedure and using the translation of $\varphi$ into a CFM $\cA_\varphi$. Our PSPACE algorithm meets the lower bound that is imposed by
the complexity of LTL model checking for finite-state systems. 
We also show PSPACE completeness of the model checking problem of high-level MSCs (HMSCs) against \DLTL formulas (where the bound~$B$ is given implicitly by the HMSC). HMSCs are more abstract and
restrictive than CFMs, but can likewise be used as a model of a system. 
}

The final technical section considers an enriched logic: \iDLTL (\DLTL with
intersection). This extension seems natural to strengthen the expressive power
of the formulas. But adapting a proof technique from colored grids, we show
that \iDLTL is too strong for CFMs, i.e., there is an \iDLTL formula $\varphi$
such that no CFM accepts precisely the models of $\varphi$. 

\paragraph{\bf Related~Work.}~ For MSCs, there exist a few attempts to define
suitable temporal logics. Meenakshi and Ramanujam obtained exponential-time
decision procedures for several temporal logics over Lamport diagrams (which
are similar to MSCs)~\cite{MeeR00,MenR04}. Peled~\cite{Pel00} considered the
fragment $\mathrm{TLC^-}$ of the temporal logic $\mathrm{TLC}$ that was
introduced in~\cite{AlurPP95}. Like their logics, our logic is interpreted
directly over MSCs, not over linearizations; it combines elements from
\cite{MenR04} (global next operator, past operators) and \cite{Pel00} (global
next operator, existential interpretation of the until-operator). In
particular, however, it is inspired by \emph{dynamic} $\mathrm{LTL}$ as
introduced by Henriksen and Thiagarajan first for words~\cite{HenT99}. There,
standard $\mathrm{LTL}$ is extended by indexing the until operator with a
regular expression to make it more expressive. The same authors applied
dynamic $\mathrm{LTL}$ also to Mazurkiewicz traces but reasoned only about the
future of an event in the same process~\cite{HenT97}. In contrast, we might
argue about the whole future of an event rather than about one single process. 
Moreover, we provide past operators to judge about events that have already
been executed. We call our logic \DLTL because it is essentially the original
propositional dynamic logic as first defined by Fischer and
Ladner~\cite{FisL79} but here in the framework of MSCs. 
% Moreover, we would like to stress the possibility of negation for \DLTL what
% is a great advantage compared to existential monadic second-order logic. 
Although \DLTL can be seen as an extension of Peled's TLC$^-$, our decision
procedure is rather different. Instead of translating a \DLTL formula
$\varphi$ into a CFM directly, we use an inductive method inspired by
\cite{GasK03,GasK07,GasK10}. As $\mathrm{TLC^-}$ is a fragment of \DLTL, we actually
generalize the model checking result from \cite{Pel00}. 

\paragraph{\bf Outline.}~ In Section~\ref{sec:defs}, we define message
sequence charts, the logic \DLTL, and CFMs. We continue, in
Section~\ref{sec:cfmconstruct}, with several useful constructions for CFMs. 
%In particular, the color language used later on for the translation of
%forward-path formulas is shown to be acceptable by a CFM. 
Sections~\ref{sec:locform} and \ref{sec:globform} deal with the translation of
\DLTL formulas into CFMs. The model checking problem is tackled in
Section~\ref{sec:modelcheck} before we conclude, in Section~\ref{sec:ipdl},
with the result that \DLTL with intersection (\iDLTL) cannot be implemented in
terms of CFMs. 

\medskip

A preliminary version of this paper appeared as \cite{BKM-fsttcs07}.

% ------------------------------------

% A typical subformula in \DLTL is $\gamma=\left<\pi\right>\true$
% expressing that there is a path starting in the current vertex that
% corresponds to the path description $\pi$ (a regular expression). The
% construction of a CFM for such a forward-path formula turns out to be
% the most difficult part. The basic idea is to start, in the current
% node~$v$, a finite automaton $\cA$ that accepts the language $L_\pi$
% of the regular expression $\pi$ and to ensure that $\cA$ will
% eventually reach an accepting state in some event~$v'$. For sequential
% systems, this problem can be solved by two verification
% phases~\cite{HenT99}: divide the infinite word nondeterministically
% into infinitely many intervals and make sure that any claim for
% $\gamma$ to hold in some position~$v$ is certified by some
% position~$v'$ that belongs to the current or the following
% interval. In adapting this idea to our setting, a severe problem
% arises. We do not deal with just one sequential process but with
% several ones communicating with each other via channels, i.e., the
% path described by $\pi$ can change process arbitrary many times. 
% Hence the nondeterministic division into infinitely many intervals has
% to be done in such a way that every path in the MSC, even those
% switching processes, the verification phase changes infinitely
% often. This is accomplished in Section~\ref{sec:color-language}. 

\section{Definitions}\label{sec:defs}
The communication framework used in our paper is based on sequential processes
that exchange  messages asynchronously over point-to-point, error-free FIFO
channels. Let $\cP$ be a finite set of process identities which we fix
throughout this paper. Furthermore, let $\Ch=\{(p,q)\in\cP^2\mid p\neq q\}$
denote the set of \emph{channels}. Processes act by either sending a message,
that is denoted by $p!q$ meaning that process $p$ sends to process $q$, or by
receiving a message, that is denoted by $p?q$, meaning that process $p$
receives from process $q$. For any process $p\in\cP$, we define a local
alphabet (set of event types on $p$) $\Sigma_p=\{p!q,p?q\mid
q\in\cP\setminus\{p\}\}$, and we set $\Sigma=\bigcup_{p\in\cP}\Sigma_p$. 

\subsection{Message sequence charts}

A message sequence chart depicts processes as vertical lines, which are
interpreted as top-down time axes. Moreover, an arrow from one line to a
second corresponds to the communication events of sending and receiving a
message. Formally, message sequence charts are special labeled partial orders.
To define them, we need the following definitions: A \emph{$\Sigma$-labeled
  partial order} is a triple $M=(V,\le,\lambda)$ where $(V,\le)$ is a
partially ordered set and $\lambda:V\to\Sigma$ is a mapping. For $v\in V$ with
$\lambda(v)=p\theta q$ where $\theta\in\{!,?\}$, let $P(v)=p$ denote the
process that $v$ is located at. We set $V_p = P^{-1}(p)$. We define two binary
relations $\proc$ and $\msg$ on~$V$:
\begin{enumerate}[$\bullet$]
\item $(v,v')\in\proc$ iff $P(v)=P(v')$, $v<v'$, and, for any $u\in V$ with
  $P(v)=P(u)$ and $v\le u<v'$, we have $v=u$. The idea is that
  $(v,v')\in\proc$ whenever $v$ and $v'$ are two consecutive events
  of the same process. 
\item $(v,v')\in\msg$ iff there is a channel $(p,q)$ with
  $\lambda(v)=p!q$, $\lambda(v')=q?p$, and
  \[
  |\{u \mid \lambda(u)=p!q,~ u \le v\}|= |\{u \mid \lambda(u)=q?p,~ u
  \le v'\}|\ . 
  \]
  Here, the idea is that $v$ is a send event and $v'$ is the matching
  receive event. Since we model reliable FIFO-channels, this means
  that, for some $i$ and some channel $(p,q)$, $v$ is the $i^{th}$
  send and $v'$ the $i^{th}$ receive event on channel $(p,q)$. 
\end{enumerate}

\begin{defi}
  A \emph{message sequence chart} or \emph{MSC} for short is a
  $\Sigma$-labeled partial order $(V,\le,\lambda)$ such that
  \begin{enumerate}[$\bullet$]
  \item $\mathord{\le}=(\proc\cup\msg)^*$,
  \item $\{u\in V\mid u\le v\}$ is finite for any $v\in V$,
  \item $V_p$ is linearly ordered for any $p\in\cP$,
    and
  \item $|\lambda^{-1}(p!q)|=|\lambda^{-1}(q?p)|$ for any
    $(p,q)\in\Ch$. 
  \end{enumerate}
We refer to the elements of $V$ as \emph{events} or \emph{nodes}. 
\end{defi}

If $(V,\le,\lambda)$ is an MSC, then $\proc$ and $\msg$ are even
 injective partial functions, so $v'=\proc(v)$ as well as
$v=\proc^{-1}(v')$ are equivalent notions for $(v,v')\in\proc$;
$\msg(v)$ and $\msg^{-1}(v)$ are to be understood similarly. 

\cb{
An example MSC with three processes is pictured as a diagram in Figure~\ref{fig:exampleCFM}(b) on page~\pageref{fig:exampleCFM}. The processes are visualized as vertical lines going downwards and messages as horizontal directed edges between process lines.  
}
\subsection{Propositional dynamic logic}

\emph{Path expressions} $\pi$ and \emph{local formulas $\alpha$} are defined
by simultaneous induction. This induction is described by the following rules
\[
\begin{array}{rl}
  \pi ::= & \proc \mid \msg \mid \{\alpha\} \mid \pi;\pi 
  \mid \pi+\pi \mid \pi^*\\
  \alpha ::= & \true \mid \sigma \mid \alpha\lor\alpha \mid \neg\alpha
  \mid \left<\pi\right>\alpha \mid \left<\pi\right>^{-1}\alpha
\end{array}
\] where $\sigma$ ranges over the alphabet $\Sigma$. 

Local formulas express properties of single nodes in MSCs. To define the
semantics of local formulas, let therefore $M=(V,\le,\lambda)$ be an MSC and
$v$ a node from~$M$. Then we define
\begin{align*}
  M,v\models\sigma &\iff \lambda(v)=\sigma\text{~~ for }\sigma\in\Sigma\\
  M,v\models\alpha_1\lor\alpha_2
  &\iff M,v\models\alpha_1\text{ or }M,v\models\alpha_2\\
  M,v\models\neg\alpha
  &\iff M,v\not\models\alpha\\
  \intertext{The idea of forward-path modalities $\left<\pi\right>\alpha$
  indexed by a path expression $\pi$ is to perform a program $\pi$ and then to
  check whether $\alpha$ is satisfied. Thereby, $\pi$ is a rational expression
  over $\proc$ and $\msg$ describing paths in a MSC but allows also for tests
   $\{\alpha\}$ in following the paths defined by $\pi$. Formally, the semantics of
  \emph{forward}-path formulas $\left<\pi\right>\alpha$ is given by}
  M,v\models\left<\proc\right>\alpha &\iff \text{there exists }v'\in
  V\text{ with }(v,v')\in\proc
  \text{ and } M,v'\models\alpha\\
  M,v\models\left<\msg\right>\alpha &\iff \text{there exists }v'\in
  V\text{ with }(v,v')\in\msg
  \text{ and } M,v'\models\alpha\\
  M,v\models\left<\{\alpha\}\right>\beta
  &\iff M,v\models\alpha\text{ and }M,v\models\beta\\
  M,v\models\left<\pi_1;\pi_2\right>\alpha
  &\iff M,v\models\left<\pi_1\right>\left<\pi_2\right>\alpha\\
  M,v\models\left<\pi_1+\pi_2\right>\alpha
  &\iff M,v\models\left<\pi_1\right>\alpha\lor\left<\pi_2\right>\alpha\\
  M,v\models\left<\pi^*\right>\alpha &\iff 
         \text{there exists $n\ge 0$
         with } M,v\models(\left<\pi\right>)^n\alpha\\
  \intertext{The semantics of \emph{backward}-path formulas 
    $\left<\pi\right>^{-1}\alpha$ is defined similarly:}
  M,v\models\left<\proc\right>^{-1}\alpha &\iff \text{there exists
  }v'\in V\text{ with }(v',v)\in\proc
  \text{ and } M,v'\models\alpha\\
  M,v\models\left<\msg\right>^{-1}\alpha &\iff \text{there exists
  }v'\in V\text{ with }(v',v)\in\msg
  \text{ and } M,v'\models\alpha\\
  M,v\models\left<\{\alpha\}\right>^{-1}\beta
  &\iff M,v\models\alpha\text{ and }M,v\models\beta\\
  M,v\models\left<\pi_1;\pi_2\right>^{-1}\alpha &\iff
  M,v\models\left<\pi_1\right>^{-1}\left<\pi_2\right>^{-1}\alpha\\
  M,v\models\left<\pi_1+\pi_2\right>^{-1}\alpha &\iff
  M,v\models\left<\pi_1\right>^{-1}\alpha
  \lor\left<\pi_2\right>^{-1}\alpha\\
  M,v\models\left<\pi^*\right>^{-1}\alpha &\iff
    \text{there exists $n\ge 0$ with }
     M,v\models(\left<\pi\right>^{-1})^n\alpha
\end{align*}

Semantically, a local formula of the form $\langle (\{\alpha\} ; (\proc +
\msg))^\ast \rangle \beta$ corresponds to the until construct $\alpha
\mathcal{U} \beta$ in Peled's TLC$^-$ \cite{Pel00}. In TLC$^-$, however, one cannot express
properties such as ``there is an even number of messages from $p$ to $q$'',
which is easily expressible in PDL. 

Global properties of an MSC are Boolean combinations of properties of
the form ``there exists a node satisfying the local
formula~$\alpha$''. These global properties are expressed by
\emph{global formulas~$\varphi$} whose syntax is given by
\[
\begin{array}{rl}
  \varphi ::= & \rE\alpha \mid \rA\alpha \mid \varphi\lor\varphi \mid \varphi\land\varphi
\end{array}
\]
where $\alpha$ ranges over the set of local formulas. The semantics is
defined by
\begin{align*}
  M\models\rE\alpha &\iff
  \text{ there exists a node $v$ with }M,v\models\alpha\\
  M\models\rA\alpha &\iff
  M,v\models\alpha\text{ for all nodes $v$ }\\
  M\models\varphi_1\lor\varphi_2
  &\iff M\models\varphi_1\text{ or }M\models\varphi_2\\
  M\models\varphi_1\land\varphi_2 &\iff M\models\varphi_1\text{ and
  }M\models\varphi_2
\end{align*}
Note that our syntax of global formulas does not allow explicit
negation. But since we allow existential and universal quantification
as well as disjunction and conjunction, the expressible properties are
closed under negation. 
\cb{
\begin{exa}\label{exam_formula}
 For $i\in\cP$, we put $P_i = \bigvee_{j\in\cP, j\neq i}  (i!j\vee i?j)$, i.e., $M,v\models P_i$ iff $P(v)=i$ for every MSC $M=(V,\le,\lambda)$ and $v\in V$. Now the global formula
  \begin{align*}
    \varphi_{i,j}^{=2}=\rA\bigl(P_i \longrightarrow (\left<\proc^*;\msg;\proc^*;\msg\right>P_j)\bigr)
  \end{align*}
states that process $j$ can always be reached from process $i$ with exactly two messages (using an intermediate process in between).
\end{exa}
}

\begin{defi}\label{D-subformula}
  The set of subformulas $\sub(\alpha)$ of a local formula $\alpha$ and the
  set of subformulas $\sub(\pi)$ of a path expression $\pi$ are defined by
  synchronous induction as follows:
  \begin{align*}
    \sub(\proc)=\sub(\msg)&=\emptyset\\
    \sub(\{\alpha\})&=\sub(\alpha)\\
    \sub(\pi_1;\pi_2)=\sub(\pi_1+\pi_2)&=\sub(\pi_1)\cup\sub(\pi_2)\\
    \sub(\pi^*)&=\sub(\pi)\\
    \intertext{ and } \sub(\sigma)&=\{\sigma\} \text{ for }\sigma\in\Sigma\\
    \sub(\neg\alpha)&=\{\neg\alpha\}\cup\sub(\alpha)\\
    \sub(\alpha\lor\beta)&=\{\alpha\lor\beta\}\cup\sub(\alpha)\cup\sub(\beta)\\
    \sub(\left<\pi\right>\alpha)&=\{\left<\pi\right>\alpha\}\cup\sub(\pi)\cup\sub(\alpha)\\
    \sub(\left<\pi\right>^{-1}\alpha)&=\{\left<\pi\right>^{-1}\alpha\}\cup\sub(\pi)\cup\sub(\alpha)
  \end{align*}
\end{defi}
Thus, in addition to the obvious definition, a subformula of a path expression
is any of the local formulas occurring in the path expression as well as any
subformula of these local formulas. In particular, contrary to what one might
expect, a rather long local formula like $\varphi =
\left<\proc;\{\sigma\};\proc;\{\sigma\};\proc;\{\sigma\};\proc;\{\sigma\}\right>\sigma$
has only two subformulas, namely $\varphi$ itself and $\sigma$. The number of
subformulas of $\alpha$ is bounded by the length of $\alpha$, but the length
of $\alpha$ cannot be bounded in terms of the number of subformulas. 

Note that a path expression $\pi$ is a regular expression over the following alphabet 
$\{\proc,\msg,\{\alpha_1\},\dots,\{\alpha_n\}\}$ for some local formulas
$\alpha_i$. The \emph{size $s(\pi)$} of $\pi$ is defined by
$s(\{\alpha\})=s(\proc)=s(\msg)=1$,
$s(\pi_1+\pi_2)=s(\pi_1;\pi_2)=s(\pi_1)+s(\pi_2)$ and $s(\pi^*)=s(\pi)$ (i.e.,
it is the number of occurrences of $\{\alpha\}$, $\msg$, and $\proc$ in the
regular expression $\pi$). Note that the size of the path expression
$\{\alpha\}$ is~$1$, independent from the concrete form of the local
formula~$\alpha$.

\subsection{Communicating finite-state machines}

One formalism to describe (asynchronous) communication
protocols are \emph{communicating finite-state machines} (CFM for
short)~\cite{BraZ83}. They form a basic model for distributed
algorithms based on asynchronous message passing between concurrent
processes. Thus, the basic actions performed are just sending and
receiving of messages (i.e., letters from $\Sigma$). 

\cb{
A CFM $\cA$ consists of a collection of finite automata $\cA_p$, one for each process $p\in\cP$. The automaton $\cA_p$ performs the actions of process $p$, i.e., the send events $p!q$ and the receive events $p?q$ for all $q\neq p$. Moreover, the single automata synchronize by control messages from some finite set $C$. Whenever $\cA_p$ sends a message to $\cA_q$, then $\cA_p$ and $\cA_q$ share some common control message $c\in C$. The final states of a CFM are defined globally and the local components of a final state have either to be repeated infinitely often by the process or the process terminates in such a local state. 
}

We extend the
alphabet $\Sigma$ for later purposes to $\Sigma\times\{0,1\}^n$ for some
$n\in\N$ -- the classical model is obtained by setting $n=0$ in the
below definition (in which case we write $\cA=(C,(\cA_p)_{p \in \cP},F)$.) 

\begin{defi}
  A \emph{communicating finite-state machine} (or, simply, \emph{CFM})
  is a structure $\cA=(C,n,(\cA_p)_{p \in \cP},F)$ with $n\in\N$ where
  \begin{enumerate}[$\bullet$]
  \item $C$ is a finite set of \emph{message contents} or
    \emph{control messages},
  \item $\cA_p=(S_p,\rightarrow_p,\iota_p)$ is a finite labeled
    transition system over the alphabet $\Sigma_p\times\{0,1\}^n\times
    C$ for any $p\in\cP$ (i.e., ${\rightarrow_p}\subseteq
    S_p\times(\Sigma_p\times\{0,1\}^n\times C)\times S_p$) with
    initial state $\iota_p\in S_p$,
  \item $F\subseteq\prod_{p\in\cP}S_p$ is a set of global final
    states. 
  \end{enumerate}
\end{defi}

Now let $\cA$ be a CFM as above, $M=(V,\le,\lambda)$ be an MSC, and
$c:V\to\{0,1\}^n$. A \emph{run of $\cA$ on $(M,c)$} is a pair $(\rho,\mu)$ of mappings
$\rho:V\to\bigcup_{p\in\cP}S_p$ and $\mu:V\to C$ such that, for any $v\in V$,
\begin{enumerate}[(1)]
\item $\mu(v)=\mu(\msg(v))$ if $\msg(v)$ is defined,
\item $(\rho(\proc^{-1}(v)),\lambda(v),c(v),\mu(v),\rho(v))\in
  {\to_{P(v)}}$ if $\proc^{-1}(v)$ is defined, and\\
  $(\iota_p,\lambda(v),c(v),\mu(v),\rho(v))\in {\to_{P(v)}}$
  otherwise. 
\end{enumerate}
In order to define when the run $(\rho,\mu)$ is accepting, we will use
B\"uchi-conditions on each process. For this, one is usually interested in the
set of states that appear infinitely often. But since, even in an infinite
MSC, some of the processes may execute only finitely many events, the set of
states appearing infinitely often is here generalized to the set of states
that appear \emph{cofinally}: Let $\cofinal_\rho(p)=\{s\in S_p\mid\forall v\in
V_p~\exists v'\in V_p:v\le v'\land \rho(v')=s\}$. Then the run $(\rho,\mu)$ is
\emph{accepting} if there is some $(s_p)_{p\in\cP}\in F$ such that
$s_p\in\cofinal_\rho(p)$ for all $p\in\cP$. The \emph{language} of $\cA$ is
the set $L(\cA)$ of all pairs $(M,c)$ that admit an accepting run.

\cb{

\newcommand{\AHLMSC}{1.8}
\newcommand{\AHlMSC}{1.5}
\newcommand{\AHaMSC}{30}

\begin{figure}
\begin{tabular}{ccc}
\mbox{
\begin{picture}(80,60)(-20,-5)
\unitlength=0.3em
 \gasset{Nframe=y,Nw=5,Nh=5,Nmr=8,ilength=4}
\gasset{AHangle=25,AHLength=1.3,AHlength=1.2}

 \node[Nmarks=i,iangle=90,ilength=3](s0)(-7,30){$s_0$}
 \node(s1)(-7,15){$s_1$}
 \node(s2)(-7,0){$s_2$}
 \node[Nmarks=r](s3)(-22,15){$s_3$}

 \node[Nmarks=ir,iangle=90,ilength=3](t0)(18,30){$t_0$}
 \node(t1)(18,15){$t_1$}

 \node[Nmarks=ir,iangle=90,ilength=3](q0)(40,30){$q_0$}

 \drawedge[curvedepth=-2,ELside=r](s0,s1){\small 1!2, $\mathrm{r}$~}
 \drawedge[ELside=l](s1,s2){\small 1?2, $\checkmark$}
 \drawedge[curvedepth=-2,ELside=r](s1,s0){\small 1?2, $\mathrm{x}$}
 \drawedge[ELside=l](s2,s3){\small 1!3, $\mathrm{c}$}
 \drawedge[ELside=l](s3,s1){\small 1!2, $\mathrm{r}$}

 \drawedge[curvedepth=-2,ELside=r](t0,t1){\small 2?1, $\mathrm{r}$}
 \drawedge[curvedepth=-2,ELside=r](t1,t0){\!\!\begin{tabular}{l}{\small 2!1,
       $\checkmark$}\vspace{-6ex}\\{\small 2!1,
       $\mathrm{x}$}\end{tabular}}

 \drawloop[loopdiam=5,loopangle=270](q0){\small 3?1, $\mathrm{c}$}

 \gasset{Nframe=n,Nadjust=w,Nh=0,Nmr=0}
 \node(A1)(-7,39){$\cA_{\small \mathrm{Client}}$}
 \node(A2)(16,39){$\cA_{\small \mathrm{Server}}$}
 \node(A3)(39,39){$\cA_{\small \mathrm{Interface}}$}
\end{picture}
}
&
%\hspace{.5em}
&
\mbox{
\begin{picture}(30,60)(-5,-15)
\unitlength=0.2em 

  \node[Nw=16,Nh=5,Nmr=0](P)(0,40){\tiny $\mathrm{Client} (1)$}
  \node[Nw=16,Nh=5,Nmr=0](Q)(20,40){\tiny $\mathrm{Server} (2)$}
  \node[Nw=20,Nh=5,Nmr=0](R)(40,40){\tiny $\mathrm{Interface} (3)$}

  \node[Nframe=n,Nh=3](PP)(0,-10){{$\vdots$}} 
  \node[Nframe=n,Nh=3](QQ)(20,-10){{$\vdots$}}
  \node[Nframe=n,Nh=3](RR)(40,-10){{$\vdots$}}

  \node[Nframe=n,Nw=12,Nh=5,Nmr=0](P)(0,40){}
  \node[Nframe=n,Nw=12,Nh=5,Nmr=0](Q)(20,40){}
  \node[Nframe=n,Nw=12,Nh=5,Nmr=0](R)(40,40){}

  \node[Nframe=n,Nh=3](PP)(0,-10){} 
  \node[Nframe=n,Nh=3](QQ)(20,-10){}
  \node[Nframe=n,Nh=3](RR)(40,-10){}

  \drawedge[AHangle=0](P,PP){}
  \drawedge[AHangle=0](Q,QQ){}
  \drawedge[AHangle=0](R,RR){}

  \gasset{Nfill=y,Nw=2,Nh=2,Nframe=y}
  \node(P1)(0,30){}
  \node(Q1)(20,30){}
  \drawedge[AHangle=\AHaMSC,AHLength=\AHLMSC,AHlength=\AHlMSC](P1,Q1){}

  \node(P1)(0,22){}
  \node(Q1)(20,22){}
  \drawedge[AHangle=\AHaMSC,AHLength=\AHLMSC,AHlength=\AHlMSC,ELside=r](Q1,P1){}

  \node(P1)(0,14){}
  \node(Q1)(20,14){}
  \drawedge[AHangle=\AHaMSC,AHLength=\AHLMSC,AHlength=\AHlMSC](P1,Q1){}

  \node(P1)(0,6){}
  \node(Q1)(20,6){}
  \drawedge[AHangle=\AHaMSC,AHLength=\AHLMSC,AHlength=\AHlMSC,ELside=r](Q1,P1){}

  \node(P1)(0,-2){}
  \node(Q1)(40,-2){}
  \drawedge[AHangle=\AHaMSC,AHLength=\AHLMSC,AHlength=\AHlMSC,ELside=l](P1,Q1){}

\node[Nframe=n,Nfill=n,Nw=0,Nh=0](P)(10,-6){{$\vdots$}}
\node[Nframe=n,Nfill=n,Nw=0,Nh=0](P)(30,-6){{$\vdots$}}
\end{picture}
}\\
(a) A CFM over $\{\mathrm{Client},\mathrm{Server},\mathrm{Interface}\}$. & & (b) An infinite MSC.
\end{tabular}
\caption{A CFM and an infinite MSC accepted by it.\label{fig:exampleCFM}}
\end{figure}
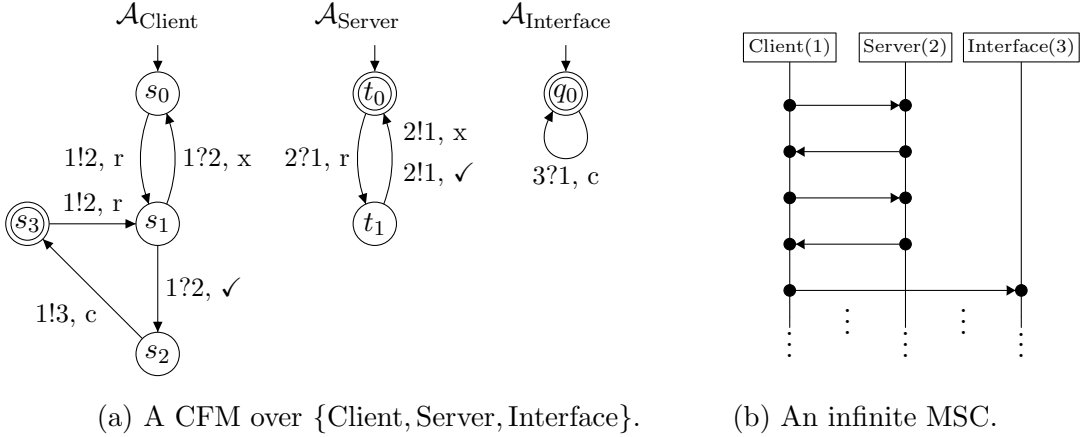

}

\cb{
  \begin{exa}
    Consider the CFM illustrated in Figure~\ref{fig:exampleCFM}(a). A client
    (process $1$) communicates with a server (process $2$) sending requests
    (message content r) to receive permission to send a message to the
    interface (process $3$). If the server refuses permission (message content
    x), the request is repeated. But if permission is given (message content
    $\checkmark$), then the client sends a message~$c$ to the interface. Now
    the client can start again to send requests to the server. Here, the only
    accepting state is $(s_3,t_0,q_0)$. Thus the client can either stop after
    sending a message to the interface (and all other processes also stop) or
    the client has to send infinitely many messages to the interface, i.e.,
    every request of the client is eventually followed by a communication with
    the interface.

    The MSC pictured in Figure~\ref{fig:exampleCFM}(b) is one possible
    behavior of the CFM. Moreover, any MSC $M$ accepted by this CFM satisfies
    the formula $\varphi_{2,3}^{=2}$ from Example~\ref{exam_formula}.
  \end{exa}
}

\section{Constructions of CFMs}\label{sec:cfmconstruct}

In this section, we present some particular CFMs and constructions of
CFMs. The purpose is twofold: the results will be used later, and the
reader shall become acquainted with the computational power of
CFMs. Hence, some readers might choose to skip the details of this section in a first
reading. 

\subsection{Intersection}
Here, we show that the intersection of languages accepted by CFMs can
again be accepted by a CFM. Since the acceptance by a CFM is defined
in terms of a B\"uchi-condition, we can adopt the flag
construction~\cite{Cho74} from the theory of word automata with
B\"uchi-acceptance condition (cf.~proof of Lemma~1.2
in~\cite{Tho90a}). The additional problem we face here is the
interplay between different processes. 

The basic idea of our construction is as follows (for two CFMs $\cA^1$
and $\cA^2$): each local process guesses an accepting global state
$f^1=(f^1_p)_{p\in\cP}$ of $\cA^1$ and $f^2=(f^2_p)_{p\in\cP}$ of
$\cA^2$. Then, locally, process $p$ simulates both CFMs $\cA^1$ and
$\cA^2$ and checks that $f^1_p$ and $f^2_p$ are visited infinitely
often (here, one relies on Choueka's flag construction). Hence, the
set of local states of process $p$ equals $F^1\times F^2\times
S^1_p\times S^2_p\times\{0,1,2\}$. A global state is accepting if all
the guesses locally made coincide and if the local processes accept
according to the flag construction. 

Recall that the set of accepting states $F^1$ is a set of tuples, its
maximal size is therefore $\prod_{p\in\cP}|S^1_p|$. Thus, the
intersection of two CFMs with $s$ local states per process can result
in a CFM with $s^{2|\cP|}\cdot s^2\cdot 3=s^{O(|\cP|)}$ many local
states per process. 

Now suppose that $F^1$ and $F^2$ are  direct products, i.e.,
$F^1=\prod_{p\in\cP}F^1_p$ for some sets $F^1_p\subseteq S^1_p$ and,
similarly, $F^2=\prod_{p\in\cP}F^2_p$ for some sets $F^2_p\subseteq
S^2_p$. Then, in the above construction, it is not necessary for the
local guesses to coincide -- which makes them superfluous (cf.\ Proof
of Lemma \ref{L-intersection-index=1} below). Thus, in this case, the
set of local states of process $p$ will just be $S^1_p\times
S^2_p\times\{0,1,2\}$, in particular, it will not be exponential in
the number of processes. 

To use this simplification of the construction, we introduce the
following notion. 

\begin{defi}\label{D-index}
  Let $F\subseteq \prod_{p\in\cP}S_p$. The \emph{index of $F$} is the
  least number $n$ such that there are sets $F_p^i\subseteq S_p$ for
  $p\in\cP$ and $1\le i\le n$ with $F=\bigcup_{1\le i\le
    n}\prod_{p\in\cP} F_p^i$. 

  The \emph{index of a CFM} is the index of its set of accepting
  states. 
\end{defi}

Clearly, the index of a CFM is bounded by $s^{|P|}$ where $s$ is the
maximal size of a set of local states $S_p$. To see that it can indeed
be quite large, let $\cP=S_p=[n]$ for all $p\in\cP$ (where we let
$[n]=\{1,\ldots,n\}$).  Furthermore, let $F$ be the set of all tuples
$(s_p)_{p\in\cP}$ such that $\{s_p\mid p\in\cP\}=[n]$, i.e., the set
of surjections from $[n]$ onto $[n]$.  Hence $F$ contains $n!$ many
elements. Any two of them differ in at least two positions. Hence the
index of $F$ equals its size and is exponential in $n$ and therefore
in $|P|$. Despite this exponential example, we will encounter only
small indices in our constructions.

\begin{lem}\label{L-intersection-index=1}
  For $1\le i\le m$, let
  $\cA^i=(C^i,n,(S_p^i,\to_p^i,\iota_p^i)_{p\in\cP},F^i)$ be CFMs of
  index $1$. Then there exists a CFM $\cA$ of index $1$ that accepts
  $(M,c)$ with $M$ an MSC and $c:V\to\{0,1\}^n$ iff it is accepted by
  $\cA^i$ for all $i\in[m]$. 

  The set of messages of~$\cA$ is $\prod_{i\in[m]}C^i$ and the set of local
  states of process $p$ is $\{0,1,\dots,m\}\times\prod_{i\in[m]}S_p^i$. 
\end{lem}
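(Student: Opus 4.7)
My plan is to simulate all $m$ given CFMs in parallel at each process, layered with a Choueka-style flag that enforces, in each projected local run, that the accepting set $F^i_p$ is met cofinally. Since each $\cA^i$ has index~$1$, acceptance can be enforced process-by-process with no global coordination between processes, so the resulting $F$ will again be a direct product and hence $\cA$ has index~$1$.

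\textbf{Construction.} I set the message alphabet of $\cA$ to $\prod_{i\in[m]} C^i$ so that one physical message carries the control data of all $m$ simulations simultaneously. On process~$p$, I use the state space $\{0,1,\ldots,m\}\times\prod_{i\in[m]} S^i_p$, with the second coordinate tracking the current simulated state of each $\cA^i_p$ and the first coordinate~$k$ a flag that cyclically scans through the $m$ local accepting conditions. A transition on $(\sigma,c,(c^i)_i)\in\Sigma_p\times\{0,1\}^n\times\prod_i C^i$ from $(k,(s^i)_i)$ to $(k',(s'^i)_i)$ is enabled iff each componentwise transition $(s^i,\sigma,c,c^i,s'^i)\in{\to_p^i}$ exists, and $k'$ is obtained from $k$ by the Choueka rule (advancing modulo $m+1$ exactly when the component currently pointed to by $k$ has just entered its own accepting set). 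The initial state is $(0,(\iota^i_p)_i)$; I take $F_p$ to consist of the local states marking a completed cycle together with a product tuple lying in $\prod_i F^i_p$, and set $F=\prod_{p\in\cP} F_p$.

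\textbf{Correctness.} Given accepting runs $(\rho^i,\mu^i)$ of each $\cA^i$ on $(M,c)$, I combine them into a run of $\cA$ by $\mu(v)=(\mu^i(v))_i$, by storing $(\rho^i(v))_i$ in the product component of $\rho(v)$, and by computing the flag deterministically from the history; the transition and message-matching conditions of $\cA$ then hold componentwise. For an infinite process $p$, each $F^i_p$ being visited cofinally by $\rho^i$ forces the flag to cycle through $\{0,\ldots,m\}$ cofinally, producing a cofinally visited state in $F_p$. For a finite process $p$ with maximal event $v_{\max}$, each $\rho^i(v_{\max})\in F^i_p$. Conversely, any accepting run of $\cA$ projects onto runs of each $\cA^i$ by taking the $i$-th coordinate of both state and message, and cofinal membership in $F_p$ forces each $F^i_p$ to be cofinally visited by the projection $\rho^i$, so each $\cA^i$ accepts.

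\textbf{Expected obstacle.} The delicate point is designing $F_p$ so that the flag-based B\"uchi condition, which is natural for cofinal acceptance on infinite processes, agrees with cofinal acceptance on finite processes, where $\cofinal_\rho(p)$ collapses to $\{\rho(v_{\max})\}$ and we need the product state at $v_{\max}$ to lie in $\prod_i F^i_p$. I expect to resolve this with a small relaxation of the flag dynamics, for instance permitting a shortcut transition that sets the flag to a distinguished ``cycle-complete'' value as soon as the current product tuple already lies in $\prod_i F^i_p$, so that the last event of a finite run can be labeled with the correct flag value and both acceptance regimes are captured uniformly by a single $F_p$, while still yielding the promised local state space $\{0,1,\ldots,m\}\times\prod_{i\in[m]} S^i_p$.
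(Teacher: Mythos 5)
Your construction coincides with the paper's: the same product state space $\{0,\dots,m\}\times\prod_{i}S_p^i$ and message alphabet $\prod_i C^i$, a Choueka flag advanced componentwise, and exactly the same resolution of the finite-process issue, namely a shortcut that sets the flag to the ``cycle-complete'' value $m$ whenever all component states lie in $\prod_i F_p^i$, with $F$ taken as a direct product over processes so that the index is~$1$. This is essentially the paper's proof, including the anticipated obstacle and its fix.
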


\begin{proof}
  Since $F^i$ has index~$1$, there exist sets $F_p^i\subseteq S_p^i$
  with $F^i=\prod_{p\in\cP}F_p^i$. 

  The idea of the proof is that $\cA$ will simulate all the machines $\cA^i$
  in parallel. In addition, it checks that, for each $p\in\cP$ and $i\in[m]$,
  some state from $F_p^i$ is assumed cofinally (i.e., infinitely often or, if
  process $p$ executes only finitely many events, at the last event from $p$). 
  Formally, we set $\iota_p=
  \begin{cases}
    (m,\iota_p^1,\dots,\iota_p^m) &\text{ if
    }(\iota_p^1,\dots,\iota_p^m)\in\prod_{i\in[m]}F_p^i\\
    (0,\iota_p^1,\dots,\iota_p^m) &\text{ otherwise}
  \end{cases}
  $ and\linebreak $F=\prod_{p\in\cP}\left(\{m\}\times\prod_{i\in[m]} S_p^i
  \right)$.  Furthermore,
  $(a,(s_i)_{i\in[m]})\xrightarrow{\sigma,c,(b_i)_{i\in[m]}}_p
  (a',(s'_i)_{i\in[m]})$ with $b_i\in C^i$ is a transition of $\cA$
  iff
  \begin{enumerate}[(1)]
  \item $s_i\xrightarrow{\sigma,c,b_i}{}^i_p s'_i$ is a transition of
    $\cA^i$ for all $i\in[m]$
  \item $a'=
    \begin{cases}
      m & \text{ if }s_i\in F_p^i\text{ for all }i\in[m]\\
      0 & \text{ if }a=m\text{ and }
      s_i\notin F_p^i\text{ for some }i\in[m]\\
      a+1 & \text{ if }a<m, s_{a+1}\in F_p^{a+1}\text{ and }
      s_i\notin F_p^i\text{ for some }i\in[m]\\
      a & \text{ otherwise.} 
    \end{cases}
    $
  \end{enumerate}
  Recall the classical flag construction for $\omega$-word automata. 
  There, the value of the counter~$a$ indicates that the composite
  machine waits for an accepting state of the simulated machine~$a+1$;
  a value $m$ indicates that all simulated machines went through some
  accepting states. Here, we do the same. But, in addition, if all
  component states of the composite machine are accepting, then we set
  the counter value directly to~$m$. This is useful when process~$p$
  executes only finitely many events. Then, at its final event~$v$,
  all the component machines have to be in some accepting state. For
  processes executing infinitely many events, this is of no
  importance. 
\end{proof}

\begin{prop}\label{P-intersection-index-large}
  For $i\in[m]$, let
  $\cA^i=(C^i,n,(S_p^i,\to_p^i,\iota_p^i)_{p\in\cP},F^i)$ be a CFM of
  index $\ell_i$. Then there exists a CFM $\cA$ of index $\prod_{1\le
    i\le m}\ell_i$ that accepts $(M,c)$ with $M$ an MSC and
  $c:V\to\{0,1\}^n$ iff it is accepted by $\cA^i$ for all $i\in[m]$. 

  The set of messages of~$\cA$ is $\prod_{i\in[m]}C^i$. Moreover, the set of
  local states of process~$p$ is $\{\iota_p\}\uplus(\{0,1,2,\dots,m\}\times
  \prod_{i\in[m]}S_p^i\times\prod_{i\in[m]}[\ell_i])$. 
\end{prop}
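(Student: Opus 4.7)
The plan is to reduce to Lemma~\ref{L-intersection-index=1} by decomposing each $F^i$ and distributing intersection over union. Since $F^i$ has index $\ell_i$, write $F^i = \bigcup_{j \in [\ell_i]} \prod_{p\in\cP} F^{i,j}_p$ for some $F^{i,j}_p \subseteq S^i_p$. Let $\cA^{i,j}$ denote the CFM obtained from $\cA^i$ by replacing its acceptance set with $\prod_p F^{i,j}_p$; this is a CFM of index $1$. Then $L(\cA^i) = \bigcup_j L(\cA^{i,j})$, so by distributivity
\[
\bigcap_{i \in [m]} L(\cA^i) \;=\; \bigcup_{\vec{\jmath} \in \prod_i [\ell_i]} \;\bigcap_{i \in [m]} L(\cA^{i,j_i}).
\]
For every fixed tuple $\vec{\jmath}=(j_1,\dots,j_m)$, Lemma~\ref{L-intersection-index=1} yields an index-$1$ CFM $\cB^{\vec{\jmath}}$ accepting $\bigcap_i L(\cA^{i,j_i})$, with messages $\prod_i C^i$ and local state space $\{0,\dots,m\}\times\prod_i S^i_p$ on each process~$p$.

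Next I would combine all the $\cB^{\vec{\jmath}}$ into a single CFM $\cA$ that on every process $p$ uses states $\{\iota_p\}\uplus(\{0,\dots,m\}\times\prod_i S^i_p\times\prod_i[\ell_i])$. Starting in the fresh state $\iota_p$, process $p$ nondeterministically guesses a tuple $\vec{\jmath}\in\prod_i[\ell_i]$ at its first event, records $\vec{\jmath}$ in the rightmost component, and from then on behaves exactly like $\cB^{\vec{\jmath}}$, never altering the guess. In particular, messages are carried over from $\cB^{\vec{\jmath}}$, so $\cA$ has message alphabet $\prod_i C^i$.

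The crux of the construction is to design the acceptance set so as to force all processes to commit to the \emph{same} guess, even when they do not communicate. I would take
\[
F \;=\; \bigcup_{\vec{\jmath}\in\prod_i[\ell_i]}\; \prod_{p\in\cP} F^{\vec{\jmath}}_p,
\]
where $F^{\vec{\jmath}}_p$ consists of all triples $(m,(s^i_p)_i,\vec{\jmath})$ (i.e., flag-value~$m$ with guess exactly $\vec{\jmath}$), together with $\iota_p$ provided $\iota^i_p\in F^{i,j_i}_p$ for every $i$ (to handle processes that execute no events). Because distinct guesses live in disjoint parts of the state space, membership in a single direct product in this union forces all processes either to stay in $\iota_p$ (satisfying the initial condition) or to have chosen the very same $\vec{\jmath}$. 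This is literally a presentation of $F$ as the union of $\prod_i \ell_i$ direct products, so $\cA$ has index $\prod_i \ell_i$ as claimed, and correctness reduces to the correctness of Lemma~\ref{L-intersection-index=1} on each branch $\cB^{\vec{\jmath}}$ together with the distributivity identity displayed above.

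The main obstacle is that an MSC need not be connected: some processes may execute no events or communicate with none of the others, so the global choice of a branch $\vec{\jmath}$ in the decomposition of the $F^i$ cannot be propagated through messages. Handling this disconnectedness is precisely the role of the acceptance condition above -- coordination between processes is performed entirely globally via~$F$, not locally via the $C^i$-messages.
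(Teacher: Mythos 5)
Your proposal is correct and follows essentially the same route as the paper: decompose each $F^i$ into $\ell_i$ index-$1$ pieces, distribute intersection over union, apply Lemma~\ref{L-intersection-index=1} to each of the $\prod_i\ell_i$ combinations, and take the disjoint union of the resulting CFMs with fresh local initial states, so that the branch choice is coordinated only through the global acceptance set (which is exactly what makes the index $\prod_i\ell_i$). Your explicit description of the acceptance set and of the guess recorded in the last state component is just an unfolding of what the paper calls ``the disjoint union of all these CFMs (together with new local initial states).''
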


\begin{proof}
  Since the index of $\cA^i$ is $\ell_i$, its language is the union of
  languages $L_1^i,\dots,L_{\ell_i}^i$ that can each be accepted by a CFM of
  index~$1$. The language in question is therefore given by
  $\bigcup_{j\in\prod_{i\in[m]}[\ell_i]}\bigcap_{i\in[m]} L_{j_i}^i$. By
  Lemma~\ref{L-intersection-index=1}, the intersection $\bigcap_{i\in[m]}
  L_{j_i}^i$ can be accepted by a CFM of index~$1$ with set of local states
  $\{0,1,2,\dots,m\}\times\prod_{i\in[m]} S_p^i\times\{j\}$. The disjoint
  union of all these CFMs (together with new local initial states) accepts the
  language in question; its set of local states equals
  $\{\iota_p\}\uplus(\{0,1,2,\dots,m\}\times\prod_{i\in[m]}
  S_p^i\times\prod_{i\in[m]}[\ell_i])$ and its index is $\prod_{1\le
    i\le m}\ell_i$ as claimed. 
\end{proof}

\subsection{Infinitely running processes}
For an MSC $M=(V,\le,\lambda)$, let $\Inf(M)\subseteq\Ch$ denote the
set of those channels $(p,q)$ that are used infinitely often, i.e.,
$\Inf(M)=\{(p,q)\in\Ch\mid\lambda^{-1}(p!q)\text{ is infinite}\}$. 
{}From a set $I\subseteq\Ch$, we want to construct a CFM of index~$1$
that checks whether $\Inf(M)=I$. 

\begin{lem}\label{L-finite-channels}
  Let $I\subseteq\Ch$. There exists a CFM~$\cA_1$ of index~$1$ with
  three local states per process and one message that accepts an MSC
  $M$ iff $\Inf(M)\subseteq I$. 
\end{lem}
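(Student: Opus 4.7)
The plan is to reduce $\Inf(M)\subseteq I$ to a per-process B\"uchi-style condition and realise it as an index-$1$ product of local automata. Set $\Sigma_p^I:=\{p!q\mid (p,q)\in I\}\cup\{p?q\mid (q,p)\in I\}$ and $\Sigma_p^{\bar I}:=\Sigma_p\setminus\Sigma_p^I$. Because an MSC has no orphan messages, $|\lambda^{-1}(p!q)|=|\lambda^{-1}(q?p)|$ for every channel $(p,q)$, so $\Inf(M)\subseteq I$ holds iff, for every $p\in\cP$, only finitely many events of $V_p$ carry a label in $\Sigma_p^{\bar I}$. This is a purely local property, which opens the door to an index-$1$ construction.

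Guided by this reduction, I would take $n=0$, a single message $C=\{\bullet\}$, and, for each $p$, a local automaton on three states $s_0,s_1,s_2$ with initial state $\iota_p=s_0$, local accepting set $F_p=\{s_1\}$, and transitions
\begin{align*}
s_0\xrightarrow{\sigma}s_1,\ s_0\xrightarrow{\sigma}s_2,\ s_2\xrightarrow{\sigma}s_1,\ s_2\xrightarrow{\sigma}s_2 &\quad\text{for every }\sigma\in\Sigma_p,\\
s_1\xrightarrow{\sigma}s_1 &\quad\text{for every }\sigma\in\Sigma_p^I,
\end{align*}
all labelled by the unique message $\bullet$. Setting $F=\prod_{p\in\cP}F_p$ then gives a CFM of index $1$ with three local states per process and a single message, exactly as the lemma requires.

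For correctness I would verify both directions process by process. \emph{Soundness}: from any accepting run $(\rho,\mu)$ and any $p$ with $V_p\neq\emptyset$, cofinality forces $\rho(v)=s_1$ for some $v\in V_p$; since the only outgoing transitions from $s_1$ are self-loops on $\Sigma_p^I$, every event of $V_p$ above $v$ has a label in $\Sigma_p^I$, so only finitely many $\Sigma_p^{\bar I}$-events occur on $p$. \emph{Completeness}: if $\Inf(M)\subseteq I$, then for each $p$ with $V_p\neq\emptyset$ I would pick $v_p\in V_p$ such that every strictly later event of $V_p$ lies in $\Sigma_p^I$ --- the $\proc$-maximum of $V_p$ if $V_p$ is finite, and any event past the last $\Sigma_p^{\bar I}$-event on $p$ if $V_p$ is infinite --- and then send every event of $V_p$ strictly below $v_p$ into $s_2$ (using $s_0\to s_2$ at the first such event and $s_2\to s_2$ afterwards), $v_p$ into $s_1$, and every subsequent event via the $s_1\to s_1$ loop, which is enabled by the choice of $v_p$. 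Together with the constant $\mu\equiv\bullet$ this is a valid accepting run; the case $V_p=\emptyset$ is immediate because $\cofinal_\rho(p)=S_p\ni s_1$.

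No deep obstacle is hidden here; the only point that needs care is the case distinction on whether $V_p$ is empty, finite non-empty, or infinite when unpacking the cofinal acceptance condition, and the observation that the three nondeterministic ``guess the commitment point'' transitions are exactly what is needed to make both the finite and the infinite case work simultaneously with a single set of accepting states.
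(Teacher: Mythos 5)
Your proof is correct and follows essentially the same route as the paper's: both reduce $\Inf(M)\subseteq I$ to the per-process condition that only finitely many events use a channel outside $I$, and realise it with a three-state, one-message, index-$1$ ``guess and verify'' local automaton whose acceptance is a product of local accepting sets. The only difference is cosmetic --- the paper guesses at each bad event whether it is the last one (local accepting set $\{0,2\}$), whereas you guess a commitment point after which only $\Sigma_p^I$-events are permitted (local accepting set $\{s_1\}$).
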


\begin{proof}
  The sets of local states are given by $S_p=\{0,1,2\}$ for any
  $p\in\cP$, the state $0$ is locally initial. The only control
  message is $1$. Then we set $a\xrightarrow{\sigma,1}_p b$ iff ($a=b$
  and $\sigma$ uses a channel from $I$) or ($a<b$ and $\sigma$ does
  not use a channel from $I$) or $a=b=1$. Then state $0$ indicates
  that no channel of $\Ch\setminus I$ has been used, $1$ indicates
  that some channel from $\Ch\setminus I$ has been used and that some
  channel will be used, and $2$ denotes that some channel from
  $\Ch\setminus I$ has been used but none will ever be used in the
  future. Hence, process $p$ uses the channels from $\Ch\setminus I$
  only finitely often iff it can visit $0$ or $2$ cofinally. Setting
  $F=\prod_{p\in\cP}\{0,2\}$ therefore finishes the construction of
  the desired CFM. 
\end{proof}

\begin{lem}\label{L-infinite-channels}
  Let $I\subseteq\Ch$. There exists a CFM~$\cB_1$ of index~$1$ with
  $4^{|\cP|}$ local states per process and one message that accepts an
  MSC $M$ iff $I\subseteq\Inf(M)$. 
\end{lem}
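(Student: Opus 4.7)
The plan is to decompose the condition $I\subseteq\Inf(M)$ into independent local checks. Concretely, $I\subseteq\Inf(M)$ is equivalent to requiring, for every process $p$, that $p$ performs the action $p!q$ infinitely often for each $q\in I^{\raus}_p:=\{q\mid (p,q)\in I\}$ and receives from every $q\in I^{\rein}_p:=\{q\mid (q,p)\in I\}$ infinitely often. (By the FIFO matching of sends and receives in an MSC the two checks are equivalent, but tracking both fits smoothly into the state bound.) Crucially, no cross-process information is needed, so the single control message suffices and the entire work is done inside the local automaton of each process.

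For each process $p$, I would use a subset-construction local automaton whose state is a pair $(X,Y)\in 2^{I^{\raus}_p}\times 2^{I^{\rein}_p}$ tracking which channels still need to be exercised in the current ``round'' on the send side and on the receive side. On a send event $p!q$ with $q\in I^{\raus}_p$, the automaton updates $X:=X\setminus\{q\}$; when $X$ becomes empty, it is reset to $I^{\raus}_p$ and a \emph{fresh-out} signal is raised. Receive events are handled symmetrically using $Y$. The two B\"uchi-style conditions per process---``fresh-out is raised cofinally'' and ``fresh-in is raised cofinally''---are merged into a single cofinality condition via a Choueka-style flag, so that the index stays at $1$. The total number of local states is bounded by $2^{|I^{\raus}_p|+|I^{\rein}_p|}$ times a small constant for the flag, which is at most $4^{|\cP|}$ as claimed.

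For correctness, on an infinite $V_p$ in which $p$ sends to every $q\in I^{\raus}_p$ and receives from every $q\in I^{\rein}_p$ infinitely often, both resets occur infinitely often, the merged flag cycles, and the designated accepting state in $F_p$ is cofinal. Conversely, if some required channel is used only finitely often, the corresponding reset eventually ceases, the flag stalls, and $F_p$ is not cofinal. The main obstacle I anticipate is reconciling the cofinality semantics with the need to reject a finite $V_p$ when $I_p\neq\emptyset$: there, cofinality for $p$ reduces to ``the last state lies in $F_p$'', so one must shape the automaton so that no valid finite run of such a $p$ can terminate in an accepting state, despite the accepting state being reachable in general. I would handle this by arranging the Choueka flag so that its accepting value is produced only via a transition structurally certifying that a full round has just been completed on both sides, aligning with the structural fact that $I\subseteq\Inf(M)$ together with $I_p\neq\emptyset$ forces $V_p$ to be infinite.
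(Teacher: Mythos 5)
There is a genuine gap, and it sits exactly where you locate your ``main obstacle'': rejecting MSCs in which a channel of $I$ is used a finite, nonzero number of times. Your proposed fix --- letting the accepting value of the flag be produced only by a transition that has just completed a full round --- does not work, because a finite local run can perfectly well end immediately after completing a round. Concretely, take $I=\{(p,q)\}$ and the MSC consisting of a single message from $p$ to $q$. At process $p$ the single event $p!q$ empties $X=I^{\raus}_p=\{q\}$, triggers the reset and raises the fresh-out signal, so the last (hence the unique cofinal) state of $p$ is accepting; symmetrically, the single event $q?p$ at $q$ completes its receive round, and every other process is empty. Your CFM accepts, yet $\Inf(M)=\emptyset\not\supseteq I$. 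The appeal to ``$I\subseteq\Inf(M)$ forces $V_p$ to be infinite'' is circular: that holds for the MSCs you want to accept, but the automaton must also reject the bad ones, and on those your structural certificate is satisfied.

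The deeper issue is that your premise ``no cross-process information is needed'' is exactly what fails. Any state visited cofinally on a good infinite local behaviour is also reached by a finite prefix of that behaviour, and that prefix is the $p$-projection of a legitimate finite MSC; so a per-process acceptance condition that treats the send side at $p$ and the receive side at $q$ symmetrically and independently cannot separate ``used infinitely often'' from ``used $N$ times'' for every finite $N$. The paper's construction escapes this by anti-correlating the two endpoints of each channel through the global (still product, hence index-$1$) acceptance set: each process merely counts every local action modulo $2$, and $F$ demands $g_p(p!q)=0$ together with $g_q(q?p)=1$ for every $(p,q)\in I$. If the channel is used $N<\infty$ times, both parities stabilise at the same value $N\bmod 2$, so the two requirements cannot hold cofinally at once; if it is used infinitely often, both parities alternate forever and the required opposite pair is cofinal. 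No rounds, resets or Choueka flag are needed. To repair your argument you would have to replace the symmetric ``round completed'' criterion by some such pair of mutually exclusive (for any common finite count) conditions at the two ends of each channel in $I$.
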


\begin{proof}
  For $p\in\cP$ let $S_p=\{0,1\}^{\Sigma_p}$ and set $C=\{1\}$. The
  locally initial state $\iota_p\in S_p$ maps all $\tau\in\Sigma_p$ to
  $0$. Then we set $g\xrightarrow{\sigma,1}g'$ for $g,g'\in S_p$ and
  $\sigma\in\Sigma_p$ iff $g'(\tau)=
  \begin{cases}
    g(\tau) & \text{ if }\tau\neq\sigma\\
    1-g(\tau) & \text{ otherwise}
  \end{cases}
  $ for all $\tau\in\Sigma_p$.  Thus, the local process $p$ counts
  modulo~$2$ the number of occurrences of any local action. The
  channel $(p,q)$ is used infinitely often iff the following two
  properties hold:
  \begin{enumerate}[$\bullet$]
  \item Process $p$ visits a state $g_p$ with $g_p(p!q)=0$ cofinally. 
  \item Process $q$ visits a state $g_q$ with $g_q(q?p)=1$ cofinally. 
  \end{enumerate}
  Therefore, a global state $(g_p)_{p\in\cP}$ is final (i.e., belongs
  to $F$) iff, for any $(p,q)\in I$, we have $g_p(p!q)=0$ and
  $g_q(q?p)=1$. 
\end{proof}

\begin{prop}\label{P-(in)finite-channels}
  Let $I\subseteq\Ch$. There exists a CFM~$\cB$ of index~$1$ with
  $3\cdot3\cdot4^{|\cP|}$ local states per process and one message
  that accepts an MSC $M$ iff $I=\Inf(M)$. 
\end{prop}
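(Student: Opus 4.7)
The plan is to observe that $I=\Inf(M)$ is equivalent to the conjunction of the two inclusions $\Inf(M)\subseteq I$ and $I\subseteq\Inf(M)$, and then obtain $\cB$ simply as the intersection of the two CFMs provided by the previous lemmas.

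More concretely, I would first invoke Lemma~\ref{L-finite-channels} to get a CFM $\cA_1$ of index~$1$ with $3$ local states per process and one message that accepts $M$ iff $\Inf(M)\subseteq I$, and Lemma~\ref{L-infinite-channels} to get a CFM $\cB_1$ of index~$1$ with $4^{|\cP|}$ local states per process and one message that accepts $M$ iff $I\subseteq\Inf(M)$. Both have $n=0$, so they operate on plain MSCs.

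Next, I would apply Lemma~\ref{L-intersection-index=1} with $m=2$ to the pair $\cA_1,\cB_1$. The lemma preserves index~$1$ and yields a CFM whose set of control messages is the product of those of $\cA_1$ and $\cB_1$, i.e., a singleton, and whose set of local states at process~$p$ is $\{0,1,2\}\times S_p^{\cA_1}\times S_p^{\cB_1}$, giving exactly $3\cdot 3\cdot 4^{|\cP|}$ local states per process. The resulting CFM $\cB$ accepts $M$ iff both $\cA_1$ and $\cB_1$ do, which by construction means $\Inf(M)\subseteq I$ and $I\subseteq\Inf(M)$, i.e., $I=\Inf(M)$.

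There is no real obstacle here; the proof is a bookkeeping composition of the two preceding lemmas with the index-preserving intersection. The only thing to verify carefully is that the three factors in the claimed bound $3\cdot 3\cdot 4^{|\cP|}$ match the counter introduced by the flag construction ($3$), the state space of $\cA_1$ ($3$), and the state space of $\cB_1$ ($4^{|\cP|}$, which is a convenient upper bound on the actual $2^{|\Sigma_p|}=4^{|\cP|-1}$), and that the product of a single message with a single message is again a single message.
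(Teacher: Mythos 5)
Your proof is correct and follows exactly the paper's own argument: the paper's proof of this proposition is simply ``Follows immediately from Lemmas~\ref{L-finite-channels}, \ref{L-infinite-channels}, and \ref{L-intersection-index=1},'' which is precisely the composition you carry out. Your accounting of the state bound (the factor $3$ from the counter $\{0,1,2\}$ of the flag construction with $m=2$, the factor $3$ from $\cA_1$, and the factor $4^{|\cP|}$ from $\cB_1$) and of the single control message is accurate.
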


\begin{proof}
  Follows immediately from Lemmas~\ref{L-finite-channels},
  \ref{L-infinite-channels}, and \ref{L-intersection-index=1}. 
\end{proof}

\subsection{The color language}
\label{sec:color-language}

In this section, we build a CFM that accepts some ``black/white
colored'' MSCs. \cb{The aim is that whenever a coloring is accepted, then any infinite
path in the MSC has infinitely many color changes (cf.\
Cor.~\ref{C-inf-many-color-changes}).} This language will be the
crucial ingredient in our handling of forward-path formulas of the
form~$\left<\pi\right>\alpha$ (cf.~Section~\ref{ssec:forward}). 

For the time being, we proceed as follows: first, we define a language
$\Col$ whose elements are colored
MSCs~$(M,c)$. Prop.~\ref{P-Col-regular} shows that this language can
be accepted by a CFM. Cor.~\ref{C-inf-many-color-changes} ensures that
any infinite path in $(M,c)\in\Col$ has infinitely many color
changes. We do not prove the converse (which is actually false), but
will see later that sufficiently many colorings with this property
belong to $\Col$ (Lemma~\ref{L-hhh}). 

Let $M$ be an MSC and $c:V\to\{0,1\}$. On $V$, we define an
equivalence relation~$\sim$ setting $u\sim w$ iff $P(u)=P(w)$ and, for
all $v\in V$ with ($u\le v\le w$ or $w\le v\le u$) and $P(u)=P(v)$, we
have $c(u)=c(v)=c(w)$ (i.e., a $\sim$-equivalence class is a maximal
monochromatic interval on a process line). 

Let $\Col$ be the set of all pairs $(M,c)$ with $c:V\to\{0,1\}$ such that the
following hold
\begin{enumerate}[(1)]
\item if $v$ is minimal on its process, then $c(v)=1$,
\item if $(v,v')\in\msg$ and $w'\le v'$ with $P(w')=P(v')$, then there
  exists $(u,u')\in\msg$ with $\lambda(u')=\lambda(v')$, $c(u)=c(u')$,
  and $u'\sim w'$ \cb{(implying $\lambda(u)=\lambda(v)$)},
\item any equivalence class of $\sim$ is finite. 
\end{enumerate}\medskip

\noindent Figure~\ref{fig:color} visualizes the second condition, on
the left, we have the precondition while the right diagram indicates
the conclusion. More precisely, in the precondition, we have a message
$(v,v')\in\msg$ from process $p$ to process $q$ and some node $w'$
preceding $v'$ on the same process. Recall that equivalence classes of
$\sim$ are intervals on process lines. The borders of the equivalence
class containing $w'$ are indicated. Then, by the conclusion, there is
a message $(u,u')\in\msg$ from $p$ to $q$ such that $u'$ belongs to
the indicated equivalence class of $\sim$ (that also contains $w'$)
and the colors of $u$ and $u'$ are the same (which is not indicated).

In general, there can be messages $(u,u')\in\msg$ such that the colors
of $u$ and $u'$ are different, i.e., $c(u)\neq c(u')$. The second
condition ensures that there are ``many'' messages where the send and
the receive event carry the same color. 

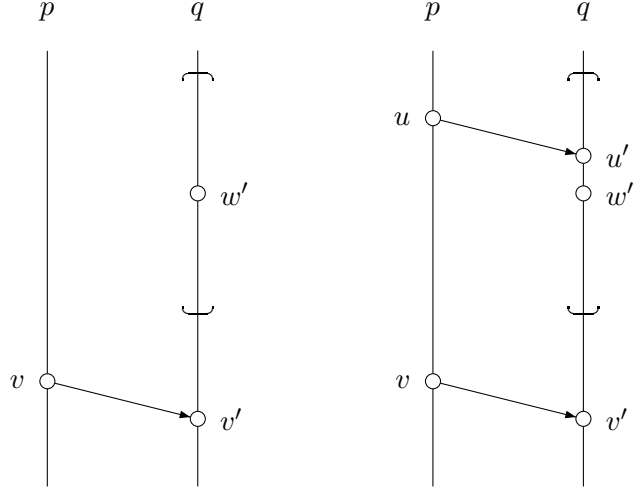
\begin{figure}
  \centering
  \begin{picture}(50,70)
    \gasset{Nframe=y,Nfill=n,Nw=2,Nh=2,Nmr=2,ExtNL=y,NLangle=0,NLdist=2}
    \node[NLangle=180](v)(10,20){$v$} \node(v')(30,15){$v'$}
    \drawedge(v,v'){} \node(w')(30,45){$w'$}
    \put(30,30){\oval(4,2)[b]} \put(30,60){\oval(4,2)[t]}

    \node[Nframe=n,NLangle=90](1)(10,65){$p$}
    \drawedge[AHnb=0](1,v){}
    \node[Nframe=n](1)(10,5){}
    \drawedge[AHnb=0](v,1){}

    \node[Nframe=n,NLangle=90](1)(30,65){$q$}
    \drawedge[AHnb=0](1,w'){}
    \drawedge[AHnb=0](w',v'){}
    \node[Nframe=n](1)(30,5){}
    \drawedge[AHnb=0](v',1){}
  \end{picture}
  \begin{picture}(50,70)
    \gasset{Nframe=y,Nfill=n,Nw=2,Nh=2,Nmr=2,ExtNL=y,NLangle=0,NLdist=2}
    \node[NLangle=180](v)(10,20){$v$} \node(v')(30,15){$v'$}
    \drawedge(v,v'){} \node(w')(30,45){$w'$}
    \put(30,30){\oval(4,2)[b]} \put(30,60){\oval(4,2)[t]}
    \node[NLangle=180](u)(10,55){$u$} \node(u')(30,50){$u'$}
    \drawedge(u,u'){}

    \node[Nframe=n,NLangle=90](1)(10,65){$p$}
    \drawedge[AHnb=0](1,u){}
    \drawedge[AHnb=0](u,v){}
    \node[Nframe=n](1)(10,5){}
    \drawedge[AHnb=0](v,1){}

    \node[Nframe=n,NLangle=90](1)(30,65){$q$}
    \drawedge[AHnb=0](1,u'){}
    \drawedge[AHnb=0](u',w'){}
    \drawedge[AHnb=0](w',v'){}
    \node[Nframe=n](1)(30,5){}
    \drawedge[AHnb=0](v',1){}
  \end{picture}
  \caption{The second condition.} 
  \label{fig:color}
\end{figure}

\begin{prop}\label{P-Col-regular}
  There exists a CFM~$\cA_\Col$ that accepts the set $\Col$.  The
  CFM~$\cA_\Col$ has two messages and its number of local states is
  in~$2^{O(|\cP|)}$. 
\end{prop}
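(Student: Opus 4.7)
I would build $\cA_\Col$ directly. The control message alphabet is $C=\{0,1\}$: each send event at~$u$ transmits its color $c(u)$, so the receiver at~$u'$ can check whether $c(u)=c(u')$---I call such a message \emph{good}. The local state of process~$q$ is either a distinguished initial state or a tuple $(c,R,G,\mathrm{cc},m)$, where $c$ is the color of the current $\sim$-class on~$q$; $R\subseteq\cP\setminus\{q\}$ is a nondeterministic guess of the processes from which~$q$ will still receive (in the current or some later class); $G\subseteq\cP\setminus\{q\}$ collects the processes~$p$ for which a good $q?p$-event has already occurred in the current class; $\mathrm{cc}\in\{0,1\}$ flags that the last transition closed a class; and $m\in\{n,t\}$ is a normal/terminal mode switch. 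This yields $2^{O(|\cP|)}$ local states per process and message alphabet of size~$2$, as claimed.

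The transitions would enforce: (i)~the first event on each process has color~$1$ (condition~$1$); (ii)~any $q?p$-transition requires $p\in R$, and updates $G:=G\cup\{p\}$ exactly when the message is good; (iii)~on a class change (incoming color $\neq c$) I would first verify $G\supseteq R$ for the ending class, then nondeterministically replace~$R$ by some $R'\subseteq R$, reset~$G$ according to the current event, and set $\mathrm{cc}=1$; otherwise $\mathrm{cc}=0$. Mode~$t$ may be chosen in place of~$n$ at any transition, but once in~$t$ no further transition is enabled. I would take $F_q$ to contain every state with $m=n,\,\mathrm{cc}=1$ together with every state with $m=t,\,G\supseteq R$, and set $F=\prod_p F_p$.

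Correctness follows by case analysis on each~$V_q$: (a)~if $V_q$ is finite, committing to~$t$ at the last event accepts iff $G\supseteq R$ there, which is exactly condition~$2$ for the final class; (b)~if $V_q$ is infinite with all $\sim$-classes finite (condition~$3$ satisfied), class changes occur cofinally, each verifies $G\supseteq R$ for its outgoing class, and the resulting $\mathrm{cc}=1$ states are cofinal in~$F_q$; (c)~if $V_q$ is infinite with an infinite final class, entering~$t$ is forbidden (the next event would have no successor) and $\mathrm{cc}$ stays~$0$ after the last class change, so no cofinal state is accepting. Conversely, for $(M,c)\in\Col$ the canonical guess $R_k:=\{p\mid K_p^q\ge k\}$---with $K_p^q$ the largest class-index on~$q$ containing a $q?p$-event (or $\infty$ if infinitely many do)---witnesses an accepting run: $R_k$ is monotone decreasing, every $q?p$-event in $C_k$ obeys $p\in R_k$, and condition~$2$ gives $G_k\supseteq R_k$ for every~$k$.

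I expect the main obstacle to be the mismatch between condition~$3$, which is naturally a B\"uchi-style requirement on infinite processes, and the single cofinality-based acceptance of a CFM, which on finite runs only inspects the last state. The nondeterministic terminal mode~$t$ is the device that reconciles the two: it lets a finite process synthesize a virtual class boundary at its last event and thereby reuse the class-change check of condition~$2$, while on an infinite process committing to~$t$ is self-defeating and hence forces genuine cofinal class changes.
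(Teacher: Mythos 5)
Your construction follows essentially the same route as the paper's: you cut each process line into its maximal monochromatic $\sim$-classes, guess a decreasing sequence of sets $R$ of processes from which receives are still to come, restrict receives to $R$, and certify each class by checking that the set $G$ of ``good'' receives covers $R$; the color is shipped as the control message so that goodness is locally checkable, and the count of $2^{O(|\cP|)}$ states and two messages comes out the same. (The paper packages each class as a small automaton $\cB^p_{B,i}$ with $\varepsilon$-transitions realizing the class changes, but the content is identical.) This part of your argument is sound: the guard ``$q?p$ requires $p\in R$'' together with monotonicity of $R$ guarantees that $R$ over-approximates the set of processes from which $q$ receives in the current or a later class, so $G\supseteq R$ at the end of a class is exactly condition~(2) for that class.

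The gap is in your acceptance set $F_q$: you admit every $n$-mode state with $\mathrm{cc}=1$. On a \emph{finite} process whose last event changes color, the unique cofinal state is precisely such a state; the class-change transition has verified $G\supseteq R$ for the \emph{ending} class, but the freshly opened singleton class is never checked, and the $t$-mode check that would catch it is only optional. Concretely, take $\cP=\{p,q\}$, let $p$ send two messages to $q$, both colored $1$, received by $q$ with colors $1$ and then $0$. Condition~(2) applied to $v'=w'=$ the second receive demands a good receive from $p$ inside its class $\{v'\}$, and there is none, so $(M,c)\notin\Col$; yet your CFM accepts, since $q$ can guess $R=\{p\}$, pass the check $G=\{p\}\supseteq R$ at the class change, and end in an $n$-mode state with $\mathrm{cc}=1$. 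The repair is to change what the flag records: set $\mathrm{cc}=1$ not at the class-change event but at the first event of a class at which $G\supseteq R$ becomes true (hence at most once per class). Then on an infinite process $\mathrm{cc}=1$ is cofinal iff there are infinitely many completed classes, which is conditions~(2) and~(3), while on a finite process an $n$-mode last state with $\mathrm{cc}=1$ genuinely certifies the last class, and your $t$-mode still handles the case where the last class completes strictly before its last event. The paper sidesteps the issue by phrasing the finite-run condition directly as ``the run ends in a final state of the current block automaton,'' so the class containing the last event is always the one inspected there.
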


\begin{proof}
  Since the language $\Col$ consists of pairs $(M,c)$, any process~$p$
  \cb{of a CFM with two messages} executes a sequence of events from
  $((\Sigma_p\times\{0,1\})\times\{0,1\})^\infty$ (with
  $\Gamma^\infty$ the set of finite and infinite words over~$\Gamma$)
  where $((\sigma,a),b)$ stands for a $(\sigma,a)$-labeled event that
  sends or receives~$b$. Our automaton $\cA_\Col$ will always send the
  current value of the mapping $c$, i.e., the set of control messages
  is $\{0,1\}$ and we will only execute events from
  \[
  \Gamma_p=\{((p!q,a),a)\mid q\in\cb{\cP\setminus\{p\}},a\in\{0,1\}\}\cup
  \{((p?q,a),b)\mid q\in\cb{\cP\setminus\{p\}},a,b\in\{0,1\}\}. 
  \]
  Having this in mind, consider for $p\in\cP$, $B\subseteq\cP$, and
  $i\in\{0,1\}$ the language $L^p_{B,i}\subseteq\Gamma_p^*$ with $w\in
  L^p_{B,i}$ iff
  \begin{enumerate}[$\bullet$]
  \item if $((p!q,a),a)$ occurs in $w$, then $a=i$,
  \item if $((p?q,a),b)$ occurs in $w$, then $a=i$ and $q\in B$,
  \item for all $q\in B$, the letter $((p?q,i),i)$ occurs in $w$. 
  \end{enumerate}
  Then $L^p_{B,i}$ is regular and can be accepted by a finite
  deterministic automaton $\cB^p_{B,i}$ with $2^{|B|}$ many states. We
  build the $p$-component $\cA_p$ of $\cA_\Col$ from the disjoint
  union of all these automata $\cB^p_{B,i}$ -- it therefore has
  \[
  \sum_{B\subseteq\cP}2\cdot2^{|B|}\le 2\cdot 4^{|\cP|}
  \]
  many states. More precisely, $\cA_p$ is obtained from this disjoint
  union by adding $\varepsilon$-transitions from any accepting state of
  $\cA^p_{B,i}$ to any initial state of $\cA^p_{C,j}$ iff $B\supseteq
  C$ and $i\neq j$. The initial states of $\cA_p$ are the initial
  states of $\cB^p_{B,1}$. A finite run is accepting if it ends in
  some final state of one of the automata $\cB^p_{B,i}$, an infinite
  run is accepting if it takes infinitely many
  $\varepsilon$-transitions. 

  Note that a word $((p\theta_n q_n,a_n),b_n)_{0\le
    n<N}\in\Gamma_p^\infty$ is accepted by $\cA_p$ iff
  \begin{enumerate}[$\bullet$]
  \item $a_0=1$
  \item if $\theta_n=\mathord{!}$, then $a_n=b_n$
  \item if $\theta_n=\mathord{?}$ and $m\le n$, then there exists $k\in\N$ with
    $p\theta_k q_k=p?q_n$, $a_k=b_k$, and, for all $\ell$ in between
    $m$ and $k$, we have $a_m=a_\ell=a_k$. 
  \end{enumerate}
  Hence the CFM consisting of these components accepts the
  language~$\Col$. 
\end{proof}

The \emph{index} $\ind(v)$ of a node \cb{$v\in V_p$} is the maximal number of
mutually non-equivalent nodes from $V_p$ below $v$. Note that
$c(v)=\ind(v)\bmod 2$ for all nodes $v$ if the pair $(M,c)$
satisfies~(1) in the definition of the language~$\Col$. 

\begin{lem}\label{L-msg-and-color}
  Let $(M,c)\in\Col$. Then, for any $(v,v')\in\msg$ with
  $\ind(v)<\ind(v')$, we have $c(v)\neq c(v')$. 
\end{lem}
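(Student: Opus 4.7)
The plan is to argue by contradiction: suppose $(v,v')\in\msg$ with $\ind(v) < \ind(v')$ and $c(v) = c(v')$. The critical tool is the observation stated just before the lemma, that under condition~(1), $c(u) = \ind(u) \bmod 2$ for every node $u$. So $c(v) = c(v')$ forces $\ind(v) \equiv \ind(v') \pmod{2}$, and together with $\ind(v) < \ind(v')$ this gives $\ind(v') \ge \ind(v) + 2$. Set $p = P(v)$, $q = P(v')$, $k_0 = \ind(v)$, $k_1 = \ind(v')$, so that $k_1 \ge k_0 + 2$.

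The main step is to build a chain of auxiliary messages using condition~(2) of the definition of $\Col$. Let $C_1 < C_2 < \cdots < C_{k_1}$ be the $\sim$-equivalence classes on process~$q$ lying at or below $v'$. For each $k \in \{1,\ldots,k_1-1\}$ I would pick any $w_k' \in C_k$ (automatically $w_k' \le v'$) and apply condition~(2) to the message $(v,v')$ and the node $w_k'$. This yields a message $(u_k, u_k') \in \msg$ with $u_k' \in C_k$, $\lambda(u_k') = q?p$, and $c(u_k) = c(u_k')$. Hence $u_k$ is a $p!q$-event on process $p$, and the parity relation forces $\ind(u_k) \equiv k \pmod{2}$.

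Now FIFO does the work. The events $u_1', u_2', \ldots, u_{k_1-1}', v'$ are all $q?p$-receptions on channel $(p,q)$, lying in pairwise distinct and strictly increasing $\sim$-classes on process~$q$, so their matching sends appear on process~$p$ in the same strict order: $u_1 < u_2 < \cdots < u_{k_1-1} < v$. Because $\ind$ is non-decreasing along a process line, $\ind(u_k) \le \ind(v) = k_0$ for every $k \le k_1 - 1$. A short induction using the parity constraint then shows $\ind(u_k) \ge k$: the base case is immediate since $\ind(u_1) \ge 1$, and in the step the inductive bound $\ind(u_k) \ge \ind(u_{k-1}) \ge k - 1$ together with $\ind(u_k) \not\equiv k-1 \pmod{2}$ excludes equality, forcing $\ind(u_k) \ge k$. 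Taking $k = k_1 - 1$ finally yields $k_0 \ge k_1 - 1$, contradicting $k_1 \ge k_0 + 2$.

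The only delicate point I foresee is translating the order on process~$q$ (via the $\sim$-classes) into the order on process~$p$ via FIFO so as to justify the chain $u_1 < u_2 < \cdots < u_{k_1-1} < v$ on process~$p$. Once this chain is in hand, the parity-and-monotonicity squeeze closes the proof mechanically.
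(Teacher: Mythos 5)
Your proof is correct. It uses exactly the same three ingredients as the paper's argument -- condition~(2) of $\Col$ to produce an auxiliary message landing in a prescribed $\sim$-class on process $q$, the FIFO property of $\msg$ to pull the order back to process $p$, and the parity identity $c(u)=\ind(u)\bmod 2$ -- but it organizes the descent differently. The paper takes a $\le$-minimal counterexample $(v,v')$, applies condition~(2) once at the class of index $\ind(v')-1$ to manufacture a single strictly smaller counterexample $(u,u')$ with $\ind(u)\le\ind(v)<\ind(u')$, and concludes by well-foundedness. You instead build the entire chain $u_1<\cdots<u_{k_1-1}<v$ in one pass and close with an upward induction ($\ind(u_k)\ge k$ via the parity exclusion of equality), getting the numerical contradiction $k_0\ge k_1-1$ directly. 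The paper's version is shorter and hides the chain inside the minimality argument; yours makes the chain explicit and avoids invoking a minimal counterexample at all (the finiteness of downward cones is used only through the fact that $\ind$ is well defined). The one point you flagged -- transporting the strict order of the receives $u_1'<\cdots<u_{k_1-1}'<v'$ to the sends on process $p$ -- is immediate from the counting definition of $\msg$, so there is no gap.
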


\begin{proof}
  Suppose there is $(v,v')\in\msg$ with $\ind(v)<\ind(v')$ but
  $c(v)=c(v')$. Since any element of~$M$ dominates a finite set, we
  can assume $v'$ to be minimal with this problem. If
  $\ind(v)+1=\ind(v')$, we are done since $c(v)=\ind(v)\bmod 2\neq
  (\ind(v)+1)\bmod 2= c(v')$. So let $\ind(v)+1<\ind(v')$. Since
  $(M,c)\in\Col$ and $\ind(v')-1>\ind(v)\ge1$, there exists
  $(u,u')\in\msg$ with $\lambda(u')=\lambda(v')$, $c(u)=c(u')$, and
  $\ind(u')=\ind(v')-1$. In particular, $u'<v'$ and therefore $u<v$. 
  But then $\ind(u)\le\ind(v)$. Now we have
  $\ind(u)\le\ind(v)<\ind(v')-1=\ind(u')$, i.e., $u'<v'$ is another
  counterexample to the statement of the lemma. But this contradicts
  the choice of $v'$. 
\end{proof}

\begin{cor}\label{C-inf-many-color-changes}
  Let $(M,c)\in\Col$ and let $(v_1,v_2,\dots)$ be some infinite path
  in~$M$. Then there exist infinitely many $i\in\N$ with $c(v_i)\neq
  c(v_{i+1})$. 
\end{cor}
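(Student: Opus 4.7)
The plan is to argue by contradiction: assume some infinite path $(v_1,v_2,\dots)$ has only finitely many color changes, so that there is an index $N$ with $c(v_i)=c(v_{i+1})$ for all $i\geq N$. Since $(v_i,v_{i+1})\in\proc\cup\msg$, the path is strictly increasing in $\leq$, hence all its nodes are distinct. I will exploit two observations: along $\proc$-edges, equal color forces the two endpoints into the same $\sim$-class (and thus the same index), while along $\msg$-edges, Lemma~\ref{L-msg-and-color} in contrapositive form says that equal color forces $\ind(v_i)\geq\ind(v_{i+1})$.

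More concretely, for a $\proc$-edge $(v_i,v_{i+1})$ after $N$: the two nodes are consecutive on the same process, so by the definition of $\sim$ (maximal monochromatic intervals), $c(v_i)=c(v_{i+1})$ is equivalent to $v_i\sim v_{i+1}$, and hence $\ind(v_i)=\ind(v_{i+1})$. For a $\msg$-edge after $N$, applying the contrapositive of Lemma~\ref{L-msg-and-color} yields $\ind(v_i)\geq\ind(v_{i+1})$. Combining both cases, the sequence $(\ind(v_i))_{i\geq N}$ is non-increasing in $\N$ and therefore eventually constant; pick $N'\geq N$ and $k\in\N$ with $\ind(v_i)=k$ for all $i\geq N'$.

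The contradiction comes from counting: the set $X_k=\{v\in V\mid \ind(v)=k\}$ meets each process line $V_p$ in at most one $\sim$-equivalence class (the $k$-th monochromatic interval on $p$, if it exists). By condition~(3) in the definition of $\Col$, every equivalence class of $\sim$ is finite, and $\cP$ is finite, so $X_k$ is finite. But the tail $(v_i)_{i\geq N'}$ consists of infinitely many pairwise distinct elements of $X_k$, which is impossible. Hence infinitely many color changes must occur. The main subtlety to get right is the equivalence, on a single $\proc$-step, between $c(v_i)=c(v_{i+1})$ and $v_i\sim v_{i+1}$; everything else then falls out of the lemma and the finiteness of $\sim$-classes.
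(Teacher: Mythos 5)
Your proof is correct and uses essentially the same ingredients as the paper's: the finiteness of $\ind^{-1}(k)$ (via condition~(3) of $\Col$), Lemma~\ref{L-msg-and-color} for $\msg$-edges, and the correspondence between color changes and index increments along $\proc$-edges. The only difference is presentational: the paper argues directly that $\ind(v_i)<\ind(v_{i+1})$ happens infinitely often and converts each such increase into a color change, whereas you run the contrapositive as a proof by contradiction (no color changes forces the index to be eventually constant), which is the same argument in mirror image.
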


\begin{proof}
  Since $\ind^{-1}(n)$ is finite for any $n\in\N$, there are
  infinitely many $i\in\N$ with $\ind(v_i)<\ind(v_{i+1})$. If
  $(v_i,v_{i+1})\in\proc$, then $\ind(v_{i+1})=\ind(v_i)+1$ and
  therefore $c(v_i)\neq c(v_{i+1})$. If, in the other case,
  $(v_i,v_{i+1})\in\msg$, then by Lemma~\ref{L-msg-and-color}, we get
  $c(v_i)\neq c(v_{i+1})$. 
\end{proof}

\section{Translation of local formulas}\label{sec:locform}
Let $\alpha$ be a local formula of \DLTL.  We will construct a
``small'' CFM that accepts a pair $(M,c)$ with $M$ an MSC and
$c:V\to\{0,1\}$ iff $c$ is the characteristic function of the set of
positions satisfying $\alpha$, i.e.,
\[
c(v)=
\begin{cases}
  1 & \text{ if }M,v\models\alpha\\
  0 & \text{ otherwise.} 
\end{cases}
\]
To obtain this CFM, we will first construct another CFM that accepts
$(M,(c_\beta)_{\beta\in\sub(\alpha)})$ iff, for all positions $v\in V$
and all subformulas $\beta$ of $\alpha$, we have $M,v\models\beta$ iff
$c_\beta(v)=1$. This CFM will consist of several CFMs running in
conjunction, one for each subformula. For instance, if
$\sigma\in\Sigma$ and $\delta=\beta\lor\gamma$ are subformulas of
$\alpha$, then we will have sub-CFMs that check whether, for any
position $v$, we have $c_\sigma(v)=1$ iff $\lambda(v)=\sigma$ and
$c_\delta(v)=c_\beta(v)\lor c_\gamma(v)$, respectively. We first define these
sub-CFMs for subformulas of the form $\sigma$, $\beta\lor\gamma$, and
$\neg\beta$. 

\begin{exa}\label{E-atomic-automata}
  For $\sigma\in\Sigma$, we define the CFM
  $\cA_\sigma=(\{m\},1,(\cA_p)_{p\in\cP},F)$ as follows: For $p\in\cP$, let
  $S_p=\{\iota_p\}$ and $(\iota_p,\tau,b,m,\iota_p)\in{\to_p}$ iff
  \begin{enumerate}[$\bullet$]
  \item $\tau=\sigma$ and $b=1$ or
  \item $\tau\neq\sigma$ and $b=0$. 
  \end{enumerate}
  Furthermore, $F=\{(\iota_p)_{p\in\cP}\}$. Then it is easily checked
  that $(M,c)$ is accepted by $\cA_\sigma$ iff
  \[
  \forall v\in V: \lambda(v)=\sigma\iff c(v)=1. 
  \]
\end{exa}

\begin{exa}\label{E-BC-of-automata}
  Next we define a CFM $\cA_\lor=(\{m\},3,(\cA_p)_{p\in\cP},F)$: For
  $p\in\cP$, let $S_p=\{\iota_p\}$ and
  $(\iota_p,\tau,(b_1,b_2,b_3),m,\iota_p)\in{\to_p}$ iff $b_3=b_1\lor
  b_2$.  Furthermore, $F=\{(\iota_p)_{p\in\cP}\}$. Then it is easily
  checked that $(M,c)$ is accepted by $\cA_\lor$ iff
  \[
  \forall v\in V: c_3(v)=c_1(v)\lor c_2(v). 
  \]
  The CFM $\cA_\neg$ is defined similarly. 
\end{exa}

\begin{exa}\label{E-E-and-A}
  Next we define a CFM $\cA_\rE=(\{m\},1,(\cA_p)_{p\in\cP},F)$: For $p\in\cP$,
  let $S_p=\{\iota_p,s_p\}$ and $\to_p$ contain precisely
  $(\iota_p,\tau,0,m,\iota_p)$, $(\iota_p,\tau,1,m,s_p)$, and
  $(s_p,\tau,b,m,s_p)$ for all $\tau\in\Sigma_p$ and $b\in\{0,1\}$.
  Furthermore, $F$ is the set of tuples $(f_p)_{p\in\cP}$ that contain at
  least one occurrence of~$s_p$. Hence the index of this CFM is the number of
  processes~$|\cP|$.

  Then it is easily checked that $(M,c)$ is accepted by $\cA_\rE$ iff
  there exists a node $v$ with $c(v)=1$. 

  The CFM $\cA_\rA=(\{m\},1,(\cA_p)_{p\in\cP},F)$ has again just one local
  state per process (and is therefore of index~$1$): For $p\in\cP$, let
  $S_p=\{\iota_p\}$, $\mathord{\to_p} = \{(\iota_p,\tau,1,m,\iota_p)\} \mid
  \tau\in\Sigma_p\}$, and $F=\{(\iota_p)_{p\in\cP}\}$.

  Then it is easily checked that $(M,c)$ admits a run and is therefore
  accepted by $\cA_\rA$ iff $c(v)=1$ for all nodes $v$. 
\end{exa}

\subsection{The backward-path automaton}

Let $\pi$ be a path expression, i.e., a regular expression over the alphabet
$\{\proc,\msg,\{\alpha_1\},\dots,\{\alpha_n\}\}$. Replacing $\{\alpha_i\}$ by
$i$, we obtain a regular expression over the alphabet
$\Gamma=\{\proc,\msg,1,2,\dots,n\}$. Let $L_\pi\subseteq\Gamma^*$ be the
language of this regular expression. 

A word over $\Gamma$ together with a node from an MSC describes a path
starting in that node that walks \emph{backwards}.  The letters
$\proc$ and $\msg$ denote the direction of the path, the letters $i$
denote requirements about the node currently visited (namely, that
$\alpha_i$ shall hold). This idea motivates the following definition:

\begin{defi}\label{def:Lpi-back}
  For an MSC $M$, functions $c_1,\dots,c_n:V\to\{0,1\}$, a node $v\in
  V$ and a word $W\in\Gamma^*$, we define inductively
  $(M,c_1,\dots,c_n),v\models^{-1} W$:
  \begin{align*}
    (M,c_1,\dots,c_n),v&\models^{-1} \varepsilon&\\
    (M,c_1,\dots,c_n),v&\models^{-1} \proc\, W \iff \text{ there
      is }v'=\proc^{-1}(v)
    \text{ with } (M,c_1,\dots,c_n),v'\models^{-1} W\\
    (M,c_1,\dots,c_n),v&\models^{-1} \msg\, W \iff \text{ there is
    }v'=\msg^{-1}(v)
    \text{ with } (M,c_1,\dots,c_n),v'\models^{-1} W\\
    (M,c_1,\dots,c_n),v&\models^{-1} i\, W
    \iff c_i(v)=1\text{ and }(M,c_1,\dots,c_n),v\models^{-1} W\\
  \end{align*}
\end{defi}

We easily verify that $M,v\models\left<\pi\right>^{-1}\true$ iff there exists
$W\in L_\pi$ such that $M,v\models^{-1} W$. 

Let $\Cc=(Q,\iota,T,G)$ be a finite automaton over~$\Gamma$
recognizing~$L_\pi$. Note that we can assume $|Q|\in O(s(\pi))$. For $q\in Q$
and $W\in\Gamma^*$, we write $q.W\subseteq Q$ for the set of states that can
be reached from $q$ reading the word~$W$, and we denote by $W.q \subseteq Q$
the set of states from which one can reach $q$ when reading $W$. Furthermore,
$P.L=\bigcup_{p\in P, W\in L}p.W$ and $L.P=\bigcup_{W\in L, p\in P}W.p$ for
$P\subseteq Q$ and $L\subseteq\Gamma^*$ (if $L$ (or $P$) is a singleton, then
we may identify it with its unique element). 

\begin{lem}\label{L-pi-back-alpha}
  There exists a CFM $\cA$ with sets of local states $2^Q$ and set of messages
  $2^Q$ such that, for any run $\rho$ of $\cA$ on $(M,c_1,\dots,c_n)$ and any
  node $v$ of $M$, we have \linebreak$\rho(v)=\{q\in Q\mid\exists
  W\in\Gamma^*:q \in W.G\text{ and }M,v\models^{-1}W\}$. 
\end{lem}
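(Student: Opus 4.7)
The plan is to build a deterministic CFM $\cA$ whose local state at (after) each event $v$ is exactly the set $\rho(v)$ described in the statement, and whose message sent at any send event equals that event's $\rho$-value, so the matching receive can consume it. Concretely, I would take $S_p = 2^Q$ with initial state $\iota_p = \emptyset$, message alphabet $C = 2^Q$, and accepting set $F = \prod_{p \in \cP} 2^Q$ (all states accepting), so that a run exists uniquely on every input and the acceptance condition is trivial.

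The transition at an event $v \in V_p$ with label $\sigma \in \Sigma_p$ and bit vector $(c_1(v), \ldots, c_n(v))$ is designed to realize the fixed-point characterization
\begin{align*}
\rho(v) \;=\; G \;&\cup\; \{q : q.\proc \cap \rho(\proc^{-1}(v)) \neq \emptyset\} \text{ if $\proc^{-1}(v)$ exists}\\
\;&\cup\; \{q : q.\msg \cap \rho(\msg^{-1}(v)) \neq \emptyset\} \text{ if $v$ is a receive}\\
\;&\cup\; \{q : \exists i \text{ with } c_i(v) = 1 \text{ and } q.i \cap \rho(v) \neq \emptyset\}.
\end{align*}
This fixed point can be computed locally at $v$: start from the base set $R_0$ consisting of $G$ together with the $\proc$- and $\msg$-contributions (the latter read from the previous local state $s = \rho(\proc^{-1}(v))$ and from the incoming message $\mu = \rho(\msg^{-1}(v))$, respectively), then saturate by $R_{k+1} = R_k \cup \bigcup_{i : c_i(v) = 1} \{q : q.i \cap R_k \neq \emptyset\}$ until stabilization in at most $|Q|$ steps. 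Finally set $\rho(v) := R_{|Q|}$; this value is both stored as the new local state and, if $v$ is a send, emitted as the control message destined for $\msg(v)$.

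Correctness proceeds by well-founded induction on $(V, \le)$: the base set $R_0$ handles the empty word $W = \varepsilon$ (contributing $G$) together with one backward $\proc$- or $\msg$-step prepended to the inductive hypothesis at the predecessor or at the matching send event; the closure loop adds exactly the words obtained by prepending test letters $i$ with $c_i(v) = 1$. Thus $q \in \rho(v)$ iff some backward-readable word $W$ witnesses $q \in W.G$ with $M,v \models^{-1} W$, matching Definition~\ref{def:Lpi-back}.

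The main obstacle is handling the test letters, which do not advance the event and therefore cannot be treated by a single CFM transition in isolation; this is why we compute the least fixed point of the test-closure operator locally at each $v$ rather than trying to encode tests as self-loops of the CFM. A secondary point to verify is that the value carried by the message at a receive $v$ is indeed $\rho(\msg^{-1}(v))$: this holds because the FIFO and well-ordered nature of messages, together with the inductive construction, guarantees that the matching send event has already fixed its state and emitted it as $\mu(v) = \mu(\msg^{-1}(v)) = \rho(\msg^{-1}(v))$, as required by clause~(1) in the definition of a run.
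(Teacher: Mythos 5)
Your proposal is correct and follows essentially the same route as the paper: same state space $2^Q$ and message alphabet $2^Q$, messages carrying the sender's own state, all tuples accepting, and the same recursive characterization of $\rho(v)$ proved by induction on $(V,\le)$. Your local fixed-point saturation over the test letters computes exactly the set $N^*.\bigl(G\cup\proc.A_1\cup\msg.A_2'\bigr)$ that the paper writes directly as $N^*.G\cup N^*\,\proc.A_1\cup N^*\,\msg.A_2'$ in its transition relation, so the two constructions coincide.
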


\begin{proof}
  To define the set of transitions, let $A_1,A_2,A_2'\subseteq 2^Q$, and let
  $a=(\sigma,b_1,\dots,b_n)\in\Sigma_p\times\{0,1\}^{n}$ and
  $N=\{i\in[n]\mid b_i=1\}$. Then we set
  \[
  A_1\xrightarrow{a,A'_2}_p A_2
  \]
  iff the following conditions hold
  \begin{enumerate}[(1)]
  \item if $\sigma$ is a send action, then $A_2=A_2'=N^*.G\cup N^*\,\proc.A_1$,
  \item if $\sigma$ is a receive action, then $A_2=N^*.G\cup N^*\,\proc.A_1\cup
    N^\ast\,\msg.A_2'$. 
  \end{enumerate}
  Here, $A_1$ is the local state assumed before the execution of the
  (labeled) action~$a$, $A_2$ is the local state assumed afterwards,
  and $A_2'$ is the message involved in this transition. Depending on
  whether $a$ is a receive or a send action, the message is consumed
  by $a$ or emitted by $a$. These two situations are visualized in
  Figure~\ref{fig:transitions}. 

  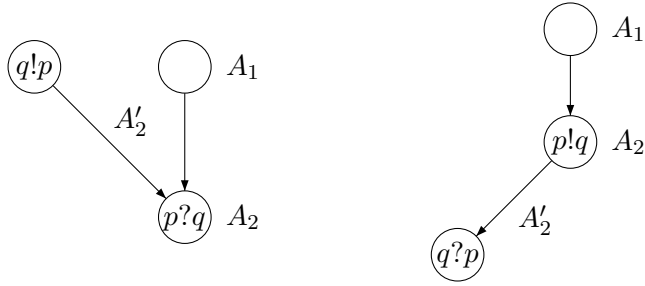
\begin{figure}
    \centering
    \begin{picture}(50,40)
      \gasset{Nframe=y,Nfill=n,Nw=7,Nh=7,Nmr=7,ExtNL=y,NLdist=2}
      \node[NLangle=0](1)(30,30){$A_1$}
      \node[NLangle=0](2)(30,10){$A_2$}
      \nodelabel[ExtNL=n,NLdist=0](2){$p?q$}
      \node[ExtNL=n,NLdist=0](3)(10,30){$q!p$} \drawedge(1,2){}
      \drawedge(3,2){$A_2'$}
    \end{picture}
    \begin{picture}(50,40)
      \gasset{Nframe=y,Nfill=n,Nw=7,Nh=7,Nmr=7,ExtNL=y,NLdist=2}
      \node[NLangle=0](1)(30,35){$A_1$}
      \node[NLangle=0](2)(30,20){$A_2$}
      \nodelabel[ExtNL=n,NLdist=0](2){$p!q$}
      \node[ExtNL=n,NLdist=0](3)(15,5){$q?p$} \drawedge(1,2){}
      \drawedge(2,3){$A_2'$}
    \end{picture}
    \caption{Transitions of the CFM.} 
    \label{fig:transitions}
  \end{figure}

  The local initial state is \cb{$\emptyset$} for any $p\in\cP$ and
  any tuple of local states is accepting. Now let $(\rho,\mu)$ be a
  run of this CFM on $(M,c_1,\dots,c_n)$. 

  For $v\in V$ set $N_v=\{i\in[n]\mid c_i(v)=1\}$ and
  $M(v)={\{W\in\Gamma^*\mid (M,c_1,\dots,c_n),v\models^{-1}W\}}$. Then
  it is easily verified that
  \[
  M(v)=N_v^*\cup \underbrace{N_v^*\,\proc\,M(\proc^{-1}(v))}
  _{\text{if $\proc^{-1}(v)$ is defined}} \cup
  \underbrace{N_v^*\,\msg\,M(\msg^{-1}(v))} _{\text{if $\msg^{-1}(v)$
      is defined}}\ . 
  \]
  On the other hand,
  \[
  \rho(v)= N_v^*.G \cup
  \underbrace{N_v^*.\proc.\rho(\proc^{-1}(v))} _{\text{if
      $\proc^{-1}(v)$ is defined}} \cup
  \underbrace{N_v^*.\msg.\rho(\msg^{-1}(v))}_{\text{if $\msg^{-1}(v)$
      is defined}}\ . 
  \]
  Hence, by induction on the partial order $(V,\le)$, we have $q\in\rho(v)$
  iff $q\in M(v).G$. 
\end{proof}

\begin{thm}\label{T-pi-back-alpha}
  Let $\left<\pi\right>^{-1}\alpha$ be a local formula such that $\pi$
  is a regular expression over the alphabet
  $\{\proc,\msg,\{\alpha_1\},\dots,\{\alpha_n\}\}$. Then there exists
  a CFM $\cA_{\left<\pi\right>^{-1}\alpha}$ of index $1$ with the following
  property: Let $M$ be an MSC and let $c_i:V\to\{0,1\}$ be the
  characteristic function of the set of positions satisfying
  $\alpha_i$ (for all $i\in[n+1]$) where $\alpha_{n+1}=\alpha$. Then
  $(M,c_1,\dots,c_n,c_{n+1},c)$ is accepted iff $c$ is the
  characteristic function of the set of positions satisfying
  $\left<\pi\right>^{-1}\alpha$. 

  The CFM we construct has $2^{O(s(\pi))}$ local states per process,
  $2^{O(s(\pi))}$ many control messages, and any tuple of local states
  is accepting (in particular, the CFM has index~$1$). 
\end{thm}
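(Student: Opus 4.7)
The plan is to reduce the formula $\left<\pi\right>^{-1}\alpha$ to one of the shape $\left<\pi'\right>^{-1}\true$ that can be handled directly by Lemma~\ref{L-pi-back-alpha}, and then to couple the output bit $c$ to the initial-state membership in the resulting backward-run automaton.

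First, I would set $\pi' = \pi;\{\alpha_{n+1}\}$ (recall $\alpha_{n+1}=\alpha$) and observe, using the semantic clauses for $;$ and $\{\cdot\}$ from Section~\ref{sec:defs}, that $M,v\models\left<\pi'\right>^{-1}\true$ iff $M,v\models\left<\pi\right>^{-1}\left<\{\alpha\}\right>^{-1}\true$ iff $M,v\models\left<\pi\right>^{-1}\alpha$. Since $s(\pi')=s(\pi)+1\in O(s(\pi))$, there is a finite automaton $\Cc'=(Q',\iota',T',G')$ for $L_{\pi'}\subseteq(\Gamma\cup\{n+1\})^*$ with $|Q'|\in O(s(\pi))$. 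Applying Lemma~\ref{L-pi-back-alpha} to $\Cc'$ then yields a CFM $\cA'$ over $\Sigma\times\{0,1\}^{n+1}$ whose run on $(M,c_1,\dots,c_{n+1})$ satisfies $\rho(v)=\{q\in Q' : \exists W : q\in W.G'\text{ and }M,v\models^{-1}W\}$ at every node $v$; in particular $\iota'\in\rho(v)$ iff $M,v\models\left<\pi\right>^{-1}\alpha$.

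Second, I build $\cA_{\left<\pi\right>^{-1}\alpha}$ from $\cA'$ by extending the input alphabet by one further bit (for the function $c:V\to\{0,1\}$) and by restricting the transitions as follows: a transition of $\cA_{\left<\pi\right>^{-1}\alpha}$ from $A_1$ to $A_2$ on input $(\sigma,b_1,\dots,b_{n+1},b)$ and message $A_2'$ exists iff the corresponding $\cA'$-transition exists and, additionally, $b=1$ iff $\iota'\in A_2$; all tuples of local states remain accepting. Since $\rho$ is determined by $(M,c_1,\dots,c_{n+1})$, an input $(M,c_1,\dots,c_{n+1},c)$ admits a run of $\cA_{\left<\pi\right>^{-1}\alpha}$ iff, for every $v$, $c(v)=1$ iff $\iota'\in\rho(v)$, which by the first step is exactly the required characteristic-function property. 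The local states are subsets of $Q'$ and the control messages are inherited from $\cA'$, so both numbers lie in $2^{O(s(\pi))}$; since all global states are accepting, the CFM has index~$1$.

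No deeper obstacle arises here: Lemma~\ref{L-pi-back-alpha} does all the combinatorial work, and the augmentation merely reads off the truth value of $\left<\pi\right>^{-1}\alpha$ from the local state reached after each event. The only point that really needs to be spelled out is the rewriting $\left<\pi\right>^{-1}\alpha\equiv\left<\pi;\{\alpha\}\right>^{-1}\true$, which follows at once from the defining clauses $M,v\models\left<\pi_1;\pi_2\right>^{-1}\beta\iff M,v\models\left<\pi_1\right>^{-1}\left<\pi_2\right>^{-1}\beta$ and $M,v\models\left<\{\alpha\}\right>^{-1}\true\iff M,v\models\alpha$.
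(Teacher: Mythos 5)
Your proposal is correct and follows essentially the same route as the paper: reduce $\left<\pi\right>^{-1}\alpha$ to $\left<\pi;\{\alpha\}\right>^{-1}\true$, invoke Lemma~\ref{L-pi-back-alpha} to obtain the backward-run CFM whose (essentially unique) run satisfies $\rho(v)=\{q\mid\exists W: q\in W.G,\ M,v\models^{-1}W\}$, and then tie the extra bit $c(v)$ to the condition $\iota\in\rho(v)$, noting that all global states are accepting so the index is~$1$. Your explicit description of how the transition relation enforces $b=1$ iff $\iota'\in A_2$ merely spells out what the paper summarizes as ``simulates the run and verifies that $c(v)=1$ iff $\iota\in\rho(v)$.''
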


\begin{proof}
  Again, since $M,v\models\left<\pi\right>^{-1}\alpha$ iff
  $M,v\models\left<\pi;\{\alpha\}\right>^{-1}\true$, we will
  assume~$\alpha=\true$. The CFM $\cA_{\left<\pi\right>^{-1}\alpha}$ simulates
  the run $(\rho,\mu)$ of the CFM~$\cA$ from Lemma~\ref{L-pi-back-alpha} and
  verifies that $c(v)=1$ iff $\iota \in \rho(v)$ for all nodes $v\in V$. Then
  we have
  \begin{align*}
    c(v)=1 &\iff \iota \in \rho(v)\\
    &\iff \exists W\in\Gamma^*: \iota \in W.G
    \text{ and }M,v\models^{-1}W\\
    &\iff \exists W\in L(\Cc)=L_\pi: M,v\models^{-1}W\\
    &\iff \exists W\in L_\pi: M,v\models^{-1}W\\
    &\iff M,v\models\left<\pi\right>^{-1}\true 
  \end{align*}
  This concludes the proof of Theorem~\ref{T-pi-back-alpha}. 
\end{proof}

\subsection{The forward-path automaton}
\label{ssec:forward}

We now turn to a similar CFM corresponding to subformulas of the
form~$\left<\pi\right>\true$. We will prove the following analog to
Theorem~\ref{T-pi-back-alpha}. This proof will, however, be substantially more
difficult. 

%We now turn to a similar CFM corresponding to subformulas of the
%form~$\left<\pi\right>\alpha$. Analogous to
%Theorem~\ref{T-pi-back-alpha} we will prove the following theorem. 

\begin{thm}\label{T-pi-alpha}
  Let $I\subseteq\Ch$ and let $\left<\pi\right>\alpha$ be a local
  formula such that $\pi$ is a regular expression over the alphabet
  $\{\proc,\msg,\{\alpha_1\},\dots,\{\alpha_n\}\}$. Then there exists
  a CFM $\cA_{\left<\pi\right>\alpha}$ of index~$1$ with the following
  property: Let $M$ be an MSC with $\Inf(M)=I$ and let
  $c_i:V\to\{0,1\}$ be the characteristic function of the set of
  positions satisfying $\alpha_i$ (for all $i\in[n+1]$) where
  $\alpha_{n+1}=\alpha$. Then $(M,c_1,\dots,c_n,c_{n+1},c)$ is
  accepted iff $c$ is the characteristic function of the set of
  positions satisfying $\left<\pi\right>\alpha$. The CFM we construct
  has $2^{O(s(\pi)+|\cP|)}$ local states per process and
  $2^{O(s(\pi))}$ many control messages. 
\end{thm}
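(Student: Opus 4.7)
The plan is to adapt the backward-path construction of Theorem~\ref{T-pi-back-alpha} by maintaining, at each node $v$, the set $\mathrm{fut}(v)\subseteq Q$ of states from which an accepting run of a fixed NFA $\Cc=(Q,\iota,T,G)$ for $L_\pi$ exists when reading the MSC forward from $v$. As before I reduce to $\alpha=\true$ by absorbing $\alpha$ into $\pi$, so that $M,v\models\langle\pi\rangle\true$ iff $\iota\in\mathrm{fut}(v)$. Dually to Lemma~\ref{L-pi-back-alpha}, the function $\mathrm{fut}$ satisfies
\[
\mathrm{fut}(v)\;=\;N_v^\ast.G\;\cup\;N_v^\ast.\proc.\mathrm{fut}(\proc(v))\;\cup\;N_v^\ast.\msg.\mathrm{fut}(\msg(v)),
\]
where $N_v=\{i\mid c_i(v)=1\}$ and terms with undefined successors are dropped. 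The crucial difference from the backward case is that $\mathrm{fut}(v)$ refers to successors of $v$, which the CFM processes later: each process will guess $\mathrm{fut}(v)$ at $v$, propagate the guess along $\proc$ in its local state, and transmit the value guessed for $\mathrm{fut}(\msg(v))$ as part of the content of the outgoing message (contributing the factor $2^{O(s(\pi))}$ to the message alphabet). Local consistency with the recurrence is verified at every event.

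Local consistency alone is insufficient: in an infinite MSC a ``too large'' guess may be sustained by an infinite chain of unfulfilled promises. To rule this out I would invoke the colouring machinery of Section~\ref{sec:color-language}: the CFM additionally guesses a colouring $c'\colon V\to\{0,1\}$ and checks $c'\in\Col$ via $\cA_\Col$ of Proposition~\ref{P-Col-regular}. On top of $\mathrm{fut}(v)$ each process maintains an obligation record, assigning to every $q\in\mathrm{fut}(v)$ one of the three summands in the recurrence that is supposed to justify it. A cofinal acceptance condition demands that on each $\sim$-equivalence class all pending obligations be discharged by reaching an accepting state of $\Cc$ before the next colour change. By Corollary~\ref{C-inf-many-color-changes}, every infinite path in $M$ crosses infinitely many colour changes, forcing an infinite chain of promises to collapse somewhere. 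Condition~(2) of $\Col$, together with Lemma~\ref{L-msg-and-color}, is calibrated so that an obligation carried across a message $(v,v')$ can be anchored via an earlier, colour-matched message $(u,u')$ whose fate has already been verified.

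The full CFM $\cA_{\langle\pi\rangle\alpha}$ is obtained as the intersection, via Lemma~\ref{L-intersection-index=1}, of four index-$1$ components: $\cA_\Col$ ($2^{O(|\cP|)}$ states, two messages); the CFM~$\cB$ of Proposition~\ref{P-(in)finite-channels} enforcing $\Inf(M)=I$ ($2^{O(|\cP|)}$ states, one message); the forward-tracking automaton just described, whose local state is a subset of $Q$ together with obligation bookkeeping ($2^{O(s(\pi))}$ states, $2^{O(s(\pi))}$ messages); and a trivial gadget checking $c(v)=1\Leftrightarrow\iota\in\mathrm{fut}(v)$. Multiplying alphabets yields the required $2^{O(s(\pi)+|\cP|)}$ local states and $2^{O(s(\pi))}$ messages, and the construction preserves index~$1$. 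I expect the main obstacle to lie in the correctness proof for the obligation mechanism: soundness requires arguing that any spurious guess necessarily violates either the cofinal acceptance condition (by deferring a promise forever) or the finiteness of $\sim$-classes in $\Col$, while completeness requires exhibiting, for every legitimate assignment of $\mathrm{fut}$, a colouring in $\Col$ and an obligation schedule that together pass every local check --- this is where the combinatorics of $\Col$ (notably its second condition) and the NFA $\Cc$ must be orchestrated carefully.
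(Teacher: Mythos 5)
Your proposal assembles exactly the ingredients the paper uses --- reduction to $\alpha=\true$, a forward recurrence for sets of $\Cc$-states with guesses carried in the local state and in the message contents, the language $\Col$ together with Corollary~\ref{C-inf-many-color-changes} to kill infinite chains of unfulfilled promises, and a two-phase obligation discharge synchronised with colour changes --- and your size accounting matches the theorem. The structural difference is in the decomposition. You maintain at every node the \emph{exact} set $\mathrm{fut}(v)$ and derive both directions of the equivalence from it: the $\supseteq$-half of your recurrence (a pre-fixed-point argument) yields $\mathrm{fut}(v)\subseteq S(v)$ and hence the ``$c(v)=0$'' direction, while the obligation mechanism yields $S(v)\subseteq\mathrm{fut}(v)$ and hence the ``$c(v)=1$'' direction. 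The paper instead splits the biconditional into two independent CFMs intersected via Lemma~\ref{L-intersection-index=1}: $\cA_0$ (``any $0$ is justified'') propagates forward the states reachable from $\iota$ out of $0$-labelled nodes over the state space $2^{Q\setminus G}$ --- a pure safety automaton in which every run is accepting and no colouring is needed --- and $\cA_1$ (``any $1$ is justified'') spawns a single witness obligation only at $1$-labelled nodes and uses the $(A,B,d)$-bookkeeping with $\Col$. The paper's split keeps the liveness machinery confined to the part that actually needs it and makes the completeness construction (Lemma~\ref{L-hhh}, building the colouring from an exhaustion $H_0\subseteq H_1\subseteq\cdots$ of $V$) lighter, since only one witness path per $1$-node must be scheduled; your version must schedule a witness for \emph{every} element of $\mathrm{fut}(v)$ at \emph{every} node, which still works (each node carries at most $|Q|$ finite witnesses) but adds bookkeeping. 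Be aware that the part you explicitly defer --- soundness of the obligation mechanism and, especially, the existence of a colouring in $\Col$ plus a discharge schedule for a legitimate input --- is where the bulk of the paper's proof lives; also note that obligations need not be discharged within a single $\sim$-class as you state, but may travel across messages and survive into the next colour phase (they move from $B$ to $A$ at a colour change and must vanish by the one after), which is precisely why condition~(2) of $\Col$ and Lemma~\ref{L-msg-and-color} are needed to align colour phases across processes.
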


The rest of this section is devoted to the proof of this theorem. 
Since $M,v\models\left<\pi\right>\alpha$ iff
$M,v\models\left<\pi;\{\alpha\}\right>\true$, we will
assume~$\alpha=\true$, i.e., $\alpha$ holds true for any node of any
MSC. 

Let $\Gamma=\{\proc,\msg,1,2,\dots,n\}$. If, in the regular
expression~$\pi$, we replace any occurrence of $\{\alpha_i\}$ by $i$,
we obtain a regular expression over the alphabet $\Gamma$. Let
$L_\pi\subseteq\Gamma^*$ be the language denoted by this regular
expression. Then there is a finite automaton $\Cc=(Q,\iota,T,G)$
over~$\Gamma$ with set of states~$Q$, initial state~$\iota$, set of
transitions~$T$, and set of final states~$G$ \cb{recognizing $L_\pi$}. Note that $|Q|\in
O(s(\pi))$. 

A word over $\Gamma$ together with a node from an MSC describe a path
starting in that node walking \emph{forwards}. The following is
therefore the forward-version of Def.~\ref{def:Lpi-back}. 

\begin{defi}\label{def:Lpi}
  For an MSC $M$, functions $c_1,\dots,c_n:V\to\{0,1\}$, a node $v\in
  V$ and a word $W\in\Gamma^*$, we define inductively
  $(M,c_1,\dots,c_n),v\models W$:
  \begin{align*}
    (M,c_1,\dots,c_n),v&\models \varepsilon&\\
    (M,c_1,\dots,c_n),v&\models \proc\, W \iff \text{ there exists
    }v'=\proc(v)
    \text{ with } (M,c_1,\dots,c_n),v'\models W\\
    (M,c_1,\dots,c_n),v&\models \msg\, W \iff \text{ there exists
    }v'=\msg(v)
    \text{ with } (M,c_1,\dots,c_n),v'\models W\\
    (M,c_1,\dots,c_n),v&\models i\, W
    \iff c_i(v)=1\text{ and }(M,c_1,\dots,c_n),v\models W\\
  \end{align*}
\end{defi}

Now the following is immediate. 
\begin{lem}
  Let $M$ be an MSC and, for $i\in[n]$, let $c_i:V\to\{0,1\}$ be the
  characteristic function of the set of positions satisfying
  $\alpha_i$. Then $M,v\models\left<\pi\right>\true$ iff there exists
  $W\in L_\pi$ such that $M,v\models W$. 
\end{lem}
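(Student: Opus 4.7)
The plan is a straightforward structural induction on the path expression $\pi$, matching the inductive definition of $L_\pi$ with the semantics of $\langle\pi\rangle\true$ as given in the PDL semantics section. Since $\alpha = \true$ is maintained (or, equivalently, since $\langle\pi\rangle\alpha \equiv \langle\pi;\{\alpha\}\rangle\true$), the statement reduces to showing that the existence of an accepting $W \in L_\pi$ with $M,v\models W$ is logically equivalent to $M,v\models\langle\pi\rangle\true$.

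For the base cases, if $\pi = \proc$ then $L_\pi = \{\proc\}$, and $M,v\models \proc$ (by Def.~\ref{def:Lpi}) asserts exactly that $\proc(v)$ is defined, which matches $M,v\models\langle\proc\rangle\true$. The case $\pi = \msg$ is identical. For $\pi = \{\alpha_i\}$, we have $L_\pi = \{i\}$, and $M,v\models i$ iff $c_i(v)=1$ iff (by the hypothesis that $c_i$ is the characteristic function of $\alpha_i$) $M,v\models\alpha_i$, which is exactly the semantics of $\langle\{\alpha_i\}\rangle\true$.

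For the inductive step, the three composition cases follow by unrolling definitions. If $\pi = \pi_1;\pi_2$, then $L_\pi = L_{\pi_1} L_{\pi_2}$, and $M,v\models W_1 W_2$ with $W_i \in L_{\pi_i}$ means, by a direct induction on $|W_1|$, that there is some $v'$ with $M,v\models W_1$ witnessing a path from $v$ to $v'$ and $M,v'\models W_2$; by the induction hypothesis this gives $M,v\models\langle\pi_1\rangle\langle\pi_2\rangle\true = \langle\pi_1;\pi_2\rangle\true$. The case $\pi_1 + \pi_2$ is immediate from $L_\pi = L_{\pi_1}\cup L_{\pi_2}$ and the disjunctive semantics of $\langle\pi_1+\pi_2\rangle$. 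For $\pi = \pi_0^*$, we have $L_\pi = \bigcup_{n\ge 0} L_{\pi_0}^n$, and the existence of $W \in L_{\pi_0}^n$ with $M,v\models W$ corresponds, via the concatenation case iterated $n$ times, to $M,v\models(\langle\pi_0\rangle)^n\true$; by the semantics of $\langle\pi_0^*\rangle$ this is equivalent to $M,v\models\langle\pi_0^*\rangle\true$.

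There is no real obstacle here; the only minor care needed is the small auxiliary observation that for $W_1, W_2 \in \Gamma^*$, we have $M,v\models W_1 W_2$ iff there exists $v'$ such that $M,v\models W_1$ ``leading to $v'$'' and $M,v'\models W_2$. This is itself a trivial induction on $|W_1|$ using the clauses of Def.~\ref{def:Lpi}, where $\proc$ and $\msg$ advance the current node and test letters $i$ leave it fixed; it is the only sub-lemma worth stating explicitly before proceeding to the concatenation case.
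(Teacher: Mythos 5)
Your proof is correct and matches the paper's (non-)argument: the paper simply declares this lemma ``immediate'' after defining $M,v\models W$, and the structural induction on $\pi$ that you give is precisely the definition-unfolding that justifies that claim. The one refinement worth making explicit is that the concatenation and star cases need the induction hypothesis in the relativized form your auxiliary observation supplies---namely, that some $W\in L_\pi$ labels a traversable path from $v$ ending at a node $v'$ iff the corresponding $\pi$-path exists, so that $M,v'\models\left<\pi_2\right>\true$ can be invoked at that endpoint---because the literal biimplication with $\true$ alone is too weak to derive $M,v\models\left<\pi_1\right>\left<\pi_2\right>\true$.
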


Thus, in order to prove Theorem~\ref{T-pi-alpha}, it suffices to
construct a CFM that accepts $(M,c_1,\dots,c_n,c)$ iff
\begin{align*}
  &\forall v\in V:c(v)=0~\Longrightarrow~\forall
  W\in L_\pi:(M,c_1,\dots,c_n),v\not\models W\\
  \land &\forall v\in V:c(v)=1~\Longrightarrow~\exists W\in
  L_\pi:(M,c_1,\dots,c_n),v\models W. 
\end{align*}
Since the class of languages accepted by CFMs is closed under
intersection, we can handle the two implications separately in the
following two subsections. 

\subsubsection{Any $0$ is justified}

We construct a CFM that accepts $(M,c_1,\dots,c_n,c)$ iff, for any
$v\in V$ with $c(v)=0$, there does not exist $W\in L_\pi$ with
$(M,c_1,\dots,c_n,c),v\models W$. The basic idea is rather simple:
whenever the CFM encounters a node $v$ with $c(v)=0$, it will start
the automaton $\Cc$ (that accepts $L_\pi$) and check that it cannot
reach an accepting state whatever path we choose starting in~$v$. 
Since the CFM has to verify more than one $0$, the set of local
states~$S_p$ equals $2^{Q\setminus G}$ with initial state
$\iota_p=\emptyset$ for any $p\in\cP$. The set of control messages~$C$
equals~$2^{Q\setminus G}$, too. Furthermore, any tuple of local states
is accepting. 

To define the set of transitions, let $A_1,A_2\in S_p$ and $A_2'\in C$. 
Moreover, let $a=(\sigma,b_1,\dots,b_n,b)\in\Sigma_p\times\{0,1\}^{n+1}$ and
$N=\{i\in[n]\mid b_i=1\}$. Then we have a transition
\[
A_1\xrightarrow{a,A'_2}_p A_2
\]
iff the following conditions hold:
\begin{enumerate}[(1)]
\item if $b=0$, then $\iota.N^*\subseteq A_2$,
\item $A_1.\proc.N^*\subseteq A_2$,
\item if $\sigma$ is a receive action, then $A'_2.\msg.N^*\subseteq
  A_2$,
\item if $\sigma$ is a send action, then $A'_2=A_2$. 
\end{enumerate}

\begin{lem}
  Let $(\rho,\mu)$ be a run of the above CFM on $(M,c_1,\dots,c_n,c)$
  and let $v_0\in V$ with $c(v_0)=0$. Then there does not exist $W\in
  L_\pi$ with $(M,c_1,\dots,c_n),v_0\models W$. 
\end{lem}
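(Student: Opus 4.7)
Assume toward a contradiction that $W\in L_\pi$ witnesses $(M,c_1,\dots,c_n),v_0\models W$. Decomposing $W=W_1\cdots W_\ell$ gives a run $\iota=q_0,q_1,\dots,q_\ell$ of $\Cc$ with $q_\ell\in G$, and a trace of nodes $u_0=v_0,u_1,\dots,u_\ell$ in $M$ where a test letter leaves the current node fixed and $\proc$, $\msg$ advance along the respective partial function. The plan is to show $q_i\in\rho(u_i)$ for all $i$, which yields the contradiction $G\ni q_\ell\in\rho(u_\ell)\subseteq Q\setminus G$.

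A naive induction on $i$ fails in the test case: conditions~(1)--(3) only give lower bounds on $\rho(v)$, so a generic element of $\rho(v)$ need not remain in $\rho(v)$ after one test transition of $\Cc$. I circumvent this by defining auxiliary sets $\rho^{\min}(v)\subseteq Q$ by structural recursion on $(V,\le)$: $\rho^{\min}(v)$ is the union of $\{\iota\}.N_v^*$ (when $c(v)=0$), $\rho^{\min}(\proc^{-1}(v)).\proc.N_v^*$ (when $\proc^{-1}(v)$ is defined), and $\rho^{\min}(\msg^{-1}(v)).\msg.N_v^*$ (when $v$ is a receive event). A short structural induction on $(V,\le)$ shows $\rho^{\min}(v)\subseteq\rho(v)$: conditions~(1) and~(2) dispose of the first two summands directly, while for the $\msg$-summand condition~(4) at the send event $\msg^{-1}(v)$ together with the run-equation $\mu(v)=\mu(\msg^{-1}(v))$ forces $\mu(v)=\rho(\msg^{-1}(v))$, so that condition~(3) combined with the inductive hypothesis supplies the desired inclusion.

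The key virtue of $\rho^{\min}(v)$ is that each of its summands has the shape $X.N_v^*$, which makes $\rho^{\min}(v)$ closed under right-application of any test letter $j\in N_v$. Using this, I prove $q_i\in\rho^{\min}(u_i)$ by induction on~$i$. The base case $q_0=\iota\in\{\iota\}.N_{u_0}^*\subseteq\rho^{\min}(u_0)$ uses $c(u_0)=0$. For the step, if $W_i=\proc$ then $u_{i-1}=\proc^{-1}(u_i)$ and $q_i\in q_{i-1}.\proc\subseteq\rho^{\min}(\proc^{-1}(u_i)).\proc\subseteq\rho^{\min}(u_i)$; the case $W_i=\msg$ is symmetric via $u_{i-1}=\msg^{-1}(u_i)$; and if $W_i=j$ is a test, then $u_i=u_{i-1}$ and $j\in N_{u_i}$, so $q_i\in q_{i-1}.j\subseteq\rho^{\min}(u_i).j\subseteq\rho^{\min}(u_i)$ by the closure property. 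Combining with $\rho^{\min}(v)\subseteq\rho(v)$ gives $q_\ell\in\rho(u_\ell)\subseteq Q\setminus G$, contradicting $q_\ell\in G$. The main obstacle is exactly the test case, where the argument would otherwise require closure of $\rho(v)$ itself under individual $\Cc$-tests; passing to the canonical minimal sets $\rho^{\min}(v)$ builds that closure into the invariant and makes the induction uniform across the three letter types.
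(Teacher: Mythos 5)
Your proof is correct, and it arrives at the same contradiction as the paper --- follow the witnessing path for $W$ in lockstep with an accepting run of $\Cc$ and force the final state into $\rho(u_\ell)\subseteq Q\setminus G$ --- but the induction is organized differently. The paper factors $W$ as $w_0a_1w_1\cdots a_mw_m$ with $a_k\in\{\proc,\msg\}$ and $w_k\in[n]^*$ and inducts over these blocks; since the transition conditions (1)--(4) already provide inclusions of the shape $X.N^*\subseteq\rho(v)$, an entire maximal run of test letters is absorbed together with the preceding move in one inductive step, and the invariant $q_k\in\rho(v_k)$ goes through with no auxiliary construction. Your letter-by-letter induction really does break at the test letters for exactly the reason you identify, and your repair is sound: the sets $\rho^{\min}(v)$ are well defined by structural recursion because every node dominates a finite set, and your inclusion $\rho^{\min}(v)\subseteq\rho(v)$ correctly routes the $\msg$-summand through condition (4) at the send event and the run equation $\mu(v)=\mu(\msg^{-1}(v))$. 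It is worth noting that your $\rho^{\min}$ is precisely the assignment the paper constructs in the companion lemma of this subsection (the one showing that a tuple satisfying the ``any $0$ is justified'' property admits a run), so you are in effect reusing the canonical least run from the completeness direction to prove soundness. The blockwise decomposition buys brevity --- no second induction over $(V,\le)$ is needed; your version buys a single invariant uniform over all letters of $\Gamma$ and makes the least-run structure of the CFM explicit.
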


\begin{proof}
  Suppose there is $W\in L_\pi$ with $(M,c_1,\dots,c_n),v_0\models W$. 
  Write $W=w_0a_1w_1\dots a_nw_m$ with $a_k\in\{\proc,\msg\}$ and
  $w_k\in[n]^*$ for all appropriate~$k$. Since
  $(M,c_1,\dots,c_n),v_0\models W$, there exist nodes $v_k\in V$ with
  $v_{k+1}=a_{k+1}(v_k)$ and $w_k\subseteq N_{v_k}^*$ where
  $N_{v_k}=\{i\in[n]\mid c_i(v_k)=1\}$. Since $W\in L_\pi$, there are
  states $q_i\in Q$ with $q_0\in\iota.w_0$, $q_{i+1}\in
  q_i.a_{i+1}w_{i+1}$, and $q_m\in G$. 

  Since $c(v_0)=0$, we have $\iota.N_{v_0}^*\subseteq\rho(v_0)$ by~(1)
  and therefore $q_0\in\iota.w_0\subseteq\rho(v_0)$ by $w_0\in
  N_{v_0}^*$. By induction, assume $k<m$ and $q_k\in\rho(v_k)$. If
  $a_{k+1}=\proc$, then by (2) $q_{k+1}\in
  q_k.\proc.N_{v_{k+1}}^\ast\subseteq \rho(v_{k+1})$. If
  $a_{k+1}=\msg$, then $v_k$ is a send event. Hence, by~(4),
  $\mu(v_k)=\rho(v_k)$. Since $(v_k,v_{k+1})\in\msg$, this implies
  $\mu(v_{k+1})=\rho(v_k)$. Hence, by~(3),
  $q_{k+1}\in\rho(v_k).\msg.N_{v_{k+1}}^*\subseteq\rho(v_{k+1})$. This
  finishes the inductive argument. Hence $q_m\in\rho(v_n)\cap G$,
  contradicting our definition $S_p=2^{Q\setminus G}$. 
\end{proof}

\begin{lem}
  Suppose $(M,c_1,\dots,c_n,c)$ satisfies
  \[
  \forall v\in V:c(v)=0 ~\Longrightarrow~\forall W\in
  L_\pi:(M,c_1,\dots,c_n),v\not\models W. 
  \]
  Then $(M,c_1,\dots,c_n,c)$ admits a run of the above CFM. 
\end{lem}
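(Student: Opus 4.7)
The plan is to reverse-engineer an accepting run $(\rho,\mu)$ of the CFM directly from the hypothesis. For $v_0,v\in V$ with $v_0\le v$, let $R(v_0,v)$ denote the set of words $W\in\Gamma^*$ such that $(M,c_1,\dots,c_n),v_0\models W$ and the sequence of $\proc$- and $\msg$-letters of $W$ traces some forward path from $v_0$ to $v$ in $M$. I will then define
\[
\rho(v) \;=\; \bigcup_{v_0\le v,\; c(v_0)=0}\iota.R(v_0,v),
\]
and, as condition~(4) forces, set $\mu(v)=\rho(v)$ if $v$ is a send event and $\mu(v)=\rho(\msg^{-1}(v))$ if $v$ is a receive event; in either case $\mu(v)=\mu(\msg(v))$ holds whenever $\msg(v)$ is defined. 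Intuitively, $\rho(v)$ collects every state of $\Cc$ that some ``zero'' node $v_0\le v$ (with $c(v_0)=0$) could drive $\Cc$ into along an admissible continuation ending at~$v$.

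First, I will verify $\rho(v)\subseteq Q\setminus G$, so that $\rho$ legitimately takes values in $S_p=2^{Q\setminus G}$. If some $q\in G\cap\rho(v)$ existed, there would be $v_0\le v$ with $c(v_0)=0$ and $W\in R(v_0,v)$ with $q\in\iota.W$; then $W\in L(\Cc)=L_\pi$ together with $(M,c_1,\dots,c_n),v_0\models W$ would contradict the hypothesis. The four transition conditions are then mechanical closure arguments on $R$. For~(1), when $c(v)=0$ the node $v_0=v$ itself witnesses $\iota.N_v^*\subseteq\rho(v)$ via the empty path followed by tests in $N_v$. For~(2), any witness $W$ ending at $\proc^{-1}(v)$ may be extended by $\proc$ followed by any element of $N_v^*$, giving $\rho(\proc^{-1}(v)).\proc.N_v^*\subseteq\rho(v)$; at process-minimal events the predecessor term is replaced by $\iota_p=\emptyset$ and the condition becomes vacuous. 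Condition~(3) is symmetric, relying on $\mu(v)=\rho(\msg^{-1}(v))$ at receive events. Condition~(4) holds by construction. Since every tuple of local states is accepting, the run so built is automatically accepting.

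The main point requiring care is the bookkeeping inside $R(v_0,v)$: in an MSC there can be many distinct $\proc$/$\msg$-paths from $v_0$ to $v$, so $\iota.R(v_0,v)$ is a genuine set union, and each closure step in~(2) and~(3) must be carried out path by path. Once this is organized cleanly — and once one observes that, at a receive event $v$, the send $\msg^{-1}(v)$ contributes its full $\rho$-value as the message, exactly matching condition~(3) — no further obstacle arises.
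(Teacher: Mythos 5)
Your construction is correct and is essentially the paper's: your explicit $\rho(v)=\bigcup_{v_0\le v,\,c(v_0)=0}\iota.R(v_0,v)$ is exactly the unfolding of the paper's recursive definition $\rho(v)=\iota.N_v^*\cup\rho(\proc^{-1}(v)).\proc.N_v^*\cup\rho(\msg^{-1}(v)).\msg.N_v^*$ (union of whichever terms apply), and $\mu$ is chosen identically. The only real difference is how $\rho(v)\cap G=\emptyset$ is verified: the paper traces a putative final state backwards through the recursion to build an infinite descending chain of nodes, contradicting that every node dominates a finite set, whereas your explicit definition already carries the source node $v_0$ with $c(v_0)=0$ and the witness word $W\in L_\pi$, so the contradiction with the hypothesis is immediate.
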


\begin{proof}
  For $v\in V$, let $N_v=\{i\in[n]\mid c_i(v)=1\}$. Then define
  $\rho(v)$ to be the union of the following sets
  \begin{enumerate}[(a)]
  \item $\iota.N_v^*$ if $c(v)=0$,
  \item $\rho(\proc^{-1}(v)).\proc.N_v^*$ if $\proc^{-1}(v)$ is
    defined (i.e., if $v$ is not minimal on its process),
  \item $\rho(\msg^{-1}(v)).\msg.N_v^*$ if $\msg^{-1}(v)$ is defined
    (i.e., if $\lambda(v)$ is a receive action). 
  \end{enumerate}
  Furthermore, let
  \[
  \mu(v)=
  \begin{cases}
    \rho(v) & \text{ if $\lambda(v)$ is a send action}\\
    \rho(\msg^{-1}(v)) & \text{ otherwise.} 
  \end{cases}
  \]
  Then, for any $v\in V$, the transition conditions (1-4) are
  satisfied by the mappings $\rho$ and $\mu$ (recall that the local
  initial states are $\emptyset$). 

  Now, by contradiction, assume $(\rho,\mu)$ is no run, i.e., there is
  some $v_0\in V$ with $\rho(v_0)\notin 2^{Q\setminus G}$. Hence there
  exists $q_0\in\rho(v_0)\cap G$. Setting $W_0=\varepsilon$, we
  therefore have \\
  \hspace*{\fill}
  \begin{minipage}{.7\linewidth}
    $(M,c_1,\dots,c_n,c),v_k\models W_k$, $q_k\in\rho(v_k)$, and
    $q_k.W_k\cap G\neq\emptyset$
  \end{minipage}
  \hspace{\fill}(*)\\
  for $k=0$. Now assume that (*) holds for some~$k\ge0$. 

  First, assume $c(v_k)=0$ and $q_k\in\iota.N_{v_k}^*\subseteq \rho(v_k)$
  because of (a). Hence, there exists $w_k\in N_{v_k}^*$ with
  $q_k\in\iota.w_k$. But then
  $(M,c_1,\dots,c_n,c),v_k\models w_kW_k$ and $w_kW_k\in L_\pi$, a
  contradiction. Hence we have $q_k\in\rho(v_k)$ because of~(b)
  or~(c). If $q_k\in\rho(\proc^{-1}(v_k)).\proc.N_{v_k}^*$, then set
  $v_{k+1}=\proc^{-1}(v_k)$ and choose $q_{k+1}\in\rho(v_{k+1})$ and
  $w_k\in N_{v_k}^*$ with $q_k\in q_{k+1}.\proc.w_k$. Setting
  $W_{k+1}=\proc.w_k.W_k$ yields (*) for $k+1$. If
  $q_k\in\rho(\msg^{-1}(v_k)).\msg.N_{v_k}^*$, we can argue similarly. 

  Hence we find an infinite sequence of nodes $v_0>v_1>v_2\dots$ which
  is impossible since $v_0$ dominates only a finite set. Thus,
  $(\rho,\mu)$ is a run. 
\end{proof}

\begin{prop}\label{P-any-0-justified}
  There exists a CFM~$\cA_0$ of index~$1$ that accepts
  $(M,c_1,\dots,c_n,c)$ iff
  \[
  \forall v\in V:c(v)=0~\Longrightarrow~\forall W\in L_\pi:
  (M,c_1,\dots,c_n),v\not\models W. 
  \]
  The number of local states per process as well as the number of
  messages are in~$2^{O(s(\pi))}$. Furthermore, any run of the CFM is
  accepting. 
\end{prop}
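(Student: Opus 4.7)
The plan is to observe that the proposition follows immediately from the construction and the two preceding lemmas. The CFM $\cA_0$ to take is precisely the one built just before the two lemmas: sets of local states $S_p = 2^{Q\setminus G}$ with initial state $\iota_p=\emptyset$, set of control messages $C = 2^{Q\setminus G}$, all tuples of local states accepting, and transitions defined by conditions (1)--(4). The first lemma shows soundness (if there is a run, every $v$ with $c(v)=0$ admits no witnessing $W\in L_\pi$), and the second shows completeness (the converse).

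For the quantitative claims, I would note that by the assumption $|Q|\in O(s(\pi))$, we have $|S_p|=2^{|Q\setminus G|}\in 2^{O(s(\pi))}$ and $|C|=2^{|Q\setminus G|}\in 2^{O(s(\pi))}$, matching the desired bounds. The accepting condition $F = \prod_{p\in\cP}S_p$ is trivially a direct product of local sets, hence the index of $\cA_0$ equals $1$ in the sense of Definition~\ref{D-index}. Since any tuple of local states is accepting, any run (finite or infinite) on the coloring is automatically accepting, which is exactly the ``any run of the CFM is accepting'' clause of the proposition.

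The one thing I would check carefully is that the transitions indeed satisfy the CFM format, i.e., that conditions (1)--(4) can be encoded as a relation ${\to_p}\subseteq S_p\times(\Sigma_p\times\{0,1\}^{n+1}\times C)\times S_p$, which is clear: given $A_1$, a letter $a$, and a message $A_2'$, condition (4) forces $A_2' = A_2$ on sends, while on receives $A_2'$ is an independent input; condition (1) is a straightforward inclusion test involving the current letter's boolean components, and (2)--(3) fix a lower bound on $A_2$, so any $A_2$ containing the required set is legitimate. Hence the transition relation is well-defined.

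No real obstacle remains: the whole proposition is essentially a packaging step. The delicate parts are the two preceding lemmas (soundness relies on the monotone propagation of reachable states along $\proc$- and $\msg$-edges, and completeness requires the finiteness of the downward cone of any node to rule out an infinite descending sequence $v_0>v_1>\dots$), both of which are already established. Thus the proof will simply read: ``The CFM $\cA_0$ described above has the stated size and index, and its language is characterized by the preceding two lemmas.''
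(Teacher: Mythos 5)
Your proposal matches the paper exactly: the paper's proof of this proposition is the single sentence ``The proof is immediate by the above two lemmas,'' and your packaging of the construction together with the two lemmas, plus the routine size/index checks, is precisely what is intended. No issues.
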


\begin{proof}
  The proof is immediate by the above two lemmas. 
\end{proof}

\subsubsection{Any $1$ is justified}

We next construct a CFM that accepts $(M,c_1,\dots,c_n,c)$ iff, for
any $v\in V$ with $c(v)=1$, there exists $W\in L_\pi$ with
$(M,c_1,\dots,c_n,c),v\models W$. Again, the basic idea is simple:
whenever the CFM encounters a node $v$ with $c(v)=1$, it will start
the automaton $\Cc$ (that accepts~$L_\pi$) and check that it can reach
an accepting state along one of the possible paths.  Thus, before, we
had to prevent $\Cc$ from reaching an accepting state. This time, we
have to ensure that any verification of a $c(v)=1$ will eventually
result in an accepting state being reached. For sequential
B\"uchi-automata, solutions to this problem are known: collect some
claims to be verified in one set and, only when all of them are
verified, start verifying those claims that have been encountered
during the previous verification phase. The resulting
B\"uchi-automaton accepts iff the verification phase is changed
infinitely often. We will adapt precisely this idea here. But then,
the CFM would have to accept if, along \emph{each and every} path, the
verification phase changes infinitely often.  This is the point where
the CFM $\cA_\Col$ comes into play since, by
Corollary~\ref{C-inf-many-color-changes}, it verifies that any path
runs through infinitely many color changes. Thus, we will first
construct a CFM that runs on tuples $(M,c_0,c_1,\dots,c_n,c)$ where we
assume that $(M,c_0)\in\Col$. The actual CFM that verifies all claims
$c(v)=1$ will run this newly constructed CFM in conjunction with
$\cA_\Col$ (that verifies $(M,c_0)\in\Col$) and project away the
labeling~$c_0$. 

For any $p\in\cP$, the set of local states~$S_p$ equals $2^Q\times
2^Q\times\{0,1\}$ with initial state
$\iota_p=(\emptyset,\emptyset,1)$, the set of control messages~$C$
equals $2^Q\times 2^Q\times\{0,1\}$. 

To define the set of transitions, let $(A_1,B_1,d_1), (A_2,B_2,d_2)\in
S_p$ and $(A'_2,B'_2,d_2')\in C$. Furthermore, let
$a=(\sigma,b_0,b_1,\dots,b_n,b)\in\Sigma_p\times\{0,1\}^{n+2}$. Now we would like to define the conditions for the existence of a transition
\[
(A_1,B_1,d_1)\xrightarrow{a,(A'_2,B'_2,d_2')}_p(A_2,B_2,d_2)\,. 
\]
We have to distinguish between $\sigma$ being a send or a receive
event, cf.\ Figure~\ref{fig:transitions_forward}. For $\sigma= p!q$
the pair $(A_1,B_1)$ contains the in-going information whereas
$(A_2,B_2)$ and $(A_2',B_2')$ carry the out-going information
propagated along the process and the channel, respectively. On the
other hand, for $\sigma=p?q$ now both $(A_1,B_1)$ and  $(A_2',B_2')$
contain the in-going information whereas the out-going information can
be propagated along the process line only, hence, it is enclosed in $(A_2,B_2)$ only. Therefore, we put
\begin{align*}
  A_\rein = A_1,\; A_\raus= A_2\cup A_2',\;  B_\rein = B_1,\;  B_\raus= B_2\cup B_2'
\end{align*}
whenever $\sigma$ is a send event and
\begin{align*}
 A_\rein = A_1\cup A_2',\;  A_\raus= A_2, \; B_\rein = B_1\cup B_2',\;  B_\raus= B_2
\end{align*}
whenever $\sigma$ is a receive event. 
 \begin{figure}
    \centering
    \begin{picture}(60,45)(0,-5)
      \gasset{Nframe=y,Nfill=n,Nw=7,Nh=7,Nmr=7,ExtNL=y,NLdist=2}
      \node[NLangle=0](1)(30,35){$(A_1,B_1,d_1)$}
      \node[NLangle=0](2)(30,20){$(A_2,B_2,d_2)$}
      \nodelabel[ExtNL=n,NLdist=0](2){$p!q$}
      \node[ExtNL=n,NLdist=0](3)(15,5){$q?p$} \drawedge(1,2){}
      \drawedge[ELpos=70,ELside=r](2,3){$(A_2',B_2',d_2')$}
      \node[Nw=.5,Nh=.5,Nframe=n](4)(30,0){}\drawedge(2,4){}
    \end{picture}
    \begin{picture}(60,45)(0,-5)
      \gasset{Nframe=y,Nfill=n,Nw=7,Nh=7,Nmr=7,ExtNL=y,NLdist=2}
      \node[NLangle=0](1)(30,35){$(A_1,B_1,d_1)$}
      \node[NLangle=0](2)(30,15){$(A_2,B_2,d_2)$}
      \nodelabel[ExtNL=n,NLdist=0](2){$p?q$}
      \node[ExtNL=n,NLdist=0](3)(10,35){$q!p$} \drawedge(1,2){}
      \drawedge[ELpos=30,ELside=r](3,2){$(A_2',B_2',d_2')$}
      \node[Nw=.5,Nh=.5,Nframe=n](4)(30,0){}\drawedge(2,4){}
    \end{picture}
    \caption{Transitions of the CFM.} 
    \label{fig:transitions_forward}
  \end{figure}
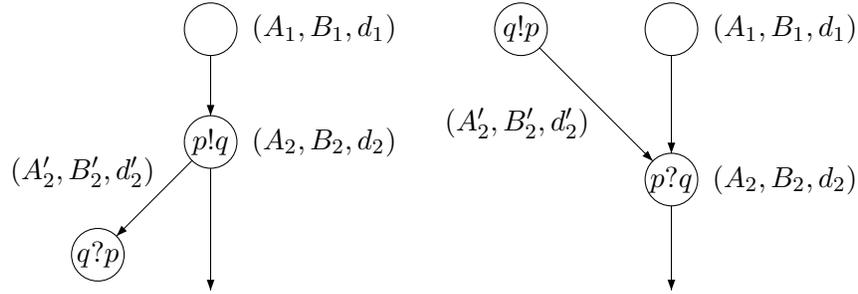

Now the idea is the following: The CFM saves the actual color within
its state and propagates it via the channel whenever $\sigma$ is a
send. Whenever we stay within the same color ($d_2=d_1$ and, for
$\sigma$ a receive, also $d_2=d_2'$) we propagate the states (from
the finite automaton $\Cc$) contained in $A_\rein$ to $A_\raus$ and
likewise from $B_\rein$ to $B_\raus$. But whenever the color changes
($d_2\neq d_1$ or, for $\sigma$ a receive, $d_2\neq d_2'$), we require
the respective part of $A_\rein$ to be empty and all the information
from the respective $B_\rein$ is swept to $A_\raus$. Moreover,
whenever a new $1$ has to be verified we start $\Cc$ and collect the
states obtained this way within $B_\raus$. Now we formalize these
ideas. 

Let $N= \{i\in[n]\mid b_i=1\}$. Recall that $\Cc=(Q,\iota,T,G)$ is the
finite automaton recognizing $L_\pi$. Then the transition above is
defined iff the following conditions hold:
\begin{enumerate}[(1)]
\item $d_2=b_0$ and, if $\sigma$ is a send, then also $d_2'=b_0$, 
\item if $b=1$, then $\iota.N^*\cap (G\cup B_\raus)\neq\emptyset$,
\item $\forall q\in A_1: q.\proc\,N^* \cap (G\cup A_\raus)\neq\emptyset$, and,\\  if $\sigma$ is a receive, then also $\forall q\in A_2': q.\msg\,N^* \cap (G\cup A_\raus)\neq\emptyset$,
\item if $d_2=d_1$, then $\forall q\in B_1: q.\proc\,N^* \cap (G\cup B_\raus)\neq\emptyset$,
\item if $d_2\neq d_1$, then $A_1=\emptyset$ and $\forall q\in B_1: q.\proc\,N^*\cap (G\cup A_\raus)\neq\emptyset$, 
\item if $d_2=d_2'$ and $\sigma$ is a receive, then $\forall q\in B_2': q.\msg\,N^* \cap (G\cup B_\raus)\neq\emptyset$,
\item if $d_2\neq d_2'$ and $\sigma$ is a receive, then $A_2'=\emptyset$ and $\forall q\in B_2': q.\msg\,N^*\cap (G\cup A_\raus)\neq\emptyset$. 
\end{enumerate}
Note that for a color change (cases (5) and (7)) the respective conditions in (3) become obsolete since $A_1=\emptyset$ and/or $A_2'=\emptyset$. 

Recall that $I$ is a set of channels and that we are only interested
in MSCs that use precisely these channels infinitely often. Let
$(f_p)_{p\in\cP}\in\prod_{p\in\cP}S_p$ be accepting in $\cA$ iff
$f_p\in\{(\emptyset,\emptyset,0),(\emptyset,\emptyset,1)\}$ for
all~$p\in \cP$ that are not involved in any of the channels from~$I$,
i.e., that satisfy $I\cap(\{p\}\times\cP\cup\cP\times\{p\})=\emptyset$
(note that a process $p$ is not involved in any of the channels
from~$I$ iff it is not involved in any of the channels used infinitely
often iff $p$ executes only finitely many events). This finishes the
construction of the CFM~$\cA$ of index~$1$. 

\begin{lem}
  Let $(\rho,\mu)$ be an accepting run of the above CFM $\cA$ on
  $(M,c_0,c_1,\dots,c_n,c)$ and suppose $(M,c_0)\in\Col$ and
  $I\subseteq\Inf(M)$. Then, for any $v_0\in V$ with $c(v_0)=1$, there
  exists $W\in L_\pi$ with $(M,c_1,\dots,c_n),v_0\models W$. 
\end{lem}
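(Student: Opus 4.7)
The plan is to build inductively from $v_0$ a sequence $(v_k,q_k,W_k)_{k \ge 0}$ where $v_k \in V$, $q_k \in Q$, and $W_k \in \Gamma^*$, subject to the invariants $(M,c_1,\dots,c_n),v_0 \models W_k$ with the described path ending at $v_k$, and $q_k \in \iota.W_k$. The construction stops at the first $K$ with $q_K \in G$, and then $W := W_K \in L_\pi$ is the desired witness.

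For the base case, I use condition~(2) at $v_0$ (available since $c(v_0)=1$) to extract $q_0 \in \iota.N_{v_0}^* \cap (G \cup B_\raus)$, pick $w_0 \in N_{v_0}^*$ with $q_0 \in \iota.w_0$, and set $W_0 := w_0$. Given $(v_k,q_k,W_k)$ with $q_k \notin G$, the invariant additionally forces $q_k$ to sit in the $A$- or $B$-component of either $\rho(v_k)$ or $\mu(v_k)$. I extend along the matching MSC edge: if $q_k$ lives in $\rho(v_k)$ I set $v_{k+1} := \proc(v_k)$; if $q_k$ lives in $\mu(v_k)$ at a send event $v_k$ I set $v_{k+1} := \msg(v_k)$. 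In either case one of the conditions~(3)--(7) applies at the transition of $v_{k+1}$ and guarantees a non-empty set of candidate successors lying in $G$ or in the $A$- or $B$-component after the transition. I pick $q_{k+1}$ from there, preferring $q_{k+1} \in G$ whenever possible, and append the corresponding edge letter followed by some $w \in N_{v_{k+1}}^*$ to $W_k$ to form $W_{k+1}$.

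A routine point to verify is that the extension never gets stuck: if $q_k \in A(v_k) \cup B(v_k) \neq \emptyset$, then $P(v_k)$ executes infinitely many events, since otherwise no channel involving $P(v_k)$ is used infinitely often and in particular no channel of $I \subseteq \Inf(M)$ contains $P(v_k)$; the acceptance condition of $\cA$ then forces $\rho(v_k) \in \{(\emptyset,\emptyset,0),(\emptyset,\emptyset,1)\}$, contradicting $q_k \in A(v_k) \cup B(v_k)$. So $\proc(v_k)$ exists. If $q_k$ lives in the message of a send $v_k$, then $\msg(v_k)$ exists by the MSC axioms.

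The main obstacle is showing the finiteness of the sequence. Assume for contradiction it is infinite. The underlying MSC path $v_0,v_1,\ldots$ is then infinite, and since $(M,c_0) \in \Col$, Corollary~\ref{C-inf-many-color-changes} supplies infinitely many color-changing edges $(v_k,v_{k+1})$. The key invariant to track is that $A$-membership is \emph{trapped within a single color block}: whenever $q_k$ lies in an $A$-component, the next edge cannot be a color change, because conditions~(5) and~(7) would then demand $A_1 = \emptyset$ or $A_2' = \emptyset$, directly contradicting $q_k$ being witnessed there; moreover no transition condition ever sends $q$ from $A$ back to $B$. On the other hand, while $q_k$ lies in a $B$-component, a color-changing edge forces $q_{k+1} \in G \cup A_\raus$ by~(5) or~(7). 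Hence the first color change encountered either ends the trace in $G$ or deposits $q$ into $A$, after which no further color change is traversable---in direct contradiction with the infinitude of color changes. So the sequence is finite and terminates with $q_K \in G$, yielding $W_K \in L_\pi$ with $(M,c_1,\dots,c_n),v_0 \models W_K$ as required.
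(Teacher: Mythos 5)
Your proposal is correct and follows essentially the same route as the paper's proof: the same base step from condition~(2), the same inductive extension along $\proc$/$\msg$ edges justified by conditions~(3)--(7) together with the acceptance condition and $I\subseteq\Inf(M)$, and the same termination argument combining Cor.~\ref{C-inf-many-color-changes} with the facts that a color change forces the relevant $A$-component to be empty and that the token is trapped in the $A$-components after at most one color change. The one point to tighten is that your rule for choosing $v_{k+1}$ and $q_{k+1}$ must give priority to $A$-membership over $B$-membership (as the paper does with its disjoint cases (ii)--(v)), since if $q_k$ lies in both an $A$- and a $B$-component and you extend via condition~(4) or~(6), the successor is only guaranteed to lie in $G\cup B_\raus$, so the ``once in $A$, stays in $A$'' invariant you rely on would not be maintained.
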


\begin{proof}
  For $v\in V$, let $\rho(v) = (A_v,B_v,d_v)$, $\mu(v) = (A'_v,B'_v,d_v')$,
  $N_v =\{i\in[n]\mid c_i(v)=1\}$. Similarly as above,  whenever $\lambda(v)$
  is a send event, we put
\begin{align*}
  A_\raus(v)= A_v\cup A_v',\;   B_\raus(v)= B_v\cup B_v',
\end{align*}
and whenever $\lambda(v)$ is a receive event, we put
\begin{align*}
 A_\raus(v)= A_v, \; B_\raus(v)= B_v\,. 
\end{align*}

Since $c(v_0)=1$, (2) implies the existence of $w_0\in N_{v_0}^*$
  and $q_0\in \iota.w_0\cap(G\cup B_\raus)$. Now we define a finite or infinite sequence $(v_i,w_i,q_i)_{0\le i<N}$
  with $N\in\N\cup\{\omega\}$, $v_i\in V$, $w_i\in\{\proc,\msg\}N_{v_i}^*$ for $i\ge 1$, and $q_i\in Q$ such that the following hold for all
  $0\le i<N$:
  \begin{enumerate}[(a)]
  \item $(v_i,v_{i+1})\in\proc\cup\msg$,
  \item $q_i\in G\cup A_\raus(v_i)\cup B_\raus(v_i)$, 
  \iffalse
  \item if $q_i\in A_v'\setminus (A_v\cup G)$ or $q_i\in B_v'\setminus
    (B_v\cup A_\raus(v_i)\cup G)$, then $\lambda(v)$ is a send event,
  \fi
  \item $w_{i+1}\in a\,N_{v_{i+1}}^*$ and $q_{i+1}\in q_i.w_{i+1}$ with
  $a=
    \begin{cases}
      \proc & \text{ if }(v_i,v_{i+1})\in\proc,\\
      \msg & \text{ otherwise,}
    \end{cases}
    $
  \item if $q_i\in A_\raus(v_i)$, then $q_{i+1}\in
    G\cup A_\raus(v_{i+1})$,
%%  \item if $q_{i+1}\notin G\cup A_\raus(v_{i+1})$, then
%%     $q_i.a.N_{v_{i+1}}^*\cap(G\cup A_\raus(v_{i+1}))=\emptyset$\\ where
%%     $a=
%%     \begin{cases}
%%       \proc & \text{ if }(v_i,v_{i+1})\in\proc,\\
%%       \msg & \text{ otherwise,}
%%     \end{cases}
%%     $
  \item $q_i\in G\iff N=i+1$\,. 
  \end{enumerate}

Let us  assume that the sequence has already been constructed up to index $i$. Then we proceed as follows:
  \begin{enumerate}[(i)]
  \item If $q_i\in G$, then set $N=i+1$ which finishes the
    construction of the sequence and implies (e) \textit{a posteriori} for all
    $0\le i <N$. 
  \item Suppose $q_i\in A_{v_i}\setminus G$. Since
    $A_{v_i}\neq\emptyset$ and since the run is accepting, the node
    $v_i$ is not maximal on its process, so we can set
    $v_{i+1}=\proc(v_i)$. Then, by (3), we can choose
    $w_{i+1}\in\proc\,N_{v_{i+1}}^*$ and $q_{i+1}\in
    q_i.w_{i+1}\cap(G\cup A_\raus(v_{i+1}))$. 
  \item Suppose $q_i\in A_\raus(v_i)\setminus(G\cup A_{v_i})$. Then
    $v_i$ is a send event. Hence, $v_{i+1}=\msg(v_i)$ is a
    well-defined receive event. Hence, by~(3), there exist
    $w_{i+1}\in\msg\,N_{v_{i+1}}^*$ and $q_{i+1}\in
    q_i.w_{i+1}\cap(G\cup A_\raus(v_{i+1}))$ such that (a) -- (e)
    hold. 
  \item Suppose $q_i\in B_{v_i}\setminus(G\cup A_\raus(v_i))$. 
    Since $B_{v_i}\neq\emptyset$ and since the run is accepting, the
    node $v_i$ cannot be maximal on its process, i.e.,
    $v_{i+1}=\proc(v_i)$ is well-defined. Then
    \begin{enumerate}[(a)]
    \item if $d_{v_{i+1}}=d_{v_i}$, by (4), $q_i.\proc\,N_{v_{i+1}}^*\cap(G\cup B_\raus(v_{i+1}))\neq\emptyset$, and
  \item if $d_{v_{i+1}}\neq d_{v_i}$, by (5), $q_i.\proc\,N_{v_{i+1}}^*\cap(G\cup A_\raus(v_{i+1}))\neq\emptyset$. 
    \end{enumerate}
    Hence we can choose $w_{i+1}\in \proc\,N_{v_{i+1}}^*$ and $q_{i+1}$ such
    that (a) -- (e) hold. 
  \item Finally, suppose $q_i\in B_\raus(v_i)\setminus(G\cup
    A_\raus(v_i)\cup B_{v_i})$ such that $v_i$ is a send event, i.e.,
    $v_{i+1}=\msg(v_i)$ is a well-defined receive event. Hence
    \begin{enumerate}[(a)]
    \item if $d_{v_{i+1}}=d_{v_{i+1}}'$, by (6), $q_i.\msg\,N_{v_{i+1}}^*\cap(G\cup B_\raus(v_{i+1}))\neq\emptyset$, and
  \item if $d_{v_{i+1}}\neq d_{v_{i+1}}'$, by (7), $q_i.\msg\,N_{v_{i+1}}^*\cap(G\cup A_\raus(v_{i+1}))\neq\emptyset$. 
    \end{enumerate}
Again, we can choose
    $w_{i+1}$ and $q_{i+1}$ such that
    (a) -- (e) hold. 
  \end{enumerate}
  If the construction can be carried out \textit{ad infinitum}, then
  set $N=\omega$ which, again, ensures (e) \textit{a posteriori} for
  all $i<N$. 

  Now suppose $N=\omega$. Then, by
  Cor.~\ref{C-inf-many-color-changes}, there exist $0\le i<k$ with
  $c_0(v_i)\neq c_0(v_{i+1})$ and $c_0(v_k)\neq c_0(v_{k+1})$. By (1),
  this implies $d_{v_i}\neq d_{v_{i+1}}$ whenever $(v_i,v_{i+1})\in\proc$ or
  $d_{v_{i+1}}'\neq d_{v_{i+1}}$ whenever $(v_i,v_{i+1})\in\msg$. Similarly, $d_{v_k}\neq
  d_{v_{k+1}}$ for $(v_k,v_{k+1})\in\proc$ and $d_{v_{k+1}}'\neq d_{v_{k+1}}$
  for $(v_k,v_{k+1})\in\msg$. Let us assume $(v_i,v_{i+1})\in\proc$ and
  $(v_k,v_{k+1})\in\proc$, i.e., $d_{v_i}\neq d_{v_{i+1}}$ and $d_{v_k}\neq
  d_{v_{k+1}}$. Hence, by $(5)$, we get
  $A_{v_i}=A_{v_k}=\emptyset$. Since, by (e), $q_i\notin
  G$, $A_{v_i}=\emptyset$, and $(v_i,v_{i+1})\in \proc$, we get $q_i\in
  B_{v_i}$ and, therefore, by case~(iv)(b) of the construction above
  $q_{i+1}\in A_\raus(v_{i+1})$.  Applying (e) and (d) inductively, this
  results in $q_k\in A_\raus(v_k)$.  Since $(v_k,v_{k+1})\in \proc$, we
  conclude by the construction (cases (ii) and (iii)) $q_k\in A_{v_k}\setminus
  G$. But this contradicts $A_{v_k}=\emptyset$.  For the
  other cases a contradiction is obtained  similarly. Hence, $N$ is finite. 

  Certainly, $(M,c_1,\dots,c_n),v_{N-1}\models\varepsilon$. From (c), we
  obtain $(M,c_1,\dots,c_n),v_{N-2}\models w_{N-1}$ and, by induction,
  $(M,c_1,\dots,c_n),v_0\models W$ with $W=w_0w_1\dots w_{N-1}$. Since
  $q_0\in \iota.w_0$, (c) implies $q_{N-1}\in\iota.W$ and therefore
  $W\in L_\pi$ follows from (e).  
\end{proof}

\begin{lem}\label{L-hhh}
  Suppose $(M,c_1,\dots,c_n,c)$ satisfies
  \[
  \forall v\in V: c(v)=1\Longrightarrow\exists W\in
  L_\pi:(M,c_1,\dots,c_n),v\models W
  \]
  and $\Inf(M)\subseteq I$.  Then there exists a mapping
  $c_0:V\to\{0,1\}$ such that $(M,c_0)\in\Col$ and the above CFM~$\cA$
  accepts $(M,c_0,c_1,\dots,c_n,c)$. 
\end{lem}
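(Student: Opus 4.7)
The proof sets up two interlocking constructions: a coloring $c_0\in\Col$ and an accepting run $(\rho,\mu)$ of $\cA$ tracking pending obligations derived from the witnesses we have by hypothesis.

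For each node $v\in V$ with $c(v)=1$, first commit to a single witness $W_v=w_0\,a_1\,w_1\cdots a_{\ell_v}\,w_{\ell_v}\in L_\pi$ with $(M,c_1,\dots,c_n),v\models W_v$ and an accepting run $q_0,q_1,\ldots,q_{\ell_v}$ of $\Cc$ on $W_v$ with $q_{\ell_v}\in G$, together with the induced finite sequence of nodes $v=v_0,v_1,\ldots,v_{\ell_v}$ where $v_{j+1}=a_{j+1}(v_j)$ and $c_i(v_j)=1$ whenever $i$ occurs in $w_j$. Each pair $(v_j,q_j)$ with $j<\ell_v$ is then a \emph{pending obligation}: state $q_j$ of $\Cc$ has arrived at node $v_j$ and must still reach $G$ along~$M$.

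Next, I construct $c_0$ by partitioning each process line $V_p$ into consecutive finite blocks $B_p^1,B_p^2,\ldots$, alternating colors with $c_0\equiv 1$ on $B_p^1$; the blocks then coincide with the $\sim$-equivalence classes. Boundaries are placed inductively along a linearization of $\le$ under two constraints: (i) a block on $p$ is kept open until, for every sender $p'$ of a message received in the current block, the block already contains some $(p',p)$-receive whose matching send on $p'$ has the same color, which gives condition~(2) of $\Col$; and (ii) a block is kept open long enough for every pending obligation $(v_j,q_j)$ with $v_j$ in the previous block on $p$ to have reached $G$ within the current block, which is possible because every $W_v$ is finite. A fair scheduling across processes ensures all blocks close after finitely many events (condition~(3) of $\Col$), using $\Inf(M)\subseteq I$ to deal with channels used only finitely often.

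The run is then defined by setting $d_u=c_0(u)$, letting $B_u\subseteq Q$ be the set of states $q_j$ of obligations $(v_j,q_j)$ with $v_j=u$ whose initial node $v_0$ lies in $u$'s current color phase on $P(u)$, and letting $A_u$ be the analogous set for obligations inherited from the previous color phase on $P(u)$. The message content $\mu(u)=(A'_u,B'_u,d'_u)$ transports the obligations that must travel along $\msg$ out of $u$. Verification of conditions (1)--(7) is then a case analysis: (1) is by construction; (2) uses $q_0\in\iota.w_0$ at each node with $c(u)=1$; the propagation conditions (3), (4), (6) within a color phase follow from $q_{j+1}\in q_j.(a_{j+1}\,w_{j+1})$ in the $\Cc$-run; the flush conditions (5), (7) at color changes use constraint~(ii) above to force $A_\rein=\emptyset$ and move the remaining $B$-obligations into $A_\raus$. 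Acceptance follows since any process $p$ uninvolved in $I$ executes only finitely many events (by $\Inf(M)\subseteq I$), so every obligation touching $V_p$ completes before the last event of $p$, leaving $A=B=\emptyset$ there.

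The main obstacle is the construction of $c_0$: constraints (i), (ii), together with the finiteness of equivalence classes required by $\Col$, impose a delicate balance. Each block closure on $p$ must wait for monochromatic matching messages from every active sender and must be long enough to absorb the pending obligations flushed from the previous block, while still being guaranteed to close. A fair scheduling argument weaving these demands across all processes is the heart of this step; once $c_0$ and the witnesses are chosen, the remainder is routine bookkeeping through the transition relation of $\cA$.
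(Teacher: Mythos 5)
Your overall architecture matches the paper's: fix one witness word and one accepting $\Cc$-run per node with $c(v)=1$, treat the intermediate pairs $(v_j,q_j)$ as pending obligations, color $V$ in alternating phases, put ``current-phase'' obligations in $B$ and ``previous-phase'' obligations in $A$, and let conditions (1)--(7) do the bookkeeping. The bookkeeping part of your argument is fine. The gap is exactly where you locate it yourself: the construction of $c_0$, and your per-process block construction does not resolve the two coordination problems that make this step hard. First, condition (2) of $\Col$ requires that \emph{every} $\sim$-class of $V_q$ lying below some future receive from $p'$ contain a receive from $p'$ whose matching send has the \emph{same color}; your constraint (i) only imposes this on blocks that already happen to contain a receive from $p'$, and, more importantly, monochromaticity of a message couples the block boundaries on the sender's line with those on the receiver's line --- independent per-process blocks with ``fair scheduling'' give no mechanism for this alignment. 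Second, an obligation's path wanders across processes, so your constraint (ii), which is stated per process (``obligations with $v_j$ in the previous block on $p$ finish in the current block on $p$''), does not guarantee the invariant the run actually needs, namely that each obligation undergoes \emph{at most one color change along its entire multi-process path} before reaching $G$; without that, an $A$-obligation survives into a second color change and transition conditions (5)/(7) (which demand $A_1=\emptyset$, resp.\ $A_2'=\emptyset$) are violated.

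The paper resolves both problems with a single device that your proposal lacks: a \emph{global} exhaustion $V=\bigcup_n H_n$ by finite sets whose partial unions are downward closed, with $c_0(v)=n\bmod 2$ for $v\in H_n$. Globality plus downward closedness make a message with both endpoints in the same layer automatically monochromatic (condition (B2) plants such a message for every still-active receive type in every layer, giving condition (2) of $\Col$), and condition (C) (the endpoint $v^v_{k^v}$ of an obligation starting in $H_n$ lies in $H_n\cup H_{n+1}$) combined with downward closedness forces the whole obligation path to stay in $H_n\cup H_{n+1}$, i.e.\ to cross at most one color boundary. To repair your proof you would essentially have to replace the per-process blocks by such a global layering, at which point you have reproduced the paper's construction.
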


\begin{proof}
  For any $v\in V$ with $c(v)=1$, there exist $0\le k^v\in\N$,
  $w_0^v\in[n]^*$, $w_i^v\in\{\proc,\msg\}[n]^*$ for $1\le i\le k^v$,
  and $q_i^v\in Q$ for $0\le i\le k^v$ such that
  \begin{enumerate}[(a)]
  \item $q_0^v\in\iota.w_0^v$, $q_{i+1}^v\in q_i^v.w_{i+1}^v$ for
    $1\le i<k^v$, and $q_{k^v}^v\in G$
  \item $v_0^v=v$ and, for $0\le i<k^v$, $v_{i+1}^v=
    \begin{cases}
      \proc(v_i^v) & \text{ if }w_{i+1}^v\in\proc[n]^*\\
      \msg(v_i^v) & \text{ if }w_{i+1}^v\in\msg[n]^*
    \end{cases}$
  \end{enumerate}

  We define inductively a sequence of subsets of $V$: Let
  $H_0=\emptyset$. Inductively, let $H_{n+1}\subseteq
  V\setminus\bigcup_{0\le i\le n}H_i$ be nonempty and finite such that
  \begin{enumerate}[(A)]
  \item $\bigcup_{0\le i\le n+1}H_i$ is downwards closed in $M$,
  \item for any $v\in V\setminus\bigcup_{0\le i\le n+1}H_i$ with
    $\lambda(v)=p?q$,
    \begin{enumerate}[(B1)]
    \item there exist infinitely many $v'\in V\setminus\bigcup_{0\le
        i\le n+1}H_i$ with $\lambda(v)=\lambda(v')$,
    \item there exist $u,u'\in H_{n+1}$ with $(u,u')\in\msg$ and
      $\lambda(u')=\lambda(v)$,
    \end{enumerate}
  \item for any $v\in H_n$ with $c(v)=1$, we have $v_{k^v}^v\in
    H_n \cup H_{n+1}$. 
  \end{enumerate}
  Then $V=\bigcup_{n\ge0}H_n$. 

  Now set, for $v\in H_n$,
  \begin{enumerate}[$\bullet$]
  \item $c_0(v)=n\bmod 2$ and $d_v=c_0(v)$
  \item if $v$ is a send event, then\cb{
    \begin{align*}
      A_v &= \{q_i^{\bar v}\mid \bar v\in H_{n-1}, c(\bar v)=1, 0\le i<k^{\bar
        v}\text{ such that }v=v_i^{\bar v}\text{ and }w_{i+1}^{\bar
        v}\in\proc[n]^*\}\\
      B_v &= \{q_i^{\bar v}\mid \bar v\in H_{n}, c(\bar v)=1, 0\le i<k^{\bar
        v}\text{ such that }v=v_i^{\bar v}\text{ and }w_{i+1}^{\bar
        v}\in\proc[n]^*\}\\
      A'_v &= \{q_i^{\bar v}\mid \bar v\in H_{n-1}, c(\bar v)=1, 0\le i<k^{\bar
        v}\text{ such that }v=v_i^{\bar v}\text{ and }w_{i+1}^{\bar
        v}\in\msg[n]^*\}\\
      B'_v &= \{q_i^{\bar v}\mid \bar v\in H_{n}, c(\bar v)=1, 0\le i<k^{\bar
        v}\text{ such that }v=v_i^{\bar v}\text{ and }w_{i+1}^{\bar
        v}\in\msg[n]^*\}\\
      d_v' &= c_0(v)
    \end{align*}
  \item if $v$ is a receive event, then
    \begin{align*}
      A_v &= \{q_i^{\bar v}\mid \bar v\in H_{n-1}, c(\bar v)=1, 0\le i<k^{\bar
        v}\text{ such that }v=v_i^{\bar
        v}\}\\
      B_v &= \{q_i^{\bar v}\mid \bar v\in H_n, c(\bar v)=1, 0\le i<k^{\bar v}\text{
        such that }v=v_i^{\bar v}\}\\
      A'_v &= A'_{\msg^{-1}(v)}\\
      B'_v &= B'_{\msg^{-1}(v)}\\
      d_v'&= d'_{\msg^{-1}(v)}
    \end{align*}
    }
  \end{enumerate}
  Then the pair of mappings $(\rho,\mu)$ with $\rho(v)=(A_v,B_v,d_v)$
  and \cb{$\mu(v)=(A'_v,B'_v,d'_v)$} is an accepting run of the CFM on
  $(M,c_0,c_1,\dots,c_n,c)$ and $(M,c_0)\in\Col$. 
\end{proof}

\begin{prop}\label{P-any-1-justified}
  Let $I\subseteq\Ch$. There is a CFM~$\cA_1$ that accepts $(M,c_1,\dots,c_n,c)$
  with $\Inf(M)=I$ iff
  \[
  \forall v\in V:c(v)=1~\Longrightarrow~ \exists W\in L_\pi:
  (M,c_1,\dots,c_n),v\models W. 
  \]
  The number of local states per process is in~$2^{O(|\cP|+s(\pi))}$
  and the number of messages is in~$2^{O(s(\pi))}$. 
\end{prop}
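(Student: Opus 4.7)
The plan is to assemble $\cA_1$ from three building blocks operating on a common alphabet and then remove the $c_0$ coordinate by projection. I take the CFM $\cA$ just constructed, which operates on $(M,c_0,c_1,\dots,c_n,c)$; the CFM $\cA_\Col$ from Prop.~\ref{P-Col-regular}, lifted by letting it ignore the extra bits $c_1,\dots,c_n,c$, so that it verifies $(M,c_0)\in\Col$; and the CFM $\cB$ from Prop.~\ref{P-(in)finite-channels}, lifted similarly to ignore $c_0,\dots,c_n,c$, so that it verifies $\Inf(M)=I$. All three have index~$1$, so by Lemma~\ref{L-intersection-index=1} their parallel composition $\cA''$ is an index-$1$ CFM whose language is the intersection. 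Finally, I obtain $\cA_1$ from $\cA''$ by letting each process nondeterministically guess $c_0(v)\in\{0,1\}$ at every local event, thereby moving $c_0$ from the input labeling into the local transition choice; this is a standard projection.

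Correctness follows directly from the two preceding lemmas. If $\cA_1$ accepts $(M,c_1,\dots,c_n,c)$ with $\Inf(M)=I$, then some $c_0$ makes $\cA''$ accept $(M,c_0,c_1,\dots,c_n,c)$; in particular $(M,c_0)\in\Col$ and $I\subseteq\Inf(M)$, so the first lemma above yields $W\in L_\pi$ with $(M,c_1,\dots,c_n),v\models W$ for every $v$ with $c(v)=1$. Conversely, if the implication holds and $\Inf(M)=I$, then $\Inf(M)\subseteq I$, and Lemma~\ref{L-hhh} supplies some $c_0$ with $(M,c_0)\in\Col$ and an accepting run of $\cA$ on $(M,c_0,c_1,\dots,c_n,c)$; by construction $\cA_\Col$ and $\cB$ accept as well, so $\cA''$ accepts and hence $\cA_1$ accepts.

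For the complexity, $\cA$ has $2^{O(s(\pi))}$ local states per process and $2^{O(s(\pi))}$ messages, $\cA_\Col$ has $2^{O(|\cP|)}$ states and $2$ messages, and $\cB$ has $2^{O(|\cP|)}$ states and $1$ message. Lemma~\ref{L-intersection-index=1} multiplies state counts and takes products of message sets, yielding $2^{O(s(\pi)+|\cP|)}$ local states per process and $2^{O(s(\pi))}$ messages in $\cA''$, and the projection preserves both bounds. The substantive work all sits in the two preceding lemmas; the only remaining point to verify is that the color guess, the channel counters and the run of $\cA$ can coexist in a single CFM without interference, which is immediate since each component acts on its own state and message coordinates and all three have index~$1$.
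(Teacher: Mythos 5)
Your proof is correct and follows essentially the same route as the paper: intersect the colour-language CFM $\cA_\Col$ with the CFM $\cA$ via Lemma~\ref{L-intersection-index=1}, project away $c_0$ by a nondeterministic local guess, and derive correctness from the two preceding lemmas (using $\Inf(M)=I$ to satisfy both $I\subseteq\Inf(M)$ and $\Inf(M)\subseteq I$). The only difference is that you additionally compose with the checker for $\Inf(M)=I$ from Prop.~\ref{P-(in)finite-channels}, which is harmless but redundant, since the statement only constrains acceptance on inputs already satisfying $\Inf(M)=I$; that check is deferred in the paper to Theorem~\ref{T-global-formulas}.
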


\begin{proof}
  By Lemma~\ref{L-intersection-index=1}, there exists a CFM $\cB$ with
  the given number of states and messages that accepts
  $(M,c_0,c_1,\dots,c_n,c)$ iff it is accepted by~$\cA_\Col$ from
  Prop.~\ref{P-Col-regular} and by the above CFM~$\cA$, i.e., iff
  $(M,c_0)\in\Col$, and $(M,c_0,c_1,\dots,c_n,c)$ is accepted by
  $\cA$. Projecting away the function~$c_0$ gives the CFM~$\cA_1$ by
  the above two lemmas. 
\end{proof}

\noindent \emph{Proof}~(of Theorem~\ref{T-pi-alpha}). The result follows
immediately from Propositions~\ref{P-any-0-justified},
\ref{P-any-1-justified}, and Lemma~\ref{L-intersection-index=1}.\qed

\subsection{The overall construction}

\begin{thm}\label{T-local-formulas}
  Let $I\subseteq\Ch$ and let $\alpha$ be a local formula of \DLTL. 
  Then one can construct a CFM~$\cB$ of index~$1$ such that
  $(M,(c_\beta)_{\beta\in\sub(\alpha)})$ with $\Inf(M)=I$ is accepted
  by $\cB$ iff $c_\beta:V\to\{0,1\}$ is the characteristic function of
  the set of positions that satisfy~$\beta$ for all
  $\beta\in\sub(\alpha)$. 

  With $m$ the number of subformulas of the form
  $\left<\pi\right>\gamma$ and $\left<\pi\right>^{-1}\gamma$ and
  $n\in\N$ such that $s(\pi;\gamma)\le n$ for all such subformulas,
  the number of local states per process is in $2^{O(m(n+|\cP|))}$ and
  the number of control messages is in $2^{O(mn)}$. 
\end{thm}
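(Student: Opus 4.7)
The plan is to realize $\cB$ as an intersection of one small CFM per subformula of $\alpha$, together with one CFM checking the infinity constraint $\Inf(M)=I$. Concretely, for each $\beta\in\sub(\alpha)$ I would build a CFM $\cB_\beta$ of index~$1$ that reads the whole labeling $(c_\gamma)_{\gamma\in\sub(\alpha)}$ but only inspects the components for $\beta$ and its immediate subformulas, and accepts iff $c_\beta(v)=1\Longleftrightarrow M,v\models\beta$ assuming the other components are already correct. The cases correspond precisely to the constructions already in hand: $\cA_\sigma$ from Example~\ref{E-atomic-automata} when $\beta=\sigma\in\Sigma$; $\cA_\lor$ and $\cA_\neg$ from Example~\ref{E-BC-of-automata} when $\beta=\gamma_1\lor\gamma_2$ or $\beta=\neg\gamma$; $\cA_{\left<\pi\right>\gamma}$ from Theorem~\ref{T-pi-alpha} (which uses the hypothesis $\Inf(M)=I$) when $\beta=\left<\pi\right>\gamma$; and $\cA_{\left<\pi\right>^{-1}\gamma}$ from Theorem~\ref{T-pi-back-alpha} when $\beta=\left<\pi\right>^{-1}\gamma$. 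In addition, I include the CFM $\cB_\Inf$ from Proposition~\ref{P-(in)finite-channels} that checks $\Inf(M)=I$; it too has index~$1$.

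Each of the above CFMs is designed to read only a fixed small set of boolean components of the labeling, so I formally ``pad'' them to read the full labeling $(c_\gamma)_{\gamma\in\sub(\alpha)}$ by ignoring the irrelevant coordinates; this does not change the index or the state/message counts. I then invoke Lemma~\ref{L-intersection-index=1} to form the intersection of all these CFMs into a single CFM $\cB$ of index~$1$. By construction, $\cB$ accepts $(M,(c_\beta)_\beta)$ exactly when $\Inf(M)=I$ and, simultaneously for each $\beta\in\sub(\alpha)$, $c_\beta$ is locally consistent with the claimed characteristic functions of the immediate subformulas. An easy induction on the structure of $\beta$ (from $\sub$-minimal to $\alpha$ itself) then shows that this local consistency forces each $c_\beta$ to be the correct characteristic function, which is the desired equivalence.

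For the size bound, write $k=|\sub(\alpha)|$ and note that at most $m$ of the $k$ factor CFMs come from path formulas; the rest are of constant size. Each path-formula CFM contributes $2^{O(s(\pi)+|\cP|)}$ local states and $2^{O(s(\pi))}$ control messages, with $s(\pi;\gamma)\le n$ by assumption, and $\cB_\Inf$ contributes $2^{O(|\cP|)}$ local states and a constant number of messages. By the product formula of Lemma~\ref{L-intersection-index=1}, the combined CFM has $\{0,1,\dots,k\}\times\prod_{\beta}|S_p^\beta|$ local states per process, which is bounded by $2^{O(m(n+|\cP|))}$, and its set of messages is $\prod_\beta C^\beta$, bounded by $2^{O(mn)}$, as claimed.

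I do not expect a conceptual obstacle here, since the heavy lifting has already been done in the previous sections. The only points that require care are: padding each factor CFM to read the full tuple of labelings without blowing up state count; confirming that the index~$1$ property is preserved throughout (which is exactly why Lemma~\ref{L-intersection-index=1} was formulated in this form); and keeping track of the multiplicative blow-up so that one arrives at the stated $2^{O(m(n+|\cP|))}$ bound rather than something worse.
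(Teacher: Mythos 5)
Your construction is essentially the paper's: one sub-CFM per element of $\sub(\alpha)$, taken from Examples~\ref{E-atomic-automata} and \ref{E-BC-of-automata} and Theorems~\ref{T-pi-alpha} and \ref{T-pi-back-alpha}, padded to read the full tuple of labelings and combined via Lemma~\ref{L-intersection-index=1}, with correctness by induction along the subformula order. Two bookkeeping points differ from the paper, and both touch the stated size bounds, which are the actual content of the theorem. First, the paper does \emph{not} include a factor checking $\Inf(M)=I$ here: the hypothesis $\Inf(M)=I$ is a precondition on the input (it is only needed so that Theorem~\ref{T-pi-alpha} applies), and the intersection with the automaton of Proposition~\ref{P-(in)finite-channels} is deferred to Theorem~\ref{T-global-formulas}. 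Adding $\cB_\Inf$ at this stage is harmless for the ``iff'' but contributes a $2^{O(|\cP|)}$ factor that is not covered by $2^{O(m(n+|\cP|))}$ when $m=0$. Second, and more substantively, running Lemma~\ref{L-intersection-index=1} over all $k=|\sub(\alpha)|$ factors puts a flag counter $\{0,\dots,k\}$ into the local state, and $k$ can be far larger than $2^{O(m(n+|\cP|))}$ (take $\alpha$ to be a long Boolean combination with few or no path subformulas). The paper avoids this by observing that $\cA_\sigma$, $\cA_\lor$, and $\cA_\neg$ each have a single, always-accepting local state per process, so they merely restrict which labels are permitted and can be folded into the transition relation of the product without entering the flag construction at all; likewise the backward-path automata of Theorem~\ref{T-pi-back-alpha} are everywhere-accepting and need no counter, so the counter effectively ranges only over the $m$ forward-path automata. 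With these two adjustments your argument goes through and yields exactly the claimed bounds.
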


\begin{proof}
  The CFM $\cB$ has to accept $(M,(c_\beta)_{\beta\in\sub(\alpha)})$
  iff
  \begin{enumerate}[(1)]
  \item $\cA_\sigma$ accepts $(M,c_\sigma)$ for all
    $\sigma\in\sub(\alpha)\cap\Sigma$ (cf.\
    Example~\ref{E-atomic-automata}),
  \item $\cA_\lor$ accepts
    $(M,c_\gamma,c_\delta,c_{\gamma\lor\delta})$ for all
    $\gamma\lor\delta\in\sub(\alpha)$ (cf.\
    Example~\ref{E-BC-of-automata}),
  \item $\cA_\neg$ accepts $(M,c_\gamma,c_{\neg\gamma})$ for all
    $\neg\gamma\in\sub(\alpha)$ (cf.\ Example~\ref{E-BC-of-automata}),
  \item $\cA_{\left<\pi\right>\gamma}$ accepts
    $(M,c_{\alpha_1},\dots,c_{\alpha_n},c_{\gamma},c_{\left<\pi\right>\gamma})$
    for all $\left<\pi\right>\gamma\in\sub(\alpha)$ where
    $\alpha_1,\dots,\alpha_n$ are those local formulas for which
    $\{\alpha_i\}$ appears in the path expression $\pi$ (cf.\
    Theorem~\ref{T-pi-alpha}), and
  \item $\cA_{\left<\pi\right>^{-1}\gamma}$ accepts
    $(M,c_{\alpha_1},\dots,c_{\alpha_n},c_{\gamma},c_{\left<\pi\right>^{-1}\gamma})$
    for all $\left<\pi\right>^{-1}\gamma\in\sub(\alpha)$ where
    $\alpha_1,\dots,\alpha_n$ are those local formulas for which
    $\{\alpha_i\}$ appears in the path expression $\pi$ (cf.\
    Theorem~\ref{T-pi-back-alpha}). 
  \end{enumerate}
  Recall that the CFMs from~(4) all have index~$1$, their number of
  local states per process is bounded by $2^{O(n+|\cP|)}$, and their
  number of messages is bounded by $2^{O(n)}$. Hence, by
  Lemma~\ref{L-intersection-index=1}, there exists a CFM of index $1$
  that checks all the requirements in (4). Its number of states is in
  \begin{align*}
    (m+1)\cdot \prod_{\left<\pi\right>\gamma\in\sub(\alpha)}
    2^{O(n+|\cP|)} &\subseteq (m+1)\cdot
    2^{O(m(n+|\cP|))}\\
    &\subseteq 2^{O(m(n+|\cP|))}
  \end{align*}
  and the number of control messages belongs to
  \[
  \prod_{\left<\pi\right>\gamma\in\sub(\alpha)} 2^{O(s(\pi))}\subseteq
  2^{O(mn)}\ . 
  \]
  Any tuple of local states in any of the CFMs from (5) is accepting. 
  Furthermore, any of them has $2^{O(n)}$ local states per process and
  equally many messages. Hence there is a CFM with $2^{O(mn)}$ local
  states per process and equally many messages that checks all the
  requirements in (5). Furthermore, all tuples of states of this
  machine are accepting. 

  Recall that the CFMs $\cA_\sigma$, $\cA_\lor$, and $\cA_\neg$ have
  just one local state per process, i.e., they only restrict the
  labels $(\sigma,(b_\beta)_{\beta\in\sub(\alpha)})$ allowed in~$M$. 
  Hence, without additional states or messages, one can change the
  above two CFMs into a CFM $\cB$ of index~$1$ that checks (1)--(5). Its
  number of local states per process is in $2^{O(m(n+|\cP|))}$ and its
  number of messages in $2^{O(mn)}$. 
\end{proof}

\section{Translation of global formulas}\label{sec:globform}
A \emph{basic global formula} is a formula of the form $\rA\alpha$ or
$\rE\alpha$ for $\alpha$ a local formula. Then global formulas are
positive Boolean combinations of basic global formulas. 

\begin{prop}\label{P-global-formulas}
  Let $\varphi$ be a global formula and $I\subseteq\Ch$. Then one can
  construct a CFM~$\cA$  that accepts $M$ with $\Inf(M)=I$ iff
  $M\models\varphi$. 

  With $\ell$ the number of basic global subformulas of~$\varphi$, $m$ the
  number of subformulas of the form $\left<\pi\right>\beta$ and
  $\left<\pi\right>^{-1}\beta$, and $n\in\N$ such that $s(\pi;\beta)\le n$ for
  all such subformulas, the number of local states per process is in
  $2^{O(m(n+|\cP|))+|\cP|\ell}$, the number of control messages is in
  $2^{O(\ell+mn)}$, and the index is at most $|\cP^\ell|$. 
\end{prop}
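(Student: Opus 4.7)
The plan is to construct $\cA$ by combining a single tracking CFM, which simultaneously maintains the characteristic functions of all local formulas appearing in $\varphi$, with one checker gadget per basic global subformula, and then to encode the positive Boolean combination $\varphi$ inside the accepting state set of the resulting product.

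First I enumerate the basic global subformulas of $\varphi$ as $Q_1\alpha_1,\dots,Q_\ell\alpha_\ell$ with $Q_j\in\{\rE,\rA\}$. For each~$j$, Theorem~\ref{T-local-formulas} supplies a CFM of index~$1$ accepting $(M,(c_\beta)_{\beta\in\sub(\alpha_j)})$ with $\Inf(M)=I$ iff the $c_\beta$'s are the genuine characteristic functions. Since these CFMs all have index~$1$, Lemma~\ref{L-intersection-index=1} combines them into a single index-$1$ tracking CFM $\cB$ equipped with a common labelling $(c_\beta)_\beta$ that tracks each $\alpha_j$ at every node while enforcing $\Inf(M)=I$; by the size bound in Theorem~\ref{T-local-formulas}, $\cB$ has $2^{O(m(n+|\cP|))}$ states per process and $2^{O(mn)}$ messages.

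Next, for each $\psi_j=Q_j\alpha_j$ I attach a checker gadget taken from Example~\ref{E-E-and-A} that inspects only the single bit $c_{\alpha_j}$ already tracked by $\cB$: the $\rE$-gadget has $2$ states per process and index~$|\cP|$, while the $\rA$-gadget has one state per process and index~$1$. Using Proposition~\ref{P-intersection-index-large}, I form the product of $\cB$ with the $\ell$ checker gadgets. At the end of a run, the $j$-th gadget component determines a truth value $b_j\in\{0,1\}$ indicating whether $\psi_j$ holds. I then declare a global state accepting iff the $\cB$-component is accepting and the induced tuple $(b_1,\dots,b_\ell)$ satisfies~$\varphi$; projecting away the labelling $(c_\beta)_\beta$ yields the CFM $\cA$.

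The main obstacle is obtaining the index bound $|\cP|^\ell$. Each $\rE$-gadget contributes a union of $|\cP|$ products to its ``true'' slice and a singleton product to its ``false'' slice, and dually for the $\rA$-gadgets, so any individual satisfying truth tuple pins down a product of per-gadget sets of total index at most~$|\cP|^\ell$. A careful decomposition of the acceptance condition of $\varphi$ into such products, distributing the Boolean connectives of~$\varphi$ before enumerating cylinders rather than naively summing over all satisfying truth tuples, keeps the overall index at~$|\cP|^\ell$; this is the key counting step in the proof. The state count $2^{O(m(n+|\cP|))+|\cP|\ell}$ then follows by multiplying $\cB$'s $2^{O(m(n+|\cP|))}$ states with the $2^\ell$ per-process gadget states and the bookkeeping factor $|\cP|^\ell=2^{O(|\cP|\ell)}$ inherited from Proposition~\ref{P-intersection-index-large}, while the message count $2^{O(\ell+mn)}$ combines $\cB$'s $2^{O(mn)}$ messages with the $\ell$ extra bits sent by the checker gadgets.
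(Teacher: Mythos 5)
Your construction takes a genuinely different route from the paper's. The paper writes the positive Boolean combination $\varphi$ as a disjunction over at most $2^\ell$ subsets $H$ of basic global subformulas, builds for each conjunct a CFM $\cA_H$ (the intersection, via Proposition~\ref{P-intersection-index-large}, of the tracking CFM from Theorem~\ref{T-local-formulas} with the \emph{literal} $\rE$- and $\rA$-gadgets of Example~\ref{E-E-and-A} for the members of $H$), and finally takes the disjoint union of the machines $\cA_H$. You instead build a single product and push the Boolean structure of $\varphi$ into the accepting set. That is a legitimate alternative which avoids the $2^\ell$-fold duplication of components, but as written it has genuine gaps.

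First, the $\rA$-gadget of Example~\ref{E-E-and-A} cannot ``determine a truth value'': it has a single local state and only admits transitions reading the bit $1$, so whenever some $\rA\alpha_j$ fails the whole product has \emph{no run at all}, and $M$ is rejected even if $\varphi$ holds via another disjunct (take $\varphi=\rA\alpha_1\lor\rE\alpha_2$). You must replace it by a two-state monitor that always runs and records whether a $0$ was ever seen, i.e., the $\rE$-gadget applied to the complemented bit; your stated ``one state per process, index~$1$'' is incompatible with reading off $b_j$. Second, Proposition~\ref{P-intersection-index-large} only yields the intersection language; once you redefine $F$ on the product you must re-examine acceptance, in particular argue that a composite local state combining the desired cofinal states of $\cB$ and of each gadget is itself cofinal (this works because the gadget components are eventually constant along each process, but it is not a black-box consequence of the cited proposition, whose flag construction is tailored to conjunctive acceptance). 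Third, what you yourself call the key counting step is asserted rather than carried out: one needs that the index of a union of accepting sets is at most the sum and that of an intersection at most the product of the indices, whence a positive Boolean combination with $\ell$ leaves, each of index at most $|\cP|$, has index at most $|\cP|^\ell$ (for $|\cP|\ge 2$). The paper sidesteps all three points by keeping each $\cA_H$ a pure conjunction and paying for the disjunction with the disjoint union.
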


\begin{proof}
  Let~$H$ be the set of basic global subformulas of~$\varphi$. Let
  $\beta=\bigwedge\{\alpha\mid \rE\alpha\in H\text{ or }\rA\alpha\in
  H\}$. Using Proposition~\ref{P-intersection-index-large}, one can
  construct a CFM that accepts $(M,(c_\gamma)_{\gamma\in\sub(\beta)})$
  with $\Inf(M)=I$ iff
  \begin{enumerate}[$\bullet$]
  \item $c_\gamma$ is the characteristic function of the set of
    positions satisfying $\gamma$ for all $\gamma\in\sub(\beta)$
    (Thm.~\ref{T-local-formulas})
  \item $M\models\rA\alpha$ for all $\rA\alpha\in H$
    (Example~\ref{E-E-and-A})
  \item $M\models\rE\alpha$ for all $\rE\alpha\in H$
    (Example~\ref{E-E-and-A}). 
  \end{enumerate}
  Recall that the CFM checking $c_\gamma$ as well as those checking
  $\rA\alpha$ all have index~$1$ while the CFM for $\rE\alpha$ have
  index~$|\cP|$. Hence the number of local states per process of the
  resulting CFM belongs to $1+(|H|+1)\cdot 2^{O(m(n+|\cP|))} \cdot
  2^{|H|}\cdot |\cP|^{|H|} \subseteq 2^{O(m(n+|\cP|))+|\cP|\ell}$, its
  number of messages is in $2^{O(mn)}$, and its index is at most
  $|\cP^H|$. Let $\cA_H$ denote the projection of this CFM to the set
  of MSCs (i.e., we project away the labelings $c_\gamma$). Then
  $\cA_H$ accepts an MSC $M$ with $\Inf(M)=I$ iff $M\models\psi$ for
  all $\psi\in H$. 

  Now the CFM $\cA$ is the disjoint union of at most $2^\ell$ many
  CFMs of the form $\cA_H$. 
\end{proof}

\begin{thm}\label{T-global-formulas}
  Let $\varphi$ be a global formula of PDL. Then one can construct a CFM~$\cA$
  that accepts $M$ iff $M\models\varphi$. 

  With $\ell$ the number of basic global subformulas of~$\varphi$, $m$
  the number of subformulas of the form $\left<\pi\right>\beta$, and
  $n\in\N$ such that $s(\pi;\beta)\le n$ for all such subformulas, the
  number of local states per process is in
  $2^{O(m(n+|\cP|)+|\cP|\ell+|\cP|^2)}$ and the number of control
  messages is in $2^{O(\ell+mn+|\cP|^2)}$. 
\end{thm}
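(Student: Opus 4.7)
The plan is to reduce to Proposition~\ref{P-global-formulas} by enumerating over all possible values of $\Inf(M)$. That proposition already provides, for each fixed $I\subseteq\Ch$, a CFM $\cA_I$ of the stated size that accepts an MSC $M$ iff $\Inf(M)=I$ and $M\models\varphi$. Since every MSC $M$ has a unique $\Inf(M)\subseteq\Ch$, we get the partition
\[
\{M\mid M\models\varphi\}=\biguplus_{I\subseteq\Ch}L(\cA_I)\,,
\]
so $\cA$ can be taken to be the disjoint union of the family $(\cA_I)_{I\subseteq\Ch}$.

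The disjoint union of CFMs is carried out in the standard way: on each process $p$, adjoin a fresh initial state $\iota_p$ that nondeterministically forks into the local initial state of exactly one component $\cA_I$; and tag every control message by the index $I$ of the component that emits it, so that the set of control messages becomes $\biguplus_{I\subseteq\Ch}C^I$. Because on any channel the tag emitted by a send must match the tag consumed by the matching receive, all processes are forced to commit to the same component $\cA_I$ from the very beginning. Hence a run of $\cA$ is nothing but a run of some~$\cA_I$, and $L(\cA)=\bigcup_I L(\cA_I)$ as required.

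For the complexity, note that $|\Ch|=|\cP|(|\cP|-1)$ so $I$ ranges over a family of size $2^{|\Ch|}=2^{O(|\cP|^2)}$. Proposition~\ref{P-global-formulas} gives each $\cA_I$ with $2^{O(m(n+|\cP|))+|\cP|\ell}$ local states per process and $2^{O(\ell+mn)}$ messages. Taking the tagged disjoint union multiplies both counts by the number of components, so the final CFM has
\[
2^{O(|\cP|^2)}\cdot 2^{O(m(n+|\cP|))+|\cP|\ell}\;=\;2^{O(m(n+|\cP|)+|\cP|\ell+|\cP|^2)}
\]
local states per process and $2^{O(|\cP|^2)}\cdot 2^{O(\ell+mn)}=2^{O(\ell+mn+|\cP|^2)}$ control messages, matching the claim.

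There is essentially no conceptual obstacle, since Proposition~\ref{P-global-formulas} already encapsulates all the intricate work: the only thing to verify here is that message tagging really forces a global choice of component, which is immediate from the definition of a run of a CFM (the message $\mu(v)$ emitted at $v$ must equal the message consumed at $\msg(v)$). Thus the proof is a short verification, and the stated bounds follow by the arithmetic above.
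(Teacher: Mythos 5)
There is a genuine gap, and it comes from a misreading of Proposition~\ref{P-global-formulas}. That proposition constructs, for each $I\subseteq\Ch$, a CFM $\cA_I$ that ``accepts $M$ \emph{with} $\Inf(M)=I$ iff $M\models\varphi$'': this is a conditional guarantee, meaning that \emph{for those $M$ satisfying $\Inf(M)=I$}, acceptance is equivalent to $M\models\varphi$. It says nothing about how $\cA_I$ behaves on MSCs $M'$ with $\Inf(M')\neq I$ --- and indeed it cannot, since the underlying forward-path automaton of Theorem~\ref{T-pi-alpha} uses $I$ to define its acceptance condition (which local components must sit in $(\emptyset,\emptyset,\cdot)$ cofinally) and may well accept wrongly labeled inputs whose channel usage differs from $I$. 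Consequently your claimed identity $\{M\mid M\models\varphi\}=\biguplus_{I\subseteq\Ch}L(\cA_I)$ does not follow: each $L(\cA_I)$ may contain junk MSCs $M'$ with $\Inf(M')\neq I$ and $M'\not\models\varphi$, and your disjoint union would then accept them. The same conditional phrasing appears in Theorem~\ref{T-local-formulas}, Theorem~\ref{T-pi-alpha}, and Proposition~\ref{P-any-1-justified}, so this is not an incidental wording choice.

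The repair is exactly the step the paper inserts: before taking the disjoint union, intersect each $\cA_I$ with the CFM $\cB_I$ of Proposition~\ref{P-(in)finite-channels}, which accepts $M$ iff $\Inf(M)=I$, using Proposition~\ref{P-intersection-index-large}. The resulting $\Cc_I$ accepts precisely $\{M\mid \Inf(M)=I\text{ and }M\models\varphi\}$, and only then does the partition over $I$ and the disjoint union argument go through. This extra intersection costs a factor of $3\cdot 2^{O(|\cP|)}$ in local states (for $\cB_I$) times the index $|\cP^{\ell}|$ of $\cA_I$, which is absorbed by the stated bounds; your accounting of the $2^{O(|\cP|^2)}$ factor coming from the $2^{|\Ch|}$ components of the disjoint union is otherwise correct, as is your mechanism for forcing all processes into the same component. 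So the arithmetic survives, but the correctness argument as written does not.
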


\begin{proof}
  Let, for $I\subseteq\Ch$, $\cA_I$ denote the CFM from
  Prop.~\ref{P-global-formulas} and $\cB_I$ that from
  Prop.~\ref{P-(in)finite-channels}. Using
  Prop.~\ref{P-intersection-index-large}, one can construct a
  CFM~$\Cc_I$ accepting $L(\cA_I)\cap L(\cB_I)$. The number of local
  states per process of this CFM is $3\cdot 2^{O(|\cP|)}\cdot
  2^{O(m(n+|\cP|)+|\cP|\ell)}\cdot|\cP^H|$. 

  Then the disjoint union $\cA$ of all these CFMs $\Cc_I$ for
  $I\subseteq\Ch$ has all the desired properties. 
\end{proof}

\section{Model checking}\label{sec:modelcheck}

\subsection{CFMs vs.\ PDL specifications}

We aim at an algorithm that decides whether, given a global formula
$\varphi$ and a CFM $\cA$, every MSC $M\in L(\cA)$ satisfies
$\varphi$. The undecidability of this problem can be shown following,
e.g., the proof in \cite{MenR04} (that paper deals with Lamport
diagrams and a fragment LD$_0$ of \DLTL, but the proof ideas can be
easily transferred to our setting). To gain decidability, we follow
the successful approach of, e.g., \cite{MadM01,GenMSZ02,GenKM06}, and
restrict attention to existentially $B$-bounded MSCs from~$L(\cA)$.

For a finite or infinite word $w\in\Sigma^\infty$ and $a\in\Sigma$,
let~$|w|_a$ denote the number of occurrences of $a$ in $w$. For $0\le
i\le j<|w|$, the infix $w[i,j]$ is the factor of $w$ starting in
position~$i$ and ending in position~$j$, i.e., $w=u\,w[i,j]\,v$ with
$|u|=i$ and $|w[i,j]|=j-i+1$. If $|w| >i$, then we write $w(i)$ for
$w[i,i]$, the letter no.~$i+1$ in $w$ (note that~$w(0)$ is the first
letter of~$w$).

Let $M=(V,\le,\lambda)$ be an MSC. A \emph{linearization} of $M$ is a
linear order ${\preceq}\supseteq{\le}$ on $V$ of order type at most
$\omega$ (i.e., also with respect to $\preceq$, any node $v\in V$
dominates a finite set). Since equally-labeled nodes of $M$ are
comparable, we can safely identify a linearization of $M$ with a word
from~$\Sigma^\infty$.

A word $w\in\Sigma^\infty$ is \emph{$B$-bounded} (where $B\in\N$) if,
for any $(p,q)\in\Ch$ and any finite prefix $u$ of $w$,
$0\le|u|_{p!q}-|u|_{q?p}\le B$. An MSC $M$ is \emph{existentially
  $B$-bounded} if it admits a $B$-bounded linearization. Intuitively,
this means that the MSC $M$ can be scheduled in such a way that none
of the channels $(p,q)$ ever contains more than $B$ pending messages.

\begin{lem}
  A $B$-bounded word $w\in\Sigma^\infty$ is a linearization of some
  MSC $M$ iff, for any $(p,q)\in\Ch$, any finite prefix of $w$ can be
  extended to a finite prefix $u$ of $w$ such that
  \begin{enumerate}[(1)]
  \item $|u|_{p!q}=|u|_{q?p}$ or
  \item the last letter of $u$ is $p!q$.
  \end{enumerate}
\end{lem}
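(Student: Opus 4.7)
The plan is to prove both directions separately, with the backward direction requiring more work.

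For the forward direction, I would assume $w$ is a linearization of some MSC $M=(V,\le,\lambda)$ and use the MSC axiom $|\lambda^{-1}(p!q)|=|\lambda^{-1}(q?p)|$. Fix a channel $(p,q)$ and a finite prefix~$v$ of~$w$. Case split on whether $(p,q)$ is used finitely or infinitely often in~$M$. If $|\lambda^{-1}(p!q)|$ is finite, extend $v$ to any prefix $u$ long enough to include every $p!q$- and $q?p$-event of $w$; since the totals agree, $|u|_{p!q}=|u|_{q?p}$ and (1) holds. If $|\lambda^{-1}(p!q)|$ is infinite, pick the first $p!q$-event at a position beyond $|v|$ and take $u$ to be the prefix ending at that event, so that the last letter of~$u$ is $p!q$ and (2) holds.

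For the backward direction, I would construct an MSC from $w$ and verify the four MSC axioms. Let $V$ be the set of positions of $w$, $\lambda$ the induced labeling, $\proc$ the successor on each process line (from the order of~$w$), and $\msg$ the FIFO matching (the $k$-th $p!q$-event paired with the $k$-th $q?p$-event). Set $\mathord{\le}=(\proc\cup\msg)^*$. Linear order of $V_p$ and finiteness of downward-closures of single nodes follow because $\le$ is contained in the linear order given by $w$ (sends precede their matched receives in $w$, since $w$ is $B$-bounded, i.e., $|u|_{p!q}\ge|u|_{q?p}$ for every prefix $u$).

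The main hurdle is verifying $|\lambda^{-1}(p!q)|=|\lambda^{-1}(q?p)|$; the opposite inequality $|\lambda^{-1}(q?p)|\le|\lambda^{-1}(p!q)|$ is immediate from $B$-boundedness in the limit. Suppose, for contradiction, that strictly more $p!q$-events than $q?p$-events occur in $w$. If $|\lambda^{-1}(p!q)|=\infty$, then $B$-boundedness forces $|\lambda^{-1}(q?p)|\ge|\lambda^{-1}(p!q)|-B=\infty$, contradiction. So $|\lambda^{-1}(p!q)|$ is finite; take a prefix $v$ that contains every $p!q$-event and lies strictly beyond the last one. Apply the extension hypothesis to~$v$: any extension $u\supseteq v$ has $|u|_{p!q}=|\lambda^{-1}(p!q)|$, which already exceeds $|u|_{q?p}\le|\lambda^{-1}(q?p)|$, ruling out~(1); and no letter of $u$ past position $|v|$ is $p!q$, ruling out~(2). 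This contradicts the extension hypothesis, establishing the desired equality.

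Finally, having verified all MSC axioms and the construction of $(V,\le,\lambda)$ as a genuine MSC, I would conclude by observing that the linear order of $w$ refines $\le$ (since both $\proc$ and $\msg$ respect the order of positions in $w$, the latter by $B$-boundedness), so that $w$ is a linearization of the constructed~$M$. The delicate step is the matching-axiom argument above, where one has to exploit $B$-boundedness in the infinite case and the extension hypothesis in the finite case; the remaining verifications are routine.
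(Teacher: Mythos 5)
Your proof is correct and follows essentially the same route as the paper: the forward direction splits on whether the channel is used finitely or infinitely often in $w$, and the backward direction builds the MSC on the positions of $w$ with the FIFO matching, reducing everything to the verification of the axiom $|\lambda^{-1}(p!q)|=|\lambda^{-1}(q?p)|$. The one place you genuinely diverge is in that verification. The paper assumes $|w|_{p!q}>|w|_{q?p}$, starts from a prefix containing all $q?p$-events with positive difference, and iterates the extension hypothesis to produce prefixes whose difference $|u|_{p!q}-|u|_{q?p}$ grows without bound, contradicting $B$-boundedness. You instead dispose of the case $|w|_{p!q}=\infty$ directly from the bound $|u|_{p!q}-|u|_{q?p}\le B$, and in the remaining finite case you apply the extension hypothesis exactly once, to a prefix ending strictly after the last $p!q$-event, for which neither condition (1) nor (2) can hold --- a direct contradiction with the hypothesis itself rather than with $B$-boundedness. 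Your version is arguably a bit cleaner, since it sidesteps the small subtlety in the iterative argument that each extension step must strictly increase the difference. The remaining verifications you defer (that the constructed relations coincide with $\proc$ and $\msg$ as defined for MSCs, and that the natural order on positions refines $\le$ so that $w$ is indeed a linearization) are exactly the ones the paper also labels as easily verified.
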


\begin{proof}
  First suppose that $w$ is a linearization of some MSC. Then
  $|w|_{p!q}=|w|_{q?p}$. If this number is finite, we can extend any
  finite prefix to some finite prefix satisfying (1). Otherwise, any
  suffix contains at least one occurrence of $p!q$, so any prefix can
  be extended to some larger prefix ending with $p!q$.

  Conversely suppose that any finite prefix can be extended to a
  finite prefix satisfying (1) or~(2). We construct from $w$ an MSC as
  follows:
  \begin{enumerate}[$\bullet$]
  \item the set of nodes equals $V=\{v\in\N\mid v<|w|\}$,
  \item for $v\in V$ let $\lambda(v)=w(v)$,
  \item let $(i,j)\in\proc'$ iff $0\le i<j<|w|$ and there exists a
    process $p\in\cP$ with $\lambda(i),\lambda(j)\in\Sigma_p$ and, for
    all $k$ with $i\le k<j$ and $\lambda(k)\in \Sigma_p$, we have
    $i=k$,
  \item let $(i,j)\in\msg'$ iff $i,j\in V$ and there exists a channel
    $(p,q)\in\Ch$ such that $w(i)=p!q$, $w(j)=q?p$, and
    $|w[0,i]|_{p!q}=|w[0,j]|_{q?p}$,
  \item then set ${\preceq}=(\msg'\cup\proc')^*\subseteq V^2$.
  \end{enumerate}
  Suppose $(i,j)\in\msg'$ and $j<i$. Then
  $|w[0,j]|_{p!q}-|w[0,j]|_{q?p}<|w[0,i]|_{p!q}-|w[0,j]|_{q?p}=0$,
  contradicting the $B$-boundedness of $w$. Hence $\msg'$ and $\proc'$
  are contained in $\le$ proving that $\preceq$ is a partial order
  on~$V$. Since $\preceq$ is contained in the natural order $\le$ on
  the set of natural numbers~$V$, the word $w$ is a linearization of
  $M=(V,\preceq,\lambda)$. It therefore remains to be shown that $M$
  is an MSC:
  \begin{enumerate}[$\bullet$]
  \item It is easily verified that $\msg=\msg'$ and $\proc=\proc'$
    implying ${\preceq}=(\msg\cup\proc)^*$.
  \item By the definition of $\proc'$, any two nodes $i$ and $j$ with
    $P(i)=P(j)$ are ordered by~$\preceq$.
  \item Let $(p,q)\in\Ch$ be some channel. Since $w$ is $B$-bounded,
    we have $|w|_{p!q}\ge|w|_{q?p}$. Now suppose
    $|w|_{p!q}>|w|_{q?p}$. Then there are only finitely many
    occurrences of $q?p$; let $u_1$ with $|u_1|_{p!q}-|u_1|_{q?p}>0$
    be a finite prefix of $w$ that contains all occurrences of~$q?p$.
    Then by our assumption on $w$, we can extend $u_1$ to a finite
    prefix $u_2$ of $w$ whose last letter is $p!q$. Hence
    $|u_2|_{p!q}-|u_2|_{q?p}>|u_1|_{p!q}-|u_1|_{q?p}$.  Inductively,
    we find a finite prefix $u$ with $|u|_{p!q}-|u|_{q?q}>B$,
    contradicting the $B$-boundedness of $w$.  Hence
    $|\lambda^{-1}(p!q)|=|w|_{p!q}=|w|_{q?p}=|\lambda^{-1}(q?p)|$
    which finishes the proof that $(V,\preceq,\lambda)$ is an MSC.
  \end{enumerate}
\end{proof}

We next construct, from a CFM $\cA=(C,(\cA_p)_{p \in \cP},F)$ and a
bound $B\in\N$, a finite transition system over $\Sigma$ with multiple
B\"uchi-acceptance conditions that accepts the set of $B$-bounded
linearizations of MSCs from $L(\cA)$: \cb{
  \begin{enumerate}[$\bullet$]
  \item A configuration is a tuple $((s_p)_{p\in\cP},\chi,(p,q))$ with
    the current local states $s_p\in S_p$ for all $p\in \cP$, the
    channel contents $\chi:\Ch\to C^*$ with $|\chi(p',q')|\le B$ for
    all $(p',q')\in \Ch$, and the last active channel $(p,q)\in\Ch$.
  \item The initial configuration is the tuple
    $((\iota_p)_{p\in\cP},\chi,(p,q))$ with $\chi(p',q')=\varepsilon$
    for all $(p',q')\in\Ch$, and where $(p,q)\in\Ch$ is an arbitrary
    but fixed channel, i.e., the local machines are in their initial
    state and all channels are empty.
  \item We have a transition
    \[
    ((s^1_p)_{p\in\cP},\chi^1,(p^1,q^1)) \xrightarrow{a}
    ((s^2_p)_{p\in\cP},\chi^2,(p^2,q^2))
    \]
    for an action $a\in\Sigma_p$ iff there exists a control message
    $c\in C$ such that
    \begin{enumerate}[(T1)]
    \item $s^1_p\xrightarrow{a,c}_ps^2_p$ is a transition of the local
      machine $\cA_p$ and $s^1_q=s^2_q$ for $q\neq p$.
    \item Send events: if $a=p!q$, then $\chi^2(p,q)=\chi^1(p,q)\, c$
      (i.e., message $c$ is inserted into the channel from $p$ to $q$)
      and $\chi^1(p',q')=\chi^2(p',q')$ for $(p',q')\not=(p,q)$ (i.e.,
      all other channels are unchanged)
    \item Receive events: if $a=p?q$, then $\chi^1(q,p)=c\,
      \chi^2(q,p)$ (i.e., message $c$ is deleted from the channel from
      $q$ to $p$) and $\chi^1(q',p')=\chi^2(q',p')$ for
      $(q',p')\not=(q,p)$ (i.e., all other channels are unchanged)
    \item $(p^2,q^2)$ is the channel that $a$ writes to or reads from.
    \end{enumerate}
  \end{enumerate}
} A finite or infinite path
$((s^i_p)_{p\in\cP},\chi^i,(p^i,q^i))_{0\le i<\varkappa}$ (for some
$\varkappa\in\N\cup\{\omega\}$) in this transition system is
\emph{successful} if
\begin{enumerate}[(S1)]
\item there exists a tuple $(f_p)_{p\in\cP}\in F$ such that, for all
  $p\in\cP$ and $0\le i<\varkappa$, there exists $i\le j<\varkappa$
  with $s_p^j=f_p$ and
\item for all $(p,q)\in\Ch$ and $0\le i<\varkappa$, there exists $i\le
  j<\varkappa$ such that \cb{$\chi^j(p,q)=\varepsilon$} or
  $(p^j,q^j)=(p,q)$.
\end{enumerate}

\begin{lem}\label{seqsystem}
  Let $w\in\Sigma^\infty$. Then the following are equivalent:
  \begin{enumerate}[\em(i)]
  \item $w$ is the label of some successful path in the above
    transition system.
  \item $w$ is a $B$-bounded linearization of some MSC from $L(\cA)$.
  \end{enumerate}
\end{lem}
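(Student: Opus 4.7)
The plan is to prove both directions by regarding configuration $i$ in the transition system as the state of $\cA$ immediately after processing the prefix $w[0,i-1]$.

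For (ii)$\Rightarrow$(i): suppose $w$ is a $B$-bounded linearization of $M\in L(\cA)$ with accepting run $(\rho,\mu)$. Enumerate the events of $M$ as $v_0,v_1,\dots$ in $w$-order (so $\lambda(v_i)=w(i)$) and define configurations by letting $s^i_p$ equal $\rho$ at the most recent $V_p$-event strictly before $v_i$ (or $\iota_p$ if none), $\chi^i(p,q)$ list the control messages $\mu(u)$ for sends $u$ on channel $(p,q)$ at positions $<i$ whose matched receive lies at position $\ge i$, and $(p^i,q^i)$ be the channel of $v_{i-1}$. Transition conditions (T1)--(T4) are just the run conditions for $(\rho,\mu)$, and $B$-boundedness of $w$ forces $|\chi^i(p,q)|\le B$. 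For (S1), if $f_p\in\cofinal_\rho(p)$, then by definition $s^i_p=f_p$ for cofinally many~$i$. For (S2), fix a channel $(p,q)$: if $p!q$ occurs only finitely often in $w$, then the MSC-equality $|\lambda^{-1}(p!q)|=|\lambda^{-1}(q?p)|$ guarantees the channel is cofinally empty; otherwise $(p^j,q^j)=(p,q)$ holds cofinally.

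For (i)$\Rightarrow$(ii): given a successful path labeled $w$, I first invoke the preceding lemma to show $w$ is a linearization of some MSC. $B$-boundedness is immediate from the constraint $|\chi^i(p,q)|\le B$. To verify the lemma's extension property, fix a prefix length $i$ and a channel $(p,q)$, and use (S2) to obtain $j\ge i$ with $\chi^j(p,q)=\varepsilon$ or $(p^j,q^j)=(p,q)$. The first case matches lemma-condition~(1) via the FIFO invariant $|\chi^j(p,q)|=|w[0,j-1]|_{p!q}-|w[0,j-1]|_{q?p}$. In the second case, either the triggering letter at step $j-1\to j$ is $p!q$ (giving lemma-condition~(2)) or it is $q?p$; in the receive subcase I iterate, distinguishing whether $p!q$ occurs only finitely often in $w$ (in which case $B$-boundedness plus repeated application of (S2) forces the channel eventually empty, so some later $j'$ witnesses condition~(1)) or infinitely often (so the $p!q$-indices are themselves cofinal and some such index $\ge i$ witnesses condition~(2)). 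Once an MSC $M$ with $w$ as a linearization is in hand, define $\rho(v_i)=s^{i+1}_{P(v_i)}$ and let $\mu(v_i)$ be the control message carried by the $i$-th transition; (T1) gives the local-run condition, the FIFO matching ensured by (T2)/(T3) gives $\mu(v)=\mu(\msg(v))$, and (S1) yields an accepting global state in $F$.

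The main obstacle lies in direction (i)$\Rightarrow$(ii): clause $(p^j,q^j)=(p,q)$ in (S2) is symmetric between sends and receives, while condition~(2) of the preceding lemma demands specifically a final $p!q$. The case split described above, using $B$-boundedness to exclude the pathological situation where the channel is activated cofinally by receives alone but never empties, is what bridges the two formulations.
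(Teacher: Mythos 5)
Your proof follows the paper's argument essentially step for step: the same explicit construction of configurations from an accepting run (states after the last $V_p$-event, channel contents as the pending control messages, last active channel) for (ii)$\Rightarrow$(i), and the same appeal to the preceding linearization lemma together with the $|\chi^i(p,q)|\le B$ constraint for (i)$\Rightarrow$(ii). The one place you go beyond the paper is the case analysis bridging (S2)'s clause $(p^j,q^j)=(p,q)$ to the preceding lemma's requirement that the prefix end specifically in $p!q$ --- a gap the paper's proof silently elides --- and your resolution (finitely many sends forces only finitely many receives, so (S2) must then be witnessed by an empty channel, i.e.\ condition~(1); infinitely many sends gives cofinally many prefixes ending in $p!q$, i.e.\ condition~(2)) is correct.
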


\begin{proof}
  To prove the implication \textit{(ii)}$\Rightarrow$\textit{(i)}, let
  $M=(V,\preceq,\lambda)\in L(\cA)$ be an MSC accepted by $\cA$, let
  $w\in\Sigma^\infty$ be a $B$-bounded linearization of $M$, and let
  $(\mu,\rho)$ be a successful run of $\cA$ on $M$. Without loss of
  generality, we can assume $V=\{v\in\N\mid 0\le v<|w|\}$ and
  ${\preceq}\subseteq{\le}$ such that $w$ is the sequence of labels of
  $(V,\le,\lambda)$. For $i=0$, let $((s_p^i),\chi^i,(p^i,q^i))$ be
  the initial configuration of the transition system. Now let $i>0$.
  For $p\in\cP$, let $s^i_p=\iota_p$ if there is no $0\le j<i$ with
  $w(j)\in\Sigma_p$; otherwise set $s^i_p=\rho(j)$ for $j$ the maximal
  natural number with $j<i$ and $w(j)\in\Sigma_p$. For $(p,q)\in\Ch$,
  set $\chi^i(p,q)=\mu(j_1)\,\mu(j_2)\dots\mu(j_k)$ where $0\le
  j_1<j_2<\dots<j_k<i$ is the sequence of those nodes from $V$ with
  $\lambda(j_\ell)=p!q$ and $\msg(j_\ell)\ge i$ (since $w$ is
  $B$-bounded, we have $0\le k\le B$). Finally, $(p^i,q^i)$ is the
  channel that the action $w(i-1)$ writes to or reads from. Then it
  can be checked that the sequence of these configurations
  $((s_p^i),\chi^i,(p^i,q^i))_{0\le i<|w|}$ forms a $w$-labeled path
  in the transition system. We show that it is successful:
  \begin{enumerate}[(S1)]
  \item Since $(\rho,\mu)$ is successful, there exists
    $(f_p)_{p\in\cP}\in F$ such that for all $p\in\cP$ and all $v\in
    V$ with $\lambda(v)\in\Sigma_p$, there exists $v'\in V$ with
    $\lambda(v')\in\Sigma_p$, $v\preceq v'$, and $\rho(v')=f_p$ (or
    $f_p=\iota_p$ if no such node $v$ exists). Now let $0\le i<|w|$
    and let $v<i$ denote the maximal natural number with
    $w(v)\in\Sigma_p$ (the case that no such number exists is left to
    the reader). Then there exists $v'\in V$ with
    $\lambda(v')\in\Sigma_p$, $v\preceq v'$, and $\rho(v')=f_p$.
    Because of the maximality of $v$, we obtain $i< v'$. Furthermore,
    $s_p^{v'+1}=\rho(v')=f_p$.
  \item Let $0\le i<|w|$. Since $w$ is $B$-bounded, the previous lemma
    implies the existence of $i\le j<|w|$ such that
    $|w[0,j]|_{p!q}=|w[0,j]|_{q?p}$ or the last letter of $w[0,j]$ is
    $p!q$. Hence $\chi^{j+1}(p,q)=\varepsilon$ or
    $(p^{j+1},q^{j+1})=(p,q)$.
  \end{enumerate}

  Conversely assume \textit{(i)}. Since all the channels in the
  transition system contain at most $B$ messages, the word $w$ is
  $B$-bounded. Since the $w$-labeled path satisfies (S2), the word~$w$
  is, by the previous lemma, a linearization of some MSC. Now, using
  (S1) it can be verified similarly to above that this MSC is accepted
  by~$\cA$.
\end{proof}

\begin{thm}\label{T-decision}
  The following problem is PSPACE-complete:\\
  Input: $B\in\N$ (given in unary), CFM $\cB$, and a global formula $\varphi\in\DLTL$.\\
  Question: Is there an existentially $B$-bounded MSC $M\in L(\cB)$
  with $M\models\varphi$?
\end{thm}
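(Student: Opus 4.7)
The plan is to establish PSPACE membership and PSPACE hardness separately. For the upper bound I would first apply Theorem~\ref{T-global-formulas} to obtain a CFM $\cA_\varphi$ whose language is precisely the set of models of~$\varphi$. Although $\cA_\varphi$ has exponentially many local states per process, each individual state is a structured tuple of subsets of subformulas of~$\varphi$ together with subsets of states of the NFAs for path expressions, and is therefore storable in space polynomial in $|\varphi|$ and~$|\cP|$. I then form the intersection of $\cB$ with $\cA_\varphi$ via Proposition~\ref{P-intersection-index-large} to obtain a CFM~$\mathcal{D}$ recognising $L(\cB)\cap L(\cA_\varphi)$, whose local states remain polynomially representable. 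The problem then reduces to deciding whether $L(\mathcal{D})$ contains an existentially $B$-bounded MSC.

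Next I would apply Lemma~\ref{seqsystem} to $\mathcal{D}$ and~$B$. A configuration of the associated transition system consists of one local state of $\mathcal{D}$ per process, channel contents of total length at most $B\cdot|\Ch|$, and the last-active channel; since $B$ is given in unary, the whole configuration has polynomial size in $|\varphi|+|\cB|+B$. Existence of a successful path translates into finitely many B\"uchi-style fairness obligations, one per process for~(S1) and one per channel for~(S2). I would check satisfaction by nondeterministically guessing a reachable configuration, then guessing a cycle through it that serves every pending obligation, using only polynomial-size counters to record which obligations remain. This places the problem in NPSPACE, hence in PSPACE by Savitch's theorem.

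For PSPACE hardness I would reduce from satisfiability of propositional LTL, which is well-known to be PSPACE-complete. Since PDL naturally subsumes LTL under the correspondence $\mathbf{X}\alpha \leftrightarrow \left<\proc\right>\alpha$ and $\alpha\,\mathcal{U}\,\beta \leftrightarrow \left<(\{\alpha\};\proc)^{*}\right>\beta$, taking $\cB$ a trivial CFM over a fixed pair of processes and $B=1$ gives a polynomial-time reduction from LTL satisfiability to our problem. The main obstacle I anticipate in the membership argument is tracking the B\"uchi acceptance of $\cA_\varphi$ cofinally on every process in parallel with simulating the NFAs for all path expressions; this is handled by integrating the flag-counter bookkeeping of Lemma~\ref{L-intersection-index=1} into the nondeterministic search, which keeps the total configuration polynomial in size.
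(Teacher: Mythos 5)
Your proposal follows essentially the same route as the paper: translate $\varphi$ into the exponential-size CFM $\cA_\varphi$ via Theorem~\ref{T-global-formulas}, intersect with $\cB$ via Proposition~\ref{P-intersection-index-large}, observe that each configuration of the transition system of Lemma~\ref{seqsystem} is storable in polynomial space because $B$ is unary and the exponentially many local states are polynomially representable, and conclude with hardness from LTL. The only cosmetic difference is that you discharge the $|\cP|+|\Ch|$ fairness obligations by a guessed lasso with obligation bookkeeping (plus Savitch), whereas the paper folds them into a single B\"uchi condition by adding a counter up to $b+1$ in the style of Choueka's flag construction; these are interchangeable, so the argument is correct.
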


\begin{proof}
  Theorem~\ref{T-global-formulas} allows to build a CFM $\cA_\varphi$
  that accepts $M$ iff $M\models\varphi$. From
  Proposition~\ref{P-intersection-index-large}, we then obtain a CFM
  $\cA$ with $L(\cA)=L(\cB)\cap L(\cA_\varphi)$, i.e., $M\in L(\cA)$
  iff $M\in L(\cB)$ and $M\models\varphi$. To decide the existence of
  some existentially $B$-bounded MSC in $L(\cA)$, it suffices to
  decide whether the above transition system has some successful path. 
  Recall that such a path has to simultaneously satisfy
  $b=|\cP|+|\Ch|$ many B\"uchi-conditions. Extending the
  configurations of the transition system by a counter that counts up
  to $b+1$ allows to have just one B\"uchi-condition~\cite{Cho74}. 
  Note that any configuration of the resulting transition system can
  be stored in space
  \[
  \log(b) + |\cP|\log n + |\Ch| B \log|C| + \log|\Ch|
  \]
  where $C$ is the set of message contents of $\cA$ and $n$ is the
  maximal number of local states a process of $\cA$ has. But due to
  Theorem~\ref{T-global-formulas} the size of the CFM $\cA_\varphi$ is
  exponential in the size of $\varphi$. By
  Proposition~\ref{P-intersection-index-large}, $\cA$ stays
  exponential in the size of the input. Hence, the model checking
  problem can be decided in polynomial space. 

  The hardness result follows from PSPACE-hardness of LTL model
  checking. 
\end{proof}

\subsection{HMSCs vs.\ PDL specifications}

In~\cite{Pel00}, Peled provides a $\mathrm{PSPACE}$ model checking
algorithm for high-level message sequence charts (HMSCs) against
formulas of the logic $\mathrm{TLC^-}$. The logic $\mathrm{TLC^-}$ is
a fragment of our logic \DLTL as can be shown easily. Now, we aim to
model check an HMSC against a global formula of \DLTL, and, thereby,
to generalize Peled's result. 

\begin{defi}\label{HMSC}
  An \emph{HMSC} $\cH=(S,\rightarrow, s_0,c,\cM)$ is a finite,
  directed graph $(S,\rightarrow)$ with initial node $s_0\in S$, $\cM$
  a finite set of finite MSCs, and a labeling function $c:S\to\cM$. 
\end{defi}

For defining the semantics of HMSCs we need a concatenation operation. 
Let $M_1=(V_1,\le_1,\lambda_1)$ and $M_2=(V_2,\le_2,\lambda_2)$ be two
finite MSCs over the same process set $\cP$ with disjoint node sets. 
Then $M_1M_2=(V,\le,\lambda)$ is given by $V=V_1\cup V_2$,
$\lambda=\lambda_1\cup\lambda_2$, and $\le$ is the least partial order
containing $\le_1\cup\le_2$ and $\{(v_1,v_2)\mid v_1\in V_1,v_2\in
V_2, P(v_1)=P(v_2)\}$. Informally, the events of $M_2$ succeed the
events of $M_1$ for each process, respectively. 

Let $\cH=(S,\rightarrow, s_0,c,\cM)$ be an HMSC. Let $\eta$ be a
maximal path of $(S,\rightarrow)$ starting in $s_0$, i.e., either a
path $\eta=s_0\rightarrow s_1 \rightarrow\dots\rightarrow s_n$ that
ends in an $s_n\in S$ such that there is no $s\in S$ with
$s_n\rightarrow s$ or an infinite path $\eta=s_0\rightarrow s_1
\rightarrow\dots$. The labeling function $c$ can now be extended to
paths by $c(\eta)=c(s_0)c(s_1)\dots$. The MSC language of the HMSC
$\cH$ is now $L(\cH)=\{c(\eta)\mid \text{$\eta$ is a maximal path
  starting in $s_0$}\}$. Note that for any HMSC $\cH$ the language
$L(\cH)$ is existentially $B$-bounded for some $B\in \N$. Indeed,
since any finite MSC $M$ is existentially $B_M$-bounded for some
$B_M\in\N$, there is a $B$-bounded linearization for every $c(\eta)$
when $B=\max\{B_M\mid M\in\cM\}$. 

\begin{thm}\label{HMSC-decision}
  The following problem is PSPACE-complete:\\
  Input: An HMSC $\cH$ and a global formula $\varphi\in\DLTL$.\\
  Question: Is there an MSC $M\in L(\cH)$ with $M\models\varphi$? 
\end{thm}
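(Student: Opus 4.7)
The plan is to reduce to Theorem~\ref{T-decision}. First I would set $B := \max\{B_M \mid M \in \cM\}$, which is linear in the size of $\cH$; as observed just before the statement, $L(\cH)$ is then existentially $B$-bounded, since every concatenation $c(s_0)c(s_1)\cdots$ inherits a $B$-bounded linearization from the $B_M$-bounded linearizations of its individual components $M \in \cM$.

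Next I would construct a CFM $\cB_\cH$ of size polynomial in $|\cH|$ with $L(\cB_\cH) = L(\cH)$. Each process $p$ keeps a local state of the form $(s,i)$ where $s \in S$ is the HMSC node that $p$ is currently traversing and $i$ indexes the next event of the $p$-projection of $c(s)$. Once $p$ has exhausted its $p$-part of $c(s)$, it nondeterministically selects a successor $s \to s'$ in $(S,\to)$ and continues with position $0$ of the $p$-projection of $c(s')$. Since different processes may simultaneously be inside different components $c(s_k)$ of a common path, global consistency of the chosen HMSC path is enforced by tagging every message with the HMSC-node label of its sender, so that the receiving process can recognize in which $c(s)$-segment the corresponding receive event lives. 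The B\"uchi condition is tailored so that each process either terminates at the last event of some $c(s_n)$ whose $s_n$ is a sink of $(S,\to)$, or executes infinitely many events along an infinite path of $\cH$.

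Once $\cB_\cH$ is in hand, I would feed $(B, \cB_\cH, \varphi)$ into Theorem~\ref{T-decision}. Since $B$ and $\cB_\cH$ are polynomial in $|\cH|$, and the local-state description of the CFM $\cA_\varphi$ from Theorem~\ref{T-global-formulas} fits in polynomial space (even though the total number of states is exponential in $|\varphi|$), the resulting procedure runs in polynomial space. PSPACE-hardness is inherited verbatim from the LTL lower bound used at the end of Theorem~\ref{T-decision}.

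The hard part will be the second step: making sure that HMSC-path choices made independently by the individual processes can be rendered globally consistent by message tagging alone. This is problematic whenever some $c(s)$ involves several non-communicating processes, as they cannot synchronize via messages and could in principle drift onto different HMSC paths. A cleaner alternative, which I would fall back on if the CFM construction becomes too delicate, is to bypass $\cB_\cH$ entirely and build directly a product transition system along the lines of the one in the proof of Theorem~\ref{T-decision}, whose configurations consist of, for each process $p$, its current pair $(s_p, i_p)$, the channel contents of size at most $B$, the local states of $\cA_\varphi$, and a flag counter folding the several B\"uchi conditions into a single one; a successful path in this system can then be sought on-the-fly in NPSPACE, hence PSPACE by Savitch's theorem.
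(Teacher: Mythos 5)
Your primary route --- building a CFM $\cB_\cH$ with $L(\cB_\cH)=L(\cH)$ and feeding it to Theorem~\ref{T-decision} --- has a genuine gap, and it is exactly the one you flag at the end: the difficulty with non-communicating processes is not a technicality to be engineered around but a fundamental obstruction. HMSC languages are in general \emph{not} CFM-recognizable: if the MSCs labelling the nodes of $\cH$ involve disjoint process pairs (say a loop alternating a node with one message from $1$ to $2$ and a node with one message from $3$ to $4$), the generated MSCs impose a counting constraint between events of processes that never exchange a message, and no tagging of messages can make the processes agree on a common path through $(S,\rightarrow)$. So the reduction to Theorem~\ref{T-decision} via an exactly equivalent CFM cannot be completed, and your construction would accept a strict superset of $L(\cH)$.

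Your fallback is in the right spirit, but as described it inherits the same problem: keeping an independent pair $(s_p,i_p)$ per process again allows the processes to drift onto different HMSC paths, unless you additionally record the segment of the global path lying between the least- and most-advanced processes, whose length is not obviously bounded. The paper sidesteps all of this by working with linearizations rather than with a distributed device: it fixes one linearization of $c(s)$ for each node $s$ and builds a \emph{sequential} B\"uchi word automaton $\cS_\cH$ over $\Sigma$, of size linear in $|\cH|$, accepting exactly the concatenations of these linearizations along maximal paths from $s_0$ --- hence at least one ($B$-bounded) linearization of every $M\in L(\cH)$ and nothing else. Since acceptance by the CFM $\cA_\varphi$ of Theorem~\ref{T-global-formulas} is a property of the MSC and not of the chosen linearization, it then suffices to run $\cS_\cH$ in product with the bounded-channel transition system for $\cA_\varphi$ constructed before Lemma~\ref{seqsystem} and to search for a successful path of the product in polynomial space. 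If you replace your per-process bookkeeping by this single global word automaton, your fallback becomes essentially the paper's proof; the choice of $B=\max\{B_M\mid M\in\cM\}$ and the PSPACE-hardness part are fine as you state them.
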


\begin{proof}
  \cb{ Let $\cH=(S,\rightarrow,s_0,c,\cM)$ be an HMSC. For every $s\in
    S$ we can find a linearization of the finite MSC $c(s)$. Now, it
    is easy to construct a finite (B\"uchi) automaton $\cS_\cH$ that accepts a
    linearization for each and every MSC $M\in L(\cH)$, and, vice
    versa, each (finite or infinite) word accepted by $\cS_\cH$ is a
    linearization of an $M\in L(\cH)$. Note that the size of $\cS_\cH$
    is linear in the size of $\cH$.

    By Theorem~\ref{T-global-formulas}, we can build a CFM
    $\cA_\varphi$ with $M\in L(\cA_\varphi)$ iff $M\models \varphi$.
    From $\cA_\varphi$ and $\cS_\cH$ (which is implicitly
    existentially $B$-bounded for some $B\in\N$) we construct stepwise
    a transition system $\cS$ by running $\cA_\varphi$ and $\cS_\cH$
    simultaneously (cf.\ the construction before
    Lemma~\ref{seqsystem}). The construction terminates because a run
    of $\cS_\cH$ allows for $B$-bounded linearizations only. A run in
    $\cS$ is successful if both projections of the run are successful.
    Now, $\cS$ admits a successful run iff there is an existentially
    $B$-bounded linearization $w_M$ of some $M\in L(\cH)\cap
    L(\cA_\varphi)$ (where $B$ is implicitly given by $\cH$). An
    analysis similar to the one in the proof of
    Theorem~\ref{T-decision} shows that the existence of a successful
    path of $\cS$ can be decided in polynomial space.

    Again, the hardness result is an easy consequence of PSPACE-hardness of
    LTL model checking. }
\end{proof}

\section{\DLTL with intersection}\label{sec:ipdl}

Several extensions of \DLTL have been considered in the literature that still
come with positive decidability results \cite{Harel2000,GoeLoLu-07}. Though
these results were obtained in the different context of evaluating a formula
over a Kripke structure, it is natural to ask if such extensions can be
handled in our setting as well. We will study here \DLTL with intersection
(\iDLTL, for short), which is the canonical adaption of the logic IPDL, as
defined in \cite{Harel2000}, to our setting. In addition to the local formulas
of \DLTL, we allow local formulas $\left<\pi_1\cap\pi_2\right>\alpha$ where
$\pi_1$ and $\pi_2$ are path expressions and $\alpha$ is a local formula. The
intended meaning is that there exist two paths described by $\pi_1$ and
$\pi_2$ respectively that both lead to the same node $w$ where $\alpha$ holds. 

It is the aim of this section to prove that CFMs are too weak to check all
properties expressed in \iDLTL. To show this result more easily, we also allow
atomic propositions of the form $(a,b)$ with $a,b\in\{0,1\}$; they are
evaluated over an MSC $M=(V,\le,\lambda)$ together with a mapping
$c:V\to\{0,1\}^2$. Then $(M,c),v\models(a,b)$ iff $c(v)=(a,b)$. Let
$\cP=\{0,1\}$ be the set of processes. For $m\ge1$, we first fix an MSC
$M_m=(V_m,\le,\lambda)$ for the remaining arguments: On process $0$, it
executes the sequence $(0!1)^m((0?1)(0!1))^\omega$. The sequence of events on
process $1$ is $(1?0)\,((1?0)\,(1!0))^\omega$. In other words, process $0$
sends $m$ messages to process $1$ and then acknowledges any message received
from $1$ immediately. Differently, process~$1$ has a buffer for two messages. 
After receiving message number $k+1$, it acknowledges message number~$k$. 

Let $E_{0!1}$ denote the set of send-events of process~$0$. For the $k^{th}$
send-event $v$ on process $0$, let $f(v)=((m-k)\bmod
m,(k-1)\mathbin{\mathrm{div}} m)$. Then $f$ maps the set $E_{0!1}$ bijectively
onto the grid $G_m=\{0,1,\dots,m-1\}\times\omega$; we denote the inverse of
$f$ by~$g$. Figure~\ref{fig:foldagrid} shows MSC $M_4$ together with
the mapping~$f$. 

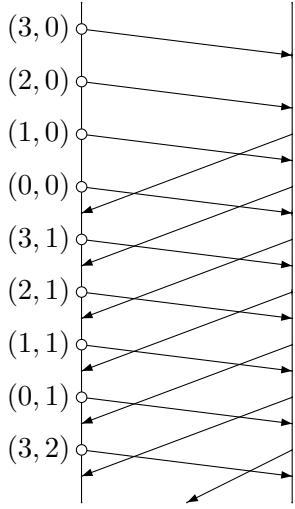
\begin{figure}
  \centering \unitlength .7mm
  \begin{picture}(70,100)(0,-5)

    \gasset{NLangle=180,ExtNL=y,Nw=2,Nh=2,Nfill=n,NLdist=1}
    \node(30)(15,90){$(3,0)$} \node(20)(15,80){$(2,0)$} \node(10)(15,70){$(1,0)$} \node(00)(15,60){$(0,0)$}
    \node(31)(15,50){$(3,1)$} \node(21)(15,40){$(2,1)$} \node(11)(15,30){$(1,1)$} \node(01)(15,20){$(0,1)$}
    \node(32)(15,10){$(3,2)$}

    \gasset{Nw=.1,Nh=.1,AHnb=0}
    \node(0s)(15,95){}\node(0e)(15,0){}%\drawedge(0s,0e){}
    \drawedge(0s,30){} \drawedge(30,20){} \drawedge(20,10){}
    \drawedge(10,00){} \drawedge(00,31){} \drawedge(31,21){} 
    \drawedge(21,11){} \drawedge(11,01){} \drawedge(01,32){}
    \drawedge(32,0e){}
    \node(1s)(55,95){}\node(1e)(55,0){}\drawedge(1s,1e){}
    \gasset{AHnb=1}
    \node(30r)(55,85){} \node(20r)(55,75){} \node(10r)(55,65){} \node(00r)(55,55){}
    \node(31r)(55,45){} \node(21r)(55,35){} \node(11r)(55,25){} \node(01r)(55,15){}
    \node(32r)(55,5){}

    \drawedge(30,30r){}\drawedge(20,20r){}\drawedge(10,10r){}\drawedge(00,00r){}\drawedge(31,31r){}\drawedge(21,21r){}\drawedge(11,11r){}\drawedge(01,01r){}\drawedge(32,32r){}

    \node(1)(55,70){} \node(2)(55,60){} \node(3)(55,50){} \node(4)(55,40){} \node(5)(55,30){} \node(6)(55,20){} \node(7)(55,10){}
    \node(1r)(15,55){} \node(2r)(15,45){} \node(3r)(15,35){} \node(4r)(15,25){} \node(5r)(15,15){} \node(6r)(15,5){} \node(7r)(35,0){}

    \drawedge(1,1r){} \drawedge(2,2r){} \drawedge(3,3r){} \drawedge(4,4r){} \drawedge(5,5r){} \drawedge(6,6r){} \drawedge(7,7r){}
  \end{picture}
  \caption{MSC $M_4$ and the mapping $f$.} 
  \label{fig:foldagrid}
\end{figure}

\begin{lem}\label{L-iDLTL1}
  There exists a local formula $\alpha$ of \iDLTL such that, for any $m\ge1$
  and any $c:V_m\to\{0,1\}^2$ satisfying $c(g(i,j))=(0,0)$ iff $i=0$, we have
  $(M_m,c)\models\rA\alpha$ iff $c(g(i,j))=c(g(i,j+i))$ for all $(i,j)\in
  G_m$. 
\end{lem}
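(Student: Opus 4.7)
The plan is to construct a local \iDLTL formula $\alpha$ whose universal closure $\rA\alpha$ captures the grid equality, by using the intersection construct to realise an (essentially) functional relation mapping a grid node $g(i,j)$ to the target $g(i,j+i)$, and then comparing $c$-labels.

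\emph{Step 1: elementary grid moves.} I first encode two basic grid translations as path expressions. The ``next send on process~$0$'' move
\[
\pi_L \;:=\; \proc;\{0!1\}\,+\,\proc;\{0?1\};\proc;\{0!1\}
\]
deterministically sends any $s_k$ to $s_{k+1}$, so on grid coordinates $\pi_L: g(i,j)\mapsto g(i-1,j)$ for $i\ge 1$ and $\pi_L: g(0,j)\mapsto g(m-1,j+1)$ (adding $1$ to the linear index $k$). The ``ping-pong'' move
\[
\pi_D \;:=\; \msg;\proc;\msg;\proc
\]
takes $s_k$ (for $k\ge 2$) through process~$1$ to $s_{m+k-1}$, acting on the grid as $g(i,j)\mapsto g(i+1,j+1)$ for $i<m-1$ and $g(m-1,j)\mapsto g(0,j)$ (adding $m-1$ to $k$). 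Hence $\pi_L;\pi_D$ always adds $m$ to $k$ and implements the ``same-column, next-row'' step $g(i,j)\mapsto g(i,j+1)$.

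\emph{Step 2: column-$0$ marker tests.} Using the hypothesis that $(0,0)$-labels characterise column~$0$, set
\[
\pi_L' := \pi_L;\{\neg(0,0)\},\qquad \pi_L^{(0)} := \pi_L;\{(0,0)\},\qquad \pi_D' := \pi_D;\{\neg(0,0)\}.
\]
A case analysis on the column index shows that from $g(i,j)$ with $i\ge 1$ the \emph{only} successful match of $(\pi_L')^*;\pi_L^{(0)}$ uses exactly $i$ applications of $\pi_L$ and ends at $g(0,j)$ (the interior tests rule out earlier column-$0$ visits, and the terminal test forces landing on column $0$). Likewise, from $g(0,j)$ an iterate $(\pi_D')^b$ is valid precisely for $0\le b\le m-1$ and reaches $g(b,j+b)$, since $\pi_D$ cycles the first coordinate $0\to 1\to\dots\to m-1\to 0$ and the interior test kills any full cycle.

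\emph{Step 3: intersecting to isolate the target.} Set
\[
\pi_1 := \{0!1\};\bigl(((\pi_L')^*;\pi_L^{(0)})+\{(0,0)\}\bigr);(\pi_D')^*,\qquad
\pi_2 := \{0!1\};(\pi_L;\pi_D)^*.
\]
From $v=g(i,j)$ with $i\ge 1$, Steps~1--2 yield $\pi_1(v)=\{g(b,j+b):0\le b\le m-1\}$ and $\pi_2(v)=\{g(i,j+a):a\ge 0\}$, whose intersection is the singleton $\{g(i,j+i)\}$. For $i=0$ the intersection lies entirely in column~$0$, which is monochromatically labelled $(0,0)$ by hypothesis, so the colour comparison below will succeed trivially (matching the trivial identity $c(g(0,j))=c(g(0,j))$).

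\emph{Step 4: the formula.} Finally, put
\[
\alpha \;:=\; \neg(0!1)\,\lor\,\bigvee_{a\in\{0,1\}^2}\bigl(a\,\land\,\langle\pi_1\cap\pi_2\rangle a\bigr).
\]
At any non-grid node $\alpha$ holds because $\neg(0!1)$ does; at a grid node $v=g(i,j)$, $\alpha$ holds iff some $(\pi_1\cap\pi_2)$-reachable node shares the $c$-label of $v$, which by Step~3 is equivalent to $c(g(i,j))=c(g(i,j+i))$. Consequently $\rA\alpha$ holds iff this equality holds for every $(i,j)\in G_m$, as required. The main obstacle is Step~3: one must track how $\pi_L$ and $\pi_D$ translate the linear index $k$, exploit the tests $\{(0,0)\}$ and $\{\neg(0,0)\}$ to force exactly $i$ applications of $\pi_L$ and at most $m-1$ applications of $\pi_D$ in $\pi_1$, and combine this with the ``same-column'' constraint coming from the alternating shape of $\pi_2$ to rule out every spurious common endpoint.
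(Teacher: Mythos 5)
Your proof is correct and follows essentially the same route as the paper's: you fold the grid into $M_m$ via the same two elementary moves (the paper's $\pi_1$ and $\pi_2$, your $\pi_L$ and $\pi_D$), use the $(0,0)$-marker to walk to column $0$ and then along the diagonal, and intersect this with the ``same column'' iteration $(\pi_L;\pi_D)^*$ to pin down the unique target $g(i,j+i)$. The only differences are cosmetic --- your $\pi_L$ is an explicit two-case union rather than the paper's $(\proc;\{0?1\})^*;\proc;\{0!1\}$, your tests sit after each step instead of before (compensated by the extra $\{(0,0)\}$ branch), and you compare colours with $\bigvee_a(a\land\langle\cdot\rangle a)$ instead of $\bigwedge_x(x\leftrightarrow\langle\cdot\rangle x)$, all of which are equivalent here.
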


\begin{proof}
  Let $(i,j)\in G_m$. Then observe the following:
  \begin{enumerate}[$\bullet$]
  \item With $\pi_1$ denoting the path description
    $(\proc;\{(0?1)\})^*;\proc;\{(0!1)\}$, we have that
    $M_m,g(i,j)\models\left<\pi_1\right>\beta$ iff $i>0$ and
    $M_m,g(i-1,j)\models\beta$, or $i=0$ and\linebreak
    ${M_m,g(m-1,j+1)}\models\beta$. 
  \item With $\pi_2$ denoting the path description
    $\msg;\proc;\msg;\proc$, we have
    $M_m,g(i,j)\models\left<\pi_2\right>\beta$ iff
    $M_m,g(i+1,j+1)\models\beta$ whenever $i<m-1$. 
  \end{enumerate}
  As a consequence, we obtain
  \begin{enumerate}[$\bullet$]
  \item if $i>0$, then
    $M_m,g(i,j)\models\left<\pi_1;\pi_2\right>\beta$ iff
    $M_m,g(i,j+1)\models\beta$. 
  \end{enumerate}

  Now let $c:V_m\to\{0,1\}^2$ be a function with $c(g(i,j))=(0,0)$ iff
  $i=0$. Then we have
  \begin{enumerate}[(1)]
  \item
    $(M_m,c),g(i,j)\models\left<\{\neg(0,0)\};(\pi_1;\pi_2)^*\right>\beta$
    iff $i>0$ and there exists $k\ge0$ with
    $(M_m,c),g(i,j+k)\models\beta$,
  \item $(M_m,c),g(i,j)\models
    \left<(\{\neg(0,0)\};\pi_1)^*;\{(0,0)\}\right>\beta$ iff
    $(M_m,c),g(0,j)\models\beta$,
  \item
    $(M_m,c),g(0,j)\models\left<(\pi_2;\{\neg(0,0)\})^*\right>\beta$
    iff there is $0\le k\le m-1$ with
    $(M_m,c),g(k,j+k)\models\beta$. 
  \end{enumerate}
  Now let
  $\pi_3=(\{\neg(0,0)\};\pi_1)^*;\{(0,0)\};(\pi_2;\{\neg(0,0)\})^*$
  and $\pi_4=\{\neg(0,0)\};(\pi_1;\pi_2)^*$. Then, we have
  $(M_m,c),g(i,j)\models\left<\pi_3\cap\pi_4\right>\beta$ iff $i>0$
  and $(M_m,c),g(i,j+i)\models\beta$. Now let
  \[
  \alpha= ((0!1)\land\neg(0,0))\rightarrow \bigwedge_{x\in\{0,1\}^2}
  x\leftrightarrow\left<\pi_3\cap\pi_4\right>x\ . 
  \]
  Then, for all $(i,j)\in G_m$, we have $(M_m,c),g(i,j)\models\alpha$
  iff $c(g(i,j))=c(g(i,j+i))$. 
\end{proof}

\begin{lem}\label{L-iDLTL2}
  Let $\cA=(C,2,(\cA_p)_{p\in\cP},F)$ be a CFM that accepts all
  labeled MSCs $(M_m,c)$ with $m\ge1$ such that
  \begin{enumerate}[\em(1)]
  \item $c(g(i,j))=(0,0)$ iff $i=0$,
  \item $c(g(i,j))=c(g(i,j+i))$ for all $(i,j)\in G_m$. 
  \end{enumerate}
  Then there exist $m\ge1$ and a labeled MSC $(M_m,c)$ accepted by
  $\cA$, satisfying (1), and violating (2). 
\end{lem}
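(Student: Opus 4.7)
The plan is a pigeonhole/pumping argument exploiting the fact that the number of valid colorings of $M_m$ satisfying (1) and (2) is super-exponential in $m$ (indeed $\prod_{i=1}^{m-1}3^i$, since column~$i$ can be any $i$-periodic sequence over $\{(0,1),(1,0),(1,1)\}$), while the memory of $\cA$ is a fixed constant. Hence for $m$ much larger than the size of $\cA$, two distinct valid colorings must admit accepting runs that agree at some suitable cross-section of $M_m$, and splicing those runs will produce an accepting run on a colored MSC that still satisfies~(1) but violates~(2).

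Concretely, I would fix $m$ very large and restrict attention to the subfamily of ``column-constant'' valid colorings $c_\eta$ indexed by functions $\eta\colon\{1,\dots,m-1\}\to\{(0,1),(1,0),(1,1)\}$, defined by $c_\eta(g(i,j))=\eta(i)$ for $i>0$ and $c_\eta(g(0,j))=(0,0)$. Each such $c_\eta$ is trivially $i$-periodic on column~$i$, satisfies (1) and (2), and by hypothesis has some accepting run $(\rho_\eta,\mu_\eta)$. A natural cross-section is the moment just after the initial block of $m$ sends $v_1,\dots,v_m$ of process~$0$ together with the matching $m$ receives of process~$1$: at this point channel $(0,1)$ is empty, while channel $(1,0)$ contains the acks already queued by process~$1$. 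To each $\eta$ I would associate a ``signature'' recording the local state of each process and the contents of channel $(1,0)$ at this cross-section. A direct counting gives $3^{m-1}$ colorings versus $|S_p|^2\cdot|C|^{m-2}$ configurations, which does not automatically close the gap when $|C|\geq 3$; the argument is therefore refined by a second pumping step inside the ack queue, identifying two positions where the local process-$0$ state and the incoming ack content coincide and ``contracting'' the queue there, reducing the signature to a bounded size independent of~$m$. After this refinement the pigeonhole bites and yields distinct $\eta\neq\eta'$ whose reduced signatures agree.

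With such $\eta\neq\eta'$ in hand, I would splice: define a new coloring $c$ that coincides with $c_\eta$ on all cells of row~$0$ of the grid --- these are precisely $v_1,\dots,v_m$ of process~$0$ and their matching receives, all lying before or at the cross-section --- and with $c_{\eta'}$ on all cells of rows $j\geq 1$; correspondingly splice $(\rho_\eta,\mu_\eta)$ with $(\rho_{\eta'},\mu_{\eta'})$ at the cross-section. Because the signatures agree, this is a legitimate accepting run of $\cA$ on $(M_m,c)$. Picking any $i_0$ with $\eta(i_0)\neq\eta'(i_0)$ gives $c(g(i_0,0))=\eta(i_0)\neq\eta'(i_0)=c(g(i_0,i_0))$, violating~(2), while $c$ still satisfies~(1) because neither $\eta$ nor $\eta'$ takes the value $(0,0)$.

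The main obstacle is the queue-refinement step: without it, the naive pigeonhole space grows in $m$ at essentially the same rate as the number of colorings, so no contradiction arises. Carrying out the pumping \emph{inside} the ack queue while maintaining FIFO consistency --- the spliced process~$0$ must receive the contracted ack sequence in the right order, and the contracted run must still be accepting in the cofinal sense --- is the delicate technical core, but is a standard kind of Ramsey-type pumping for communicating systems and can be carried out uniformly in $\cA$.
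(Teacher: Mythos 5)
Your overall strategy --- pigeonhole on valid colorings versus run ``signatures'' at a cut of $M_m$, then splice two runs that agree on the signature --- is exactly the paper's strategy, and you even state the correct count of valid colorings ($\prod_{i=1}^{m-1}3^i=3^{\Theta(m^2)}$, since column $i$ is free on $i$ cells). But you then throw this advantage away by restricting to the column-constant subfamily, which has only $3^{m-1}$ members, and as you yourself observe this no longer beats the $|S_0|\cdot|S_1|\cdot|C|^{m-1}$ possible signatures: any cut separating row $0$ from the later rows is crossed by $\Theta(m)$ pending messages of process $1$, each carrying an arbitrary control message from $C$. The repair you propose --- ``contracting'' the ack queue by a Ramsey/pumping step --- does not work here: the MSC $M_m$ is fixed, so you cannot shorten the queue without changing which send matches which receive, and the splice of $(\rho_\eta,\mu_\eta)$ with $(\rho_{\eta'},\mu_{\eta'})$ is a legal run only if $\mu_\eta(w)=\mu_{\eta'}(w)$ holds \emph{exactly} for every one of the $m-1$ message edges $w$ crossing the cut (the run condition $\mu(v)=\mu(\msg(v))$ leaves no slack). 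There is no bounded ``reduced signature'' that certifies splicability.

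The fix, which is what the paper does, is simply not to restrict the family: take $H$ to be \emph{all} colorings satisfying (1), (2) and equal to $(1,1)$ off $E_{0!1}$. Condition (2) determines such a coloring from its values on the triangle $\{(i,j)\mid 0\le j<i<m\}$, so $|H|=3^{\Theta(m^2)}$, while the cut $W=\{w\mid w\le g(m-1,m-2)\}$ is crossed by exactly $m-1$ messages, so the signature $\bigl(\rho_c(v),\rho_c(u),\mu_c(w_1),\dots,\mu_c(w_{m-1})\bigr)$ ranges over only $|S_0|\cdot|S_1|\cdot|C|^{m-1}$ values. Choosing $m$ with $|S_0|\cdot|S_1|\cdot|C|^{m-1}<3^{(m-1)(m-2)/2}$ makes the pigeonhole bite, and the splice $c=c_1$ on $W$, $c=c_2$ off $W$ then satisfies (1) and violates (2) as in your final paragraph. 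So the gap is concrete: your chosen subfamily is exponentially too small for the cut you use, and the auxiliary pumping step you invoke to compensate cannot be carried out on a fixed MSC.
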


\begin{proof}
  Let $\cA_p=(S_p,\rightarrow_p,\iota_p)$ for $p=0,1$, let $m\ge1$ be
  such that $|S_0|\cdot |S_1|\cdot|C|^{m-1}<3^{\frac{(m-1)(m-2)}{2}}$,
  and let $M_m=(V_m,\le,\lambda)$. Let furthermore $H$ denote the set
  of mappings $c:V_m\to\{0,1\}^2$ satisfying (1), (2), and
  $c(v)=(1,1)$ for all $v\notin E_{0!1}$ (i.e.,
  $\lambda(v)\neq(0!1)$). Then, for any $c\in H$, the pair $(M_m,c)$
  is accepted by $\cA$ -- let $(\rho_c,\mu_c)$ be an accepting run of
  $\cA$ on $(M_m,c)$. 

  Let $v=g(m-1,m-2)$ and $W=\{w\in V\mid w\le v\}$. Then, for any
  event $w\in W$ with $\lambda(w)=1!0$, we have $\msg(w)\in W$. On the
  other hand, there are precisely $m-1$ events $w_1,\dots,w_{m-1}\in
  W$ with $\lambda(w_i)=0!1$ and $\msg(w_i)\notin W$. Let furthermore
  $u\in W$ be the maximal event from process~$1$. 

  Consider $(M_m,c_1)$ and $(M_m,c_2)$ with $c_1,c_2\in H$ and
  $c_1(g(i,j))=c_2(g(i,j))$ for all $0\le j<i<m$. Then $c_1=c_2$
  by~(2). Hence $|H|$ is the number of mappings from $\{(i,j)\mid 0\le
  j<i<m\}$ to $\{0,1\}^2\setminus\{(0,0)\}$, i.e.,
  $3^{\frac{(m-1)(m-2)}{2}}$. 

  Since this number exceeds $|S_0|\cdot|S_1|\cdot|C|^{m-1}$, there
  exist $c_1$ and $c_2$ \cb{with $c_1\neq c_2$} in $H$ with $\rho_{c_1}(v)=\rho_{c_2}(v)$,
  $\rho_{c_1}(u)=\rho_{c_2}(u)$, and $\mu_{c_1}(w_i)=\mu_{c_2}(w_i)$
  for all $1\le i\le m-1$. 

  Now define a mapping $c:V\to\{0,1\}^2$ by $c(x)=c_1(x)$ for $x\in W$
  and $c(x)=c_2(x)$ for $x\notin W$. Then, $c$ satisfies (1) and
  violates (2). But $(M_m,c)$ is accepted by $\cA$: An accepting run
  $(\rho,\mu)$ is defined (similarly to $c$) by
  \[
  \rho(x)=
  \begin{cases}
    \rho_{c_1}(x) & \text{ for }x\in W\\
    \rho_{c_2}(x) & \text{ otherwise}
  \end{cases}
  \text{ and } \mu(x)=
  \begin{cases}
    \mu_{c_1}(x) & \text{ for }x\in W\\
    \mu_{c_2}(x) & \text{ otherwise.} 
  \end{cases}
  \]
\end{proof}

\begin{thm}
  There exists a local formula $\alpha$ of \iDLTL such that the set of
  MSCs $M$ satisfying $\rA\alpha$ cannot be accepted by a CFM. 
\end{thm}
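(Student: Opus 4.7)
The desired formula $\alpha$ will essentially be the one produced by Lemma~\ref{L-iDLTL1}; the only remaining work is to remove the auxiliary atomic propositions $(a,b) \in \{0,1\}^2$, which were introduced in Section~\ref{sec:ipdl} only for convenience. The plan is to encode the coloring $c : V_m \to \{0,1\}^2$ structurally into the MSC: enlarge the process set to $\cP \cup \{A, B\}$ and, immediately before each event $v = 0!1$ on process $0$, insert a short exchange with $A$ when the first bit of $c(v)$ is~$1$ (none otherwise), and similarly an exchange with $B$ for the second bit. This yields a pure MSC $\widetilde M(M_m, c)$ over $\cP \cup \{A, B\}$ from which $(M_m, c)$ is uniquely recoverable.

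Under this encoding, the atomic proposition $(a,b)$ becomes a short \iDLTL test that walks backwards along $\proc^{-1}$ from a send event and checks for the presence or absence of the markers $0!A$ and $0!B$ immediately before it. Every occurrence of $\proc$ in the formula from Lemma~\ref{L-iDLTL1} is relativised to steps that skip over these color markers, using the guard $\{\neg(0!A) \wedge \neg(0!B)\}$; the messages $\msg$ between processes $0$ and $1$ are untouched. Adding a well-formedness conjunct that forces the MSC to have the shape $\widetilde M(M_m, c)$ for some $m \ge 1$ and some $c$ satisfying condition~(1) of Lemma~\ref{L-iDLTL1}, we obtain a pure local \iDLTL formula $\alpha$ with the property that $\widetilde M(M_m, c) \models \rA\alpha$ iff $c(g(i,j)) = c(g(i,j+i))$ for all $(i,j) \in G_m$.

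Suppose toward contradiction that some CFM $\cB$ accepts exactly the MSCs $M$ with $M \models \rA\alpha$. Then $\cB$ accepts exactly the encoded MSCs $\widetilde M(M_m, c)$ such that conditions~(1) and~(2) of Lemma~\ref{L-iDLTL2} hold. From $\cB$ one constructs a CFM $\cB'$ over the original process set $\{0, 1\}$ with extra label alphabet $\{0,1\}^2$ that accepts a labeled MSC $(M_m, c)$ iff $\cB$ accepts $\widetilde M(M_m, c)$: process~$0$ simulates the auxiliary processes $A$ and $B$ inline, using the extra label bits to read off $c$. Such a $\cB'$ would contradict Lemma~\ref{L-iDLTL2}.

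The main technical obstacle is checking that the insertion of color markers does not disturb the intersection $\pi_3 \cap \pi_4$ that drives the proof of Lemma~\ref{L-iDLTL1}: one must verify that both synchronised paths through the folded grid can be mimicked on the encoded MSC after the relativisation by $\{\neg(0!A) \wedge \neg(0!B)\}$. Since the markers live only on process~$0$, strictly between consecutive $0!1$ events, and since the $\msg$-edges that encode the grid adjacencies are preserved verbatim, this amounts to routine rewriting of the path expressions.
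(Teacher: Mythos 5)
Your proposal is correct and follows essentially the same route as the paper: it combines Lemma~\ref{L-iDLTL1} and Lemma~\ref{L-iDLTL2} to derive the contradiction, and it eliminates the auxiliary propositions $(a,b)$ by encoding the coloring into the MSC via extra processes --- exactly the step the paper dispatches in its final sentence (using one new process rather than your two). The only difference is one of emphasis: you spell out the de-labeling construction and the relativization of the path expressions in detail while compressing the core argument into a reference to Lemma~\ref{L-iDLTL2}, whereas the paper does the opposite.
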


\begin{proof}
  Let $\alpha$ be the local formula from Lemma~\ref{L-iDLTL1}. Towards a
  contradiction, assume $\cA$ is a CFM such that, for any pair $(M,c)$, we
  have $(M,c)\models\rA\alpha$ iff $(M,c)$ is accepted by $\cA$. In
  particular, $\cA$ accepts all pairs $(M_m,c)$ satisfying (1) and (2) from
  Lemma~\ref{L-iDLTL2}. Hence there exists some pair $(M_m,c)$ that is
  accepted by $\cA$, satisfies (1), and violates (2). But now, by
  Lemma~\ref{L-iDLTL1} again, $(M_m,c)\models\neg\rA\alpha$, contradicting our
  assumption on $\cA$. 

  Using a new process $2$, one can encode the mapping $c$ by
  additional messages from processes $0$ and $1$ to process $2$. 
\end{proof}

\section{Open questions}
The semantics of every \DLTL formula $\varphi$ is the behavior of a
CFM~$\cA$.  Hence any \DLTL formula is equivalent to some formula from
existential monadic second order, but a precise description of the
expressive power of \DLTL is not known. Because of quantification over
paths, it cannot be captured by first-order logic
\cite[Prop.~14]{DieG04}. On the other hand, \DLTL is closed under
negation, hence \DLTL is a proper fragment of existential monadic
second order logic. \cb{But it is not even clear that semantical membership of this fragment is decidable.} 

The decidability of the model checking problem for CFMs against
MSO-formulas was shown in \cite{GenKM06} for existentially $B$-bounded
MSCs. For compositional MSCs (a mechanism for the description of sets
of MSCs that is similar but more general than HMSCs) and MSO, the
decidability of the model checking problem was established
in~\cite{MadM01}.  Since the logic \iDLTL, i.e., \DLTL with
intersection, can be translated effectively into an MSO-formula, the
model checking problem is decidable for \iDLTL. However, the
complexity of $\mathrm{MSO}$ model checking is
non-elementary. Therefore, we would like to know if we can do any
better for \iDLTL. 

In \DLTL, we can express properties of the past and of the future of an event
by taking either a backward- or a forward-path in the graph of the MSC. We are
not allowed to speak about a zig-zag-path where e.g.\ a mixed use of $\proc$
and $\proc^{-1}$ would be possible. It is an open question whether formulas of
such a ``mixed \DLTL'' could be transformed to CFMs. 

\bibliographystyle{alpha} \bibliography{references}

\end{document}